\documentclass[a4p,11pt]{article}

\usepackage{amsfonts,amsmath,amssymb,amsthm,ascmac,mathrsfs,mathtools,caption,comment,enumerate,bm,multicol,multirow,setspace,lmodern,microtype,physics,paralist,graphicx,hyperref,blindtext,booktabs}

\usepackage[dvipsnames]{xcolor}
\usepackage{tikz}
\usetikzlibrary{calc, intersections}

\onehalfspacing
\usepackage[right=1in, left=1in, bottom=1in, top=1in]{geometry}
\hypersetup{colorlinks=true, linkcolor=black, citecolor=black}

\usepackage{natbib}
\bibliographystyle{abbrvnat}
\setcitestyle{authoryear,open={(},close={)}}

\newtheorem{theorem}{Theorem}
\newtheorem{proposition}{Proposition}
\newtheorem{corollary}{Corollary}
\newtheorem{lemma}{Lemma}

\theoremstyle{definition}

\newtheorem*{assumption*}{Assumption}

\newtheorem{example}{Example}

\theoremstyle{remark}
\newtheorem{remark}{Remark}


\newcommand{\N}{\mathbb{N}}
\newcommand{\R}{\mathbb{R}}

\newcommand{\1}{\mathbf{1}}
\newcommand{\6}{\partial}

\newcommand{\calA}{\mathcal{A}}

\newcommand{\calD}{\mathcal{D}}

\newcommand{\calL}{\mathcal{L}}

\newcommand{\argmax}{\mathop{\rm arg~max}\limits}

\DeclareMathOperator{\Var}{\mathbb{V}ar}

\DeclareMathOperator{\Diag}{Diag}

\DeclareMathOperator{\Span}{span}






\begin{document}

\title{\bf Characteristics Design: \\ 
\Large A Hedonic Approach to Optimal Product Differentiation}
\author{Masaki Miyashita\thanks{The University of Hong Kong, {\tt masaki11@hku.hk}}}
\date{\today}
\maketitle

\begin{abstract}
Building on the generalized hedonic-linear model of \cite{pellegrino2025}, this paper studies optimal product differentiation when a representative consumer has preferences over product characteristics.
Under multiproduct monopoly, the monopolist's choice of product characteristics is always aligned with the social planner's optimum, despite underproduction.
By contrast, under oligopoly, multiple equilibria can arise that differ qualitatively in their patterns of characteristics design.
We show that, while oligopoly equilibria exhibiting product differentiation yield higher welfare than those with product concentration, the degree of product differentiation under oligopoly remains below the socially optimal level.
As a result, social welfare under oligopoly is typically lower than under monopoly, highlighting a key advantage of coordination in product differentiation.
We extend the analysis to settings with overlapping ownership structures and show that common ownership can improve welfare by inducing firms to soften competition through increased product differentiation rather than output reduction.

\medskip
\noindent\textit{JEL codes}: D21, D43, D85, L13, L14 \\
\noindent\textit{Keywords}: product differentiation, hedonic demand, multiproduct monopoly, oligopoly, common ownership, network game, welfare analysis
\end{abstract}


\hypersetup{colorlinks=true, linkcolor=Red, citecolor=Blue}
\section{Introduction}

Product differentiation lies at the core of industrial organization.
Firms operating in competitive markets can alter their profitability through the design of product characteristics \citep{hotelling1929}, which in turn shapes consumer welfare \citep{lancaster1975, spence1976}.
This naturally raises several fundamental questions.
What patterns of product differentiation arise from firms' strategic decisions?
What are the welfare consequences of the resulting market outcomes?
How does market structure affect these implications?

This paper addresses these questions by extending the generalized hedonic-linear (GHL) model of \cite{pellegrino2025}.
Following the tradition of hedonic demand literature \citep{lancaster1966}, each firm's product is perceived as a bundle of characteristics in the GHL model.
These characteristics are categorized into \emph{common characteristics}, which can be shared across all products, and \emph{idiosyncratic characteristics}, which are firm-specific and cannot be copied by rivals.
A representative consumer has quadratic preferences over linear aggregates of these characteristics.
The key departure of this paper is that, while common characteristics are exogenously given in \cite{pellegrino2025}, we allow firms to choose them endogenously as part of their strategic decisions.

Two market structures are considered in our baseline analysis.
The first is a (multiproduct) monopoly, where a single firm designs the characteristics and quantities of all $n$ products.
Not only serving as a benchmark for comparison, this case is of independent interest, as it can be interpreted as a model of collusion \citep{deneckere1983, majerus1988}, a merger counterfactual \citep{farrell1990shapiro}, or an extreme form of common ownership \citep{rotemberg1984}.
The second is an oligopoly, where $n$ firms each design and produce one product and compete in a Cournot market.
We offer a complete characterization of equilibria under each market structure and conduct a systematic comparison that yields sharp welfare implications, which highlight the advantage of coordination in characteristics design.

Our first set of results concerns the social planner's and monopolist's benchmarks.
While monopoly entails underproduction, we show that the monopolist's choice of product characteristics is always aligned with the social planner's optimum.
As a result, both common and idiosyncratic characteristics provided under monopoly are proportional to those at the social optimum, albeit scaled down by one half.
The welfare implication mirrors that in multiproduct monopoly models without characteristics design \citep{amiretal2016}, in the sense that the deadweight loss induced by monopoly is a fixed proportion of the social optimum in our setting as well.

Turning to oligopoly, Theorem~\ref{thm_eq} reveals that multiple equilibria can emerge that differ qualitatively in how firms position their products in the characteristics space.
We identify three equilibrium patterns, which may coexist under the same parameter specification.
A key determinant of equilibrium structure is what we call firms' \emph{standalone values}, defined as the value of each firm's idiosyncratic characteristic for the consumer, net of the firm's marginal cost of production.
These can be interpreted as each firm's brand value that cannot be copied by rivals and is determined independently of common-characteristics design.

First, \emph{product differentiation} arises when firms' standalone values are collectively large and there is no single firm whose standalone value far exceeds those of others.
This case is arguably the most empirically relevant, given the pervasive coexistence of differentiated products in real-world markets.
The equilibrium is characterized explicitly in terms of primitives, allowing us to quantify the equilibrium degree of product differentiation.
We show that firms collectively supply the socially optimal aggregate level of common characteristics, yet individual firms' choices are closer to one another than under the social planner's optimum.
In this sense, our results echo Hotelling's classical insight that firms tend to locate too closely in equilibrium, in contrast to the principle of maximal differentiation as in \cite{daspremont1979}.
This distortion in characteristics design generates welfare losses under oligopoly, which are absent under monopoly, where characteristics design is fully internalized.
Specifically, Theorem~\ref{thm_welfare_D} shows that when product differentiation is socially optimal, oligopoly typically yields lower welfare than monopoly.

Second, when standalone values are collectively weak, \emph{product concentration} can arise, in which case all firms choose identical common characteristics that are perfectly aligned with the consumer's ideal.
In this equilibrium, firms' products are highly similar, though not perfectly homogeneous due to the presence of idiosyncratic characteristics.
Building on an intuition analogous to a standard Cournot model with homogeneous goods, the welfare ranking relative to monopoly can reverse in this scenario.
Specifically, Theorem~\ref{thm_welfare_C} shows if the distribution of standalone values is sufficiently balanced, then oligopoly yields higher welfare than monopoly.

Third, when there is sufficient asymmetry in firms' standalone values, \emph{product polarization} can arise, in which case firms split into two groups: one group aligns its common characteristics perfectly with the consumer's ideal, while the other group chooses the exact opposite direction in order to hedge against competition.
Multiple product-polarization equilibria can exist, differing in which firms choose to align positively or negatively with the consumer's ideal.

Across those multiple equilibria, Theorem~\ref{thm_welfare_DCP} shows that equilibria exhibiting product differentiation yield higher welfare than those exhibiting product concentration or ``inefficient'' forms of product polarization, where firms with relatively low standalone values nevertheless dominate the provision of common characteristics.
Therefore, even though oligopoly typically performs worse than monopoly in our model, product differentiation emerges as the most desirable equilibrium configuration within oligopoly.

We extend the analysis to incorporate common ownership, under which firms are exposed to overlapping ownership structures and therefore internalize rivals' profits in their strategic decisions.
We formulate this setting by adopting the modeling strategy of \cite{ederer2025pellegrino}, which nests monopoly and oligopoly as two polar cases.
The rise of common ownership has been prominently observed over recent decades in the United States \citep{backusetal2021}, as well as in other countries \citep{antonetal2025}, and it has attracted increasing attention regarding its anticompetitive effects; see, e.g., the survey by \cite{schmalz2018}.
The conventional policy concern is that common ownership softens firms' incentives to compete aggressively, for example by reducing output or raising prices, to the detriment of welfare.

Our analysis sheds new light on the effects of common ownership through the channel of product differentiation.
Specifically, unlike existing work \citep{rotemberg1984}, firms in our model have an additional channel through which they can soften competition, namely by differentiating their products more from those of other firms.
A higher degree of product differentiation then allows firms to sustain higher markups, which in turn strengthens their incentives to produce.
We show that, in a simplified symmetric setting, equilibrium output levels increase with the strength of overlapping ownership links, indicating that this differentiation-induced effect outweighs the direct output-reducing effect stemming from profit internalization.
Moreover, we show that social welfare is also increasing in the degree of common ownership, up to mild qualifications, implying that common ownership can contribute to welfare improvements once endogenous product differentiation is taken into account.

\paragraph{Related Literature.}
The economics of product differentiation has a long history and has been studied in a wide range of models; see, e.g., \cite{tirole1988} or the survey by \cite{lancaster1990}.
One of the most widely used frameworks is the location model \`a la \cite{hotelling1929}, among which the circular model \`a la \cite{salop1979} and \cite{vickrey1999} bears particular similarity to our model.
Compared to the location model, we do not restrict consumers to unit demand, resulting in a smooth demand system without kinks or discontinuities.
This substantially improves tractability, enabling us to derive equilibria based on a standard first-order approach, and our characterization holds with full generality for arbitrary parameter values, without imposing ad hoc assumptions such as symmetry across firms.
Moreover, product characteristics map directly into the Slutsky matrix of consumer demand, providing an economically meaningful interpretation of firms' strategic variables.

The GHL model is proposed by \cite{pellegrino2025} and has generated fruitful insights by exploiting its close theoretical connection to network games and its empirical advantage in fitting the model to rich firm-level datasets, such as those developed by \cite{hoberg2016phillips}.
A growing literature builds on the GHL framework to study a range of economic implications in differentiated product markets, for example, in the contexts of common ownership \citep{ederer2025pellegrino}, production networks \citep{bizzarri2024}, and dynamic R\&D investment \citep{hopenhayn2025okumura, okumura2025}.

In this strand, one of the most relevant papers is the recent independent work by \cite{voelkening2025}, who studies a symmetric Cournot duopoly with endogenous product design, whereas there are two notable differences.
First, while \cite{voelkening2025} allows firms to choose product characteristics prior to output, we assume that these decisions are made simultaneously.\footnote{The sequential assumption of \cite{voelkening2025} may be more realistic and better aligned with actual firm behavior; for example, \cite{tirole1988} distinguishes between different adjustment horizons for firms' decision variables by viewing output or prices as short-run choices and product characteristics as somewhat long-run decisions. However, we adopt the simultaneous assumption because it offers technical advantages in obtaining sharp equilibrium and welfare characterizations in our general setting.}
Second, while the equilibrium in \cite{voelkening2025} is unique and exhibits product concentration, our model admits multiple equilibria that exhibit different patterns of characteristics design.
The driving force behind this difference is the presence of idiosyncratic characteristics, which are absent in \cite{voelkening2025}; indeed, when firms' standalone values are collectively weak, product concentration is also the unique equilibrium in our model.
Moreover, even in equilibria that exhibit product differentiation, Corollary~\ref{cor_cosine_diff} shows that the degree of equilibrium product differentiation is lower than the socially optimal level.
Hence, the two papers offer complementary insights by highlighting the robustness of agglomeration-type outcomes in different settings.

From a theoretical perspective, this paper is related to the literature on network games \citep{bcz2006, bramoulleetal2014}.
A key departure from the standard linear-quadratic framework is that each player's action in our model is multi-dimensional, which turns out to be central to the coexistence of multiple equilibria.
In the scalar-action case, players' best responses are given by first-order conditions, and equilibrium is characterized as a solution to the resulting linear system.
As shown in \cite{bcz2006}, generic invertibility of the relevant matrices then implies that almost every network game admits a unique equilibrium.
In our setting, best responses are likewise characterized by a system of linear equations, but these take the form of orthogonality conditions across ``vectors'' of actions.
Unlike the scalar-action case, such conditions can be satisfied by multiple action profiles, giving rise to multiple patterns of equilibrium product differentiation.
Beyond this equilibrium analysis with multi-dimensional actions, we also contribute to the literature by offering a systematic welfare analysis, leveraging spectral methods in a manner related to \cite{bko2015} and \cite{galeottietal2020, galeottietal2025}.

\paragraph{Organization.}
The paper proceeds as follows.
Section~\ref{sec_model} introduces the model.
Section~\ref{sec_sp} studies the social planner's and monopolist's problems.
Section~\ref{sec_eq} characterizes all equilibria under oligopoly.
Section~\ref{sec_welfare} derives welfare implications through comparisons across market structures and across equilibria.
Section~\ref{sec_ext} extends the analysis by incorporating network effects and common ownership.
Omitted proofs and additional results are presented in the appendix.


\section{Model}
\label{sec_model}

Our model builds on the GHL model of \cite{pellegrino2025}.
The key departure is that product characteristics are endogenously determined, with each firm choosing both its output level and the characteristics of its product.

The following notation is used throughout the paper.
Let $[n] \coloneqq \{1,\ldots,n\}$ for any integer $n \in \N$.
Any vector $\bm{x} \in \R^n$ is treated as a column vector unless otherwise stated.
For $1 \le p < \infty$, let $\|\bm{x}\|_p \coloneqq (\sum_{i=1}^n \qty|x_i|^p)^{1/p}$ denote the $\ell_p$ norm.
Let $\|\bm{x}\|_{\infty} \coloneqq \max_{i \in [n]} \qty|x_i|$ denote the max norm.
For $\bm{x},\bm{y} \in \R^n$, we write $\bm{x} \ge \bm{y}$ (resp.~$\bm{x} \gg \bm{y}$) when $x_i \ge y_i$ (resp.~$x_i > y_i$) for all $i \in [n]$.

\subsection{Generalized Hedonic-Linear Demand}
There are $n$ firms, indexed by $i \in [n]$.
Each firm produces a single differentiated product.
There are $n+m$ characteristics.
The first $n$ characteristics are idiosyncratic and firm-specific, while the remaining $m$ characteristics are common across all products.
Each product is therefore described by a vector $(\bm{e}_i,\bm{a}_i)$, where $\bm{e}_i \in \R^n$ is the unit vector pointing in the $i$-th direction, and $\bm{a}_i = [a_{1i},\ldots,a_{mi}]^\top \in \R^m$ is a vector of unit $\ell_2$ norm, i.e., $\|\bm{a}_i\|_2 = (\bm{a}_i^\top \bm{a}_i)^{1/2} = 1$.
A \emph{characteristics profile} refers to an $m$-by-$n$ matrix $\bm{A}$ obtained by listing all firms' common-characteristics vectors,
\[
\bm{A} \coloneqq \mqty[\bm{a}_1 & \cdots & \bm{a}_i & \cdots & \bm{a}_n] = \mqty[a_{11} & & a_{1i} & & a_{1n} \\
\vdots & \cdots & \vdots & \cdots & \vdots \\
a_{m1} & & a_{mi} & & a_{mn}].
\]
Let $\calA$ be the set of all $m$-by-$n$ matrices such that the norm of each column is one.
Denote by $q_i$ the number of units produced by firm $i$.
An \emph{output profile} refers to a vector $\bm{q} = [q_1,\ldots,q_n]^\top$ that specifies the quantity produced by each firm.

There is a representative agent who serves as a consumer, a worker, and an owner.
Following the hedonic demand literature \citep{lancaster1966}, the agent linearly aggregates the characteristics of different products, and preferences are defined over the aggregated characteristics.
The total units of common characteristic $k$ are given by $x_k = \sum_{i=1}^n a_{ki} q_i$.
In matrix form, the aggregate vector of common characteristics is given by
\[
\bm{x} = \bm{A}\bm{q}.
\]
Also, the aggregate vector of idiosyncratic characteristics is given by
\[
\bm{y} = \bm{q}.
\]

The agent has a utility function that is quadratic in both common characteristics $\bm{x}$ and idiosyncratic characteristics $\bm{y}$, and linear in disutility from labor supply $H$:
\[
U(\bm{x}, \bm{y}, H)
= \alpha \qty(\bm{x}^\top \bm{\beta} - \frac{1}{2}\bm{x}^\top \bm{x})
+ \qty(\bm{y}^\top \bm{b} - \frac{1}{2}\bm{y}^\top \bm{y})
- H,
\]
where $\bm{\beta}$ and $\bm{b}$ are given vectors, and $\alpha>0$ is the utility weight on common characteristics.
Note that this utility function features satiation with respect to $\bm{x}$ and $\bm{y}$, as the utility from common and idiosyncratic characteristics is maximized at $\bm{\beta}$ and $\bm{b}$, respectively.

The agent chooses a consumption bundle $\bm{q}$ taking the characteristics profile $\bm{A}$ and the price vector $\bm{p} = [p_1, \ldots, p_n]^\top$ as given.
Their income is given by the sum of labor income $H$ and ownership dividend $\Pi$, where the wage per unit of $H$ is normalized to be one. The budget constraint is therefore
\[
\bm{q}^\top \bm{p} \le H + \Pi.
\]
Plugging $\bm{x} = \bm{A}\bm{q}$ and $\bm{y} = \bm{q}$ into $U(\bm{x}, \bm{y}, H)$, we obtain the following Lagrangian for the agent's utility maximization problem:
\begin{align*}
\calL(\bm{q}, H) &= 
\alpha \qty(\bm{q}^\top \bm{A}^\top \bm{\beta} - \frac{1}{2} \bm{q}^\top \bm{A}^\top \bm{A}\bm{q}) + \qty(\bm{q}^\top \bm{b} - \frac{1}{2} \bm{q}^\top \bm{q}) - H + \lambda \qty(H+\Pi - \bm{q}^\top \bm{p}) \\
&= \bm{q}^\top \underbrace{\qty(\alpha \bm{A}^\top \bm{\beta} + \bm{b})}_{\eqqcolon \; \bm{b}_{\bm{A}}} - \frac{1}{2} \bm{q}^\top \underbrace{\qty(\alpha \bm{A}^\top \bm{A} + \bm{I})}_{\eqqcolon \; \bm{\Sigma}_{\bm{A}}} \bm{q} - \lambda \bm{q}^\top \bm{p} + (\lambda-1)H + \lambda \Pi.
\end{align*}
The wage normalization immediately pins down the Lagrange multiplier $\lambda = 1$.
Thus, the first-order condition with respect to $\bm{q}$ yields the following demand and inverse demand functions:
\begin{align*}
\text{Aggregate demand:} \quad &\bm{q} = \bm{\Sigma}^{-1}_{\bm{A}} \qty(\bm{b}_{\bm{A}} - \bm{p}), \\
\text{Inverse demand:} \quad &\bm{p} = \bm{b}_{\bm{A}} - \bm{\Sigma}_{\bm{A}}\bm{q}.
\end{align*}
Also, the surplus is given by
\[
\bm{q}^\top \bm{b}_{\bm{A}} - \frac{1}{2} \bm{q}^\top \bm{\Sigma}_{\bm{A}}\bm{q} - H.
\]
Note that the negative inverse of $\bm{\Sigma}_{\bm{A}}$ represents the Slutsky matrix, since $\6 \bm{q}/\6 \bm{p} = -\bm{\Sigma}_{\bm{A}}^{-1}$, which is negative definite by construction.

We remark that, when the characteristics profile $\bm{A}$ is taken as given, the GHL demand system is mathematically isomorphic to a familiar linear demand system (LDS) for differentiated products, which can be micro-founded by a quasilinear quadratic utility model (QQUM) that closely resembles our representative consumer's utility.\footnote{\cite{chone2020linnemer} provides an excellent survey of the history and basic properties of LDS and QQUM, with the modern formulations commonly used in industrial organization attributed to \cite{spence1976} and \cite{shubik1980levitan}. To draw a close connection to this strand of the literature, one can, for example, establish a one-to-one correspondence between our demand system and that in \cite{amiretal2016}}
However, the key difference is that while the demand intercept $\bm{b}_{\bm A}$ and the substitution matrix $\bm{\Sigma}_{\bm A}$ are treated as exogenous in those works, they are endogenously shaped through the design of $\bm{A}$ in our model, which creates a nontrivial interplay between firms' strategic choices over product design and the quantities they supply.

One technical yet important difference from the original GHL model is that, while \cite{pellegrino2025} restricts each characteristics vector $\bm{a}_i$ to lie in the positive orthant, we allow $\bm{a}_i$ to be any unit vector, possibly with negative coordinates.
As a result, the cosine similarity $\bm{a}_i^\top \bm{a}_j$ can take any value between $-1$ and $1$, allowing us to capture not only substitutability but also complementarity across firms' products.\footnote{As noted in Section~V.I of \cite{pellegrino2025}, restricting $\bm{\Sigma}_{\bm A}$ to be nonnegative implies that products are ``physical substitutes'' in utility, but they need not be ``strategic substitutes'' since the associated Slutsky matrix $-\bm{\Sigma}_{\bm A}^{-1}$ can still contain positive entries.}
We adopt this more general choice set for firms because the focus of this paper is theoretical, and thus we do not face data availability constraints in estimating the model.\footnote{\cite{pellegrino2025} estimates $\bm{a}_i^\top \bm{a}_j$ using text-based data on product similarity developed by \cite{hoberg2016phillips}, which by construction yield nonnegative values.}
More substantially, this modeling choice is necessary for the tractability of our analysis; without it, a clean characterization of aggregate characteristics, as in Lemma~\ref{lem_donut}, would no longer be available.


\subsection{Labor Demand and Product Supply}

Each firm's production technology is represented by a linear cost function $h_i(q_i)=c_i q_i$, where $c_i>0$ is a constant marginal cost and $h_i$ is the labor input used by firm $i$.
The labor market clearing condition requires
\[
H=\sum_{i=1}^n h_i(q_i)=\bm{q}^\top \bm{c}.
\] 
Firm $i$'s profit is defined as
\begin{equation} \label{eq_profit}
\Pi_i(\bm{A},\bm{q})
= \qty(p_i(\bm{A},\bm{q})-c_i)\cdot q_i,
\end{equation}
where $p_i(\bm{A},\bm{q})$ is the $i$-th coordinate of the market-clearing price vector $\bm{b}_{\bm{A}}-\bm{\Sigma}_{\bm{A}}\bm{q}$, and $p_i(\bm{A},\bm{q})-c_i$ represents firm $i$'s markup.

Recalling that $\bm{a}_i^\top \bm{a}_i=1$, we can explicitly express firm $i$'s markup as
\begin{equation} \label{eq_markup}
p_i(\bm{A},\bm{q})-c_i
= \alpha \bm{a}_i^\top \qty(\bm{\beta}-\sum_{j\neq i} q_j\bm{a}_j) - (1+\alpha) q_i + \gamma_i,
\end{equation}
where
\[
\gamma_i \coloneqq b_i-c_i.
\]
Thus, firm $i$'s markup depends on both its common characteristics vector  $\bm{a}_i$ and its output level $q_i$.
The first term in \eqref{eq_markup} captures the dependence of markup on product characteristics.
Specifically, as firm $i$ chooses $\bm{a}_i$ closer to the consumer's ideal vector $\bm{\beta}$, its markup increases, whereas greater similarity between $\bm{a}_i$ and opponents' characteristics $\bm{a}_j$ reduces markup.
The second term captures the negative dependency of firm $i$'s markup on its own output $q_i$, reflecting the standard nature of a downward sloping demand curve.

The final term $\gamma_i$, which we refer to as firm $i$'s \emph{standalone value}, is constant in the firm's choices.
It represents the value of its product's idiosyncratic characteristic for the representative consumer, net of marginal cost.
Economically, $\gamma_i$ can be interpreted as a baseline or ``brand'' value of firm $i$'s product that is orthogonal to common characteristics and cannot be replicated by competitors.
Let $\bm{\gamma}=[\gamma_1,\ldots,\gamma_n]^\top$ denote the vector collecting these standalone values.


\subsection{Parametric Assumptions}

The model is parameterized by $(\alpha,\bm{\beta},\bm{\gamma})$, which will play a decisive role in shaping both socially optimal and equilibrium allocations.
Throughout, we assume $\alpha > 0$ and normalize $\bm{\beta}$ so that
\[
\|\bm{\beta}\|_2=1.
\]
This normalization is innocuously obtained by an appropriate scaling of the agent's utility function.
In addition, we assume that
\[
\bm{\gamma}\gg \bm{0}.
\]
This assumption ensures that every firm produces a strictly positive quantity both at the social planner's optimum and in equilibrium.
We remark that it may fail when a firm's marginal cost is sufficiently high relative to the value of its idiosyncratic characteristic.


\section{Social Planner's and Monopolist's Benchmarks}
\label{sec_sp}

Plugging total labor supply $H$ into the representative agent's utility function yields social surplus.
In addition, aggregate profit is obtained by summing $\Pi_i$ across all firms.
These are explicitly given as follows:
\begin{alignat*}{2}
\text{Total surplus:} \quad \Omega(\bm{A},\bm{q}) &\coloneqq \bm{q}^\top \qty(\bm{b}_{\bm{A}}-\bm{c}) - \frac{1}{2} \bm{q}^\top \bm{\Sigma}_{\bm{A}}\bm{q}, \\
\text{Aggregate profit:} \quad \Pi(\bm{A},\bm{q}) &\coloneqq \bm{q}^\top \qty(\bm{b}_{\bm{A}}-\bm{c}) - \bm{q}^\top \bm{\Sigma}_{\bm{A}}\bm{q}.
\end{alignat*}
This section studies two benchmark cases; namely, a benevolent social planner's optimization of total surplus $\Omega$ and a monopolistic firm's optimization of aggregate profit $\Pi$.

If a characteristic profile $\bm{A}$---and thus $\bm{b}_{\bm{A}}$ and $\bm{\Sigma}_{\bm{A}}$---is fixed, maximizing each objective is a simple quadratic optimization problem in $\bm{q}$.
What if $\bm{A}$ is also controllable?
We address this joint choice of $\bm{A}$ and $\bm{q}$ in two steps.
First, we take any fixed production profile $\bm{q}$ and characterize the optimal $\bm{A}$ given $\bm{q}$.
We then determine the optimal $\bm{q}$ given the matrix $\bm{A}$ identified in the first step.

\subsection{Optimal Characteristics Design}

We begin by rewriting the social planner's objective as follows:
\[
\Omega(\bm{A},\bm{q})
= \alpha \underbrace{\qty((\bm{A}\bm{q})^\top \bm{\beta} - \frac{1}{2} (\bm{A}\bm{q})^\top \bm{A}\bm{q})}_{\eqqcolon \; \omega(\bm{A}\bm{q})} \;+\; \bm{q}^\top \bm{\gamma} - \frac{\bm{q}^\top \bm{q}}{2}.
\]
Similarly, we rewrite the monopolist's objective as follows:
\[
\Pi(\bm{A},\bm{q})
= \alpha \underbrace{\qty((\bm{A}\bm{q})^\top \bm{\beta} - (\bm{A}\bm{q})^\top \bm{A}\bm{q})}_{\eqqcolon \; \pi(\bm{A}\bm{q})} \;+\; \bm{q}^\top \bm{\gamma} - \bm{q}^\top \bm{q}.
\]
These expressions suggest that $\Omega$ and $\Pi$ depend on the choice of $\bm{A}$ only through $\omega(\bm{A}\bm{q})$ and $\pi(\bm{A}\bm{q})$, respectively, which are functions of the aggregate characteristics vector $\bm{x} = \bm{A}\bm{q}$.
This observation raises the question of which vectors $\bm{x}$ are attainable through an appropriate choice of characteristic vectors $\bm{a}_1,\ldots,\bm{a}_n$, each of which is required to have unit length by definition.
The next lemma provides an answer to this question.


\begin{figure}[t!]
\centering
\begin{tikzpicture}[>=stealth]
\usetikzlibrary{calc}
\def\Rin{0.5}
\def\Rout{2}
\def\AxisLength{2.5} 

\coordinate (O) at (0, 0);

\draw[fill=brown!25, draw=none, even odd rule]
    (0, 0) circle (\Rout)
    (0, 0) circle (\Rin);

\draw[red] (0, 0) circle (\Rin);
\draw[blue] (0, 0) circle (\Rout);

\draw[->, thick] (-\AxisLength, 0) -- (\AxisLength, 0);
\draw[->, thick] (0, -\AxisLength) -- (0, \AxisLength);

\node[red, anchor=south west, xshift=5pt, yshift=-5pt] at (0, \Rin) {$r(\bm{q})$};
\node[blue, anchor=north west, xshift=27pt, yshift=8pt] at (0, \Rout) {$R(\bm{q})$};

\node[anchor=north] at ({-0.7*(\Rin+\Rout)/2}, {-0.5*(\Rin+\Rout)/2}) {$\calD(\bm{q})$};

\end{tikzpicture}
\caption{Given $\bm{q} \ge \bm{0}$, the region $\calD(\bm{q})$ specifies the set of vectors $\bm{x} = \bm{A}\bm{q}$, where the outer radius $R(\bm{q}) = \sum_{i=1}^n q_i$ is the sum of production levels, and the inner radius $r(\bm{q}) = \max_{i \in [n]} q_i - \sum_{j\neq i} q_j$ is the difference between the highest $q_i$ and the sum of the rest.}
\label{fig_donut1}
\end{figure}
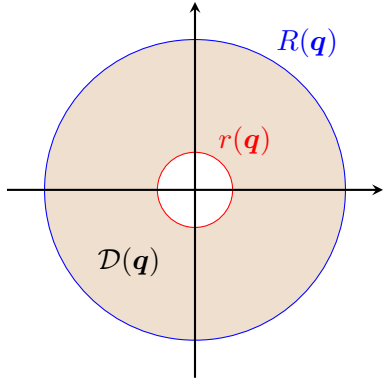


\begin{lemma} \label{lem_donut}
Given any $\bm{q} \ge \bm{0}$, there exists $\bm{A} \in \calA$ such that $\bm{x} = \bm{A}\bm{q}$ if and only if $\bm{x}$ satisfies
\begin{equation} \label{feasible_x}
r(\bm{q}) \le \|\bm{x}\|_2 \le R(\bm{q}), \quad \text{where} \quad r(\bm{q}) \coloneqq 2 \|\bm{q}\|_\infty - \|\bm{q}\|_1, \quad R(\bm{q}) \coloneqq \|\bm{q}\|_1.
\end{equation}
\end{lemma}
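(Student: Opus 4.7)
The plan is straightforward: \textbf{necessity} follows immediately from the triangle inequality and its reverse form, while \textbf{sufficiency} will be proved by an explicit inductive construction carried out in a two-dimensional subspace.

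For necessity, since $\bm{x} = \sum_i q_i \bm{a}_i$ with each $\|\bm{a}_i\|_2 = 1$, the ordinary triangle inequality yields $\|\bm{x}\|_2 \le \sum_i q_i \|\bm{a}_i\|_2 = \|\bm{q}\|_1 = R(\bm{q})$. Choosing any $k \in \argmax_i q_i$ and applying the reverse triangle inequality to $\bm{x} = q_k \bm{a}_k + \sum_{j\neq k} q_j \bm{a}_j$ gives $\|\bm{x}\|_2 \ge q_k - \sum_{j \ne k} q_j = 2 \|\bm{q}\|_\infty - \|\bm{q}\|_1 = r(\bm{q})$.

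For sufficiency, I would embed all the $\bm{a}_i$'s into a common 2D subspace of $\R^m$ containing $\bm{x}$, since the problem is intrinsically planar, and then induct on $n$ after reordering so that $q_1 \ge q_2 \ge \cdots \ge q_n$. The base case $n = 2$ is the triangle inequality itself: segments of lengths $q_1$, $q_2$, and $\|\bm{x}\|_2$ close up into a triangle precisely when $|q_1 - q_2| = r(\bm{q}) \le \|\bm{x}\|_2 \le R(\bm{q}) = q_1 + q_2$, and the triangle's interior angles determine $\bm{a}_1$ and $\bm{a}_2$ up to reflection. For the inductive step, I pick $\bm{a}_n$ so that the residual $\bm{x}' \coloneqq \bm{x} - q_n \bm{a}_n$ lies in the feasibility range $[\max(0, r(\bm{q}')), R(\bm{q}')]$ associated with $\bm{q}' = (q_1,\ldots,q_{n-1})$, and then invoke the hypothesis on $\bm{x}'$. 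As $\bm{a}_n$ sweeps the unit circle in the plane, $\|\bm{x}'\|_2$ continuously sweeps $[\,|\|\bm{x}\|_2 - q_n|,\, \|\bm{x}\|_2 + q_n\,]$, and the direction of $\bm{x}'$ is free up to reflection. Using the identities $R(\bm{q}') = R(\bm{q}) - q_n$ and $r(\bm{q}') = r(\bm{q}) + q_n$ (valid because $q_1 = \|\bm{q}\|_\infty = \|\bm{q}'\|_\infty$), together with $q_n \le R(\bm{q})/2$ (since $q_n$ is the smallest coordinate whenever $n \ge 2$), a short check confirms that the sweep interval always intersects the feasibility range, so a valid $\bm{a}_n$ exists.

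The main obstacle, in my view, is the bookkeeping of this inductive step: one must check that the sweep interval $[\,|\|\bm{x}\|_2 - q_n|,\, \|\bm{x}\|_2 + q_n\,]$ genuinely covers the feasibility range $[\max(0, r(\bm{q}')), R(\bm{q}')]$, which reduces to four elementary inequalities derived from $r(\bm{q}) \le \|\bm{x}\|_2 \le R(\bm{q})$ and the ordering $q_1 \ge \cdots \ge q_n$. This is also precisely where two-dimensional (rather than one-dimensional) placement is essential: if the $\bm{a}_i$'s were confined to a line, or restricted to the nonnegative orthant as in \cite{pellegrino2025}, the residual direction could not be adjusted freely and the clean annular shape of $\calD(\bm{q})$ would be lost, which is why the paper extends the domain of $\bm{a}_i$ to arbitrary unit vectors.
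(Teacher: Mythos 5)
Your proof is correct, and the sufficiency argument takes a genuinely different route from the paper's. The paper reduces the general case to the case $\bm{x}=\bm{0}$ by appending an $(n+1)$-th "edge" of length $\|\bm{x}\|_2$ to $\bm{q}$, observing that the augmented vector $\tilde{\bm{q}}$ satisfies the generalized polygon inequality $\|\tilde{\bm{q}}\|_1 \ge 2\|\tilde{\bm{q}}\|_\infty$, invoking the existence of a closed planar polygon with prescribed edge lengths, and then mapping back with an isometry; the polygon-existence step is asserted as a classical fact rather than constructed. You instead give a self-contained induction on $n$: peel off the smallest quantity $q_n$, note that as $\bm{a}_n$ rotates in a plane containing $\bm{x}$ the residual length $\|\bm{x}-q_n\bm{a}_n\|_2$ sweeps $[\,|\|\bm{x}\|_2-q_n|,\ \|\bm{x}\|_2+q_n\,]$ by the intermediate value theorem, and verify via the identities $R(\bm{q}')=R(\bm{q})-q_n$, $r(\bm{q}')=r(\bm{q})+q_n$ and $2q_n\le R(\bm{q})$ that this interval meets $[\max\{0,r(\bm{q}')\},R(\bm{q}')]$; the "short check" you defer does go through (the needed inequalities are exactly $\|\bm{x}\|_2\le R(\bm{q})$, $2q_n\le R(\bm{q})+\|\bm{x}\|_2$, and $r(\bm{q})\le\|\bm{x}\|_2$), including the degenerate cases $\bm{x}=\bm{0}$ and $q_n=0$. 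What each approach buys: the paper's augmentation trick is shorter and unifies the boundary and interior cases in one stroke, but leans on an unproved geometric existence statement; your induction is fully explicit (effectively proving that polygon fact along the way) and makes transparent where the two-dimensional freedom of the $\bm{a}_i$ is used, which matches the paper's own discussion of why unrestricted unit vectors are needed. Your necessity argument coincides with the paper's. One cosmetic suggestion: start the induction at $n=1$ (where feasibility forces $\|\bm{x}\|_2=q_1$) so the recursion has a base covering all $n\ge1$, though anchoring at $n=2$ as you do is also fine.
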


This lemma shows that any feasible $\bm{x}$ must lie in a ``donut'' region,
\[
\calD(\bm q) \coloneqq \qty{\bm{x} \in \R^m: r(\bm{q}) \le \|\bm{x}\|_2 \le R(\bm{q})},
\]
i.e., the area between the inner circle of radius $r(\bm{q})$ and the outer circle of radius $R(\bm{q})$.\footnote{Note that $r(\bm{q})$ can be negative so that the lower bound condition in \eqref{feasible_x} is trivially satisfied. In such cases, $\calD(\bm{q})$ reduces to a ball of radius $R(\bm{q})$.}
To gain a geometric intuition behind the condition, consider any feasible $\bm{x}$ such that
\[
q_1\bm{a}_1 + \cdots + q_n\bm{a}_n = \bm{x}.
\]
This equation describes a closed polygon with $n+1$ edges in $\R^m$, whose respective lengths are given by $q_1,\ldots,q_n$, and $\|\bm{x}\|_2$.
For such a polygon to exist, no edge may exceed the sum of the remaining edges.
In particular, the edge corresponding to $\bm{x}$ should not be longer than the sum of the edges $q_1\bm{a}_1,\ldots,q_n\bm{a}_n$, producing the upper bound in \eqref{feasible_x}.  
In addition, each edge $q_i\bm{a}_i$ should not be longer than the sum of all other edges, so
\[
q_i = \|q_i\bm{a}_i\|_2 \le \sum_{j \neq i} \|q_j\bm{a}_j\|_2 + \|\bm{x}\|_2 = \sum_{j \neq i} q_j + \|\bm{x}\|_2, \quad \forall i \in [n],
\]
whose tightest restriction arises when $q_i = \|\bm{q}\|_{\infty}$ and gives the lower bound in \eqref{feasible_x}.

Now, observe that $\omega(\bm{x})$ is globally maximized at the interior solution $\bm{x} = \bm{\beta}$, which is feasible precisely when $\bm{\beta} \in \calD(\bm{q})$.
Otherwise, if $\bm{\beta} \notin \calD(\bm{q})$ for a given $\bm{q}$, then $\bm{x}$ must instead be chosen as the projection of $\bm{\beta}$ onto $\calD(\bm{q})$.
Specifically, since $\|\bm{\beta}\|_2 = 1$ by normalization, if $R(\bm{q}) < 1$---i.e., when total production $\sum_{i=1}^n q_i$ is insufficient to reach $1$---the optimal $\bm{x}$ is given by the vector parallel to $\bm{\beta}$ whose length is truncated to $R(\bm{q})$.
Conversely, if $r(\bm{q}) > 1$---which occurs when a single dominant firm $i$ has $q_i$ far exceeding all others---$\bm{x}$ is again chosen parallel to $\bm{\beta}$, but with its length extended to $r(\bm{q})$.
A parallel argument applies to the monopolist's conditionally optimal design of $\bm{A}$ given $\bm{q}$, with $\bm{\beta}$ replaced by $\bm{\beta}/2$, since $\pi(\bm{x})$ is uniquely maximized at $\bm{x} = \bm{\beta}/2$.
We summarize the discussion so far in the next lemma.


\begin{figure}[t]
\centering
\begin{tabular}{ccc}


\begin{minipage}[t]{0.32\textwidth}
\begin{tikzpicture}[>=stealth]
\def\xmax{4}
\def\ymax{4}
\def\Rin{0.7}
\def\Rout{2.3}
\def\startangle{-5}
\def\endangle{95}
\def\thetaone{45}
\def\thetab{45}
\def\DeltaY{5pt}

\pgfmathsetmacro{\qone}{(\Rout+\Rin)/2}
\pgfmathsetmacro{\qtwo}{(\Rout-\Rin)/2}
\pgfmathsetmacro{\costh}{cos(\thetaone-\thetab)}
\pgfmathsetmacro{\disc}{(2*\qone*\costh)^2 - 4*(\qone*\qone-\qtwo*\qtwo)}
\pgfmathsetmacro{\tplus}{(\qone*\costh + 0.5*sqrt(\disc))}
\pgfmathsetmacro{\tminus}{(\qone*\costh - 0.5*sqrt(\disc))}
\pgfmathsetmacro{\t}{max(\tplus,\tminus)}

\coordinate (O) at (0,0);
\coordinate (P) at ({\qone*cos(\thetaone)},{\qone*sin(\thetaone)});
\coordinate (X) at ({\t*cos(\thetab)},{\t*sin(\thetab)});

\coordinate (Oshift) at ($(O)+(0,\DeltaY)$);
\coordinate (Pshift) at ($(P)+(0,\DeltaY)$);
\coordinate (XshiftL) at ($(X)+(-2.5pt,\DeltaY-2.5pt)$);
\coordinate (XshiftR) at ($(X)+(-2pt,\DeltaY-2pt)$);

\coordinate (Bend) at ({3.3*cos(\thetab)},{3.3*sin(\thetab)});
\coordinate (Bshift) at ($(Bend)+(-3pt,\DeltaY-3pt)$);

\draw[fill=brown!25,draw=none]
  ({\Rin*cos(\startangle)},{\Rin*sin(\startangle)}) --
  ({\Rout*cos(\startangle)},{\Rout*sin(\startangle)})
  arc (\startangle:\endangle:\Rout) --
  ({\Rin*cos(\endangle)},{\Rin*sin(\endangle)})
  arc (\endangle:\startangle:\Rin);

\draw[->,thick] (0,0) -- (\xmax,0);
\draw[->,thick] (0,0) -- (0,\ymax);

\node[anchor=east,xshift=-5pt] at (0,\Rin) {$r(\bm{q})$};
\node[anchor=east,xshift=-5pt] at (0,\Rout) {$R(\bm{q})$};

\draw[->,very thick,red] (Oshift) -- (Pshift)
  node[left, xshift=-2pt] {$q_1\bm{a}_1$};

\draw[->,very thick,blue] (Pshift) -- (XshiftL)
  node[above, xshift=-12pt, yshift=-3pt] {$q_2\bm{a}_2$};

\draw[<-,dotted,very thick,black] (XshiftR) -- (Bshift);

\draw[->,very thick,black] (O) -- (Bend)
  node[below right] {$\bm{\beta}$};

\fill (X) circle (2.5pt)
  node[anchor=north west,xshift=-6pt,yshift=-2pt] {$\bm{x}^\dagger$};
\end{tikzpicture}
\end{minipage}%
\hfill

\begin{minipage}[t]{0.32\textwidth}
\begin{tikzpicture}[>=stealth]
\def\xmax{4}
\def\ymax{4}
\def\Rin{0.9}
\def\Rout{2.9}
\def\startangle{-5}
\def\endangle{95}
\def\thetaone{63}
\def\thetab{45}

\pgfmathsetmacro{\qone}{(\Rout+\Rin)/2}
\pgfmathsetmacro{\qtwo}{(\Rout-\Rin)/2}
\pgfmathsetmacro{\costh}{cos(\thetaone-\thetab)}
\pgfmathsetmacro{\disc}{(2*\qone*\costh)^2 - 4*(\qone*\qone-\qtwo*\qtwo)}
\pgfmathsetmacro{\tplus}{(\qone*\costh + 0.5*sqrt(\disc))}
\pgfmathsetmacro{\tminus}{(\qone*\costh - 0.5*sqrt(\disc))}
\pgfmathsetmacro{\t}{max(\tplus,\tminus)}

\coordinate (O) at (0,0);
\coordinate (P) at ({\qone*cos(\thetaone)},{\qone*sin(\thetaone)});
\coordinate (X) at ({\t*cos(\thetab)},{\t*sin(\thetab)});
\coordinate (Oshift) at ($(O)+(0,1pt)$);
\coordinate (Pshift) at ($(P)+(0,1.5pt)$);
\coordinate (Xshift) at ($(X)+(-2.5pt,1pt)$);
\coordinate (XshiftB) at ($(X)+(-1.5pt,-1.5pt)$);

\draw[fill=brown!25,draw=none]
  ({\Rin*cos(\startangle)},{\Rin*sin(\startangle)}) --
  ({\Rout*cos(\startangle)},{\Rout*sin(\startangle)})
  arc (\startangle:\endangle:\Rout) --
  ({\Rin*cos(\endangle)},{\Rin*sin(\endangle)})
  arc (\endangle:\startangle:\Rin);

\draw[->,thick] (0,0) -- (\xmax,0);
\draw[->,thick] (0,0) -- (0,\ymax);

\node[anchor=east,xshift=-5pt] at (0,\Rin) {$r(\bm{q})$};
\node[anchor=east,xshift=-5pt] at (0,\Rout) {$R(\bm{q})$};

\draw[->,very thick,black] (O) -- (XshiftB);

\draw[->,very thick,red] (Oshift) -- (P)
  node[above, xshift=-12pt, yshift=-4pt] {$q_1\bm{a}_1$};

\draw[->,very thick,blue] (Pshift) -- (Xshift)
  node[above, xshift=-10pt, yshift=0pt] {$q_2\bm{a}_2$};

\fill (X) circle (2.5pt)
  node[anchor=north west,xshift=-6pt,yshift=-2pt] {$\bm{x}^\dagger=\bm{\beta}$};
\end{tikzpicture}
\end{minipage}%
\hfill

\begin{minipage}[t]{0.32\textwidth}
\begin{tikzpicture}[>=stealth]
\def\xmax{4}
\def\ymax{4}
\def\Rin{2.6}
\def\Rout{3.6}
\def\startangle{-5}
\def\endangle{95}
\def\thetaone{45}
\def\thetab{45}
\def\DeltaY{5pt}

\pgfmathsetmacro{\qone}{(\Rout+\Rin)/2}
\pgfmathsetmacro{\qtwo}{(\Rout-\Rin)/2}
\pgfmathsetmacro{\costh}{cos(\thetaone-\thetab)}
\pgfmathsetmacro{\disc}{(2*\qone*\costh)^2 - 4*(\qone*\qone-\qtwo*\qtwo)}
\pgfmathsetmacro{\tplus}{(\qone*\costh + 0.5*sqrt(\disc))}
\pgfmathsetmacro{\tminus}{(\qone*\costh - 0.5*sqrt(\disc))}
\pgfmathsetmacro{\t}{min(\tplus,\tminus)}

\coordinate (O) at (0,0);
\coordinate (P) at ({\qone*cos(\thetaone)},{\qone*sin(\thetaone)});
\coordinate (X) at ({\t*cos(\thetab)},{\t*sin(\thetab)});

\coordinate (Oshift) at ($(O)+(0,\DeltaY)$);
\coordinate (Pshift) at ($(P)+(0,\DeltaY)$);
\coordinate (PshiftB) at ($(P)+(2pt,2pt)$);
\coordinate (XshiftR) at ($(X)+(2pt,2pt)$);
\coordinate (XshiftL) at ($(X)-(2pt,2pt)$);

\draw[fill=brown!25,draw=none]
  ({\Rin*cos(\startangle)},{\Rin*sin(\startangle)}) --
  ({\Rout*cos(\startangle)},{\Rout*sin(\startangle)})
  arc (\startangle:\endangle:\Rout) --
  ({\Rin*cos(\endangle)},{\Rin*sin(\endangle)})
  arc (\endangle:\startangle:\Rin);

\draw[->,thick] (0,0) -- (\xmax,0);
\draw[->,thick] (0,0) -- (0,\ymax);

\node[anchor=east,xshift=-5pt] at (0,\Rin) {$r(\bm{q})$};
\node[anchor=east,xshift=-5pt] at (0,\Rout) {$R(\bm{q})$};

\draw[->,very thick,red] (Oshift) -- (Pshift)
  node[midway,left,xshift=-2pt] {$q_1\bm{a}_1$};

\draw[->,very thick,blue] (PshiftB) -- (XshiftR)
  node[midway,right,xshift=3pt] {$q_2\bm{a}_2$};

\draw[->,dotted,very thick,black] ({1.8*cos(\thetab)},{1.8*sin(\thetab)}) -- (XshiftL);

\draw[->,very thick,black] (O) -- ({1.8*cos(\thetab)},{1.8*sin(\thetab)})
  node[midway,below right] {$\bm{\beta}$};

\fill (X) circle (2.5pt)
  node[anchor=north west,xshift=-6pt,yshift=-2pt] {$\bm{x}^\dagger$};
\end{tikzpicture}
\end{minipage}

\end{tabular}
\caption{
Left panel: When $R(\bm{q}) < \|\bm{\beta}\|_2$, $\bm{\beta}$ is truncated to length $R(\bm{q})$, and product concentration arises.
Middle panel: When $r(\bm{q}) < \|\bm{\beta}\|_2 < R(\bm{q})$, the interior solution $\bm{x}^\dagger = \bm{\beta}$ obtains, and product differentiation arises.
Right panel: When $\|\bm{\beta}\|_2 < r(\bm{q})$, $\bm{\beta}$ is extended to length $r(\bm{q})$, and product polarization arises.
}
\label{fig_donut2}
\end{figure}
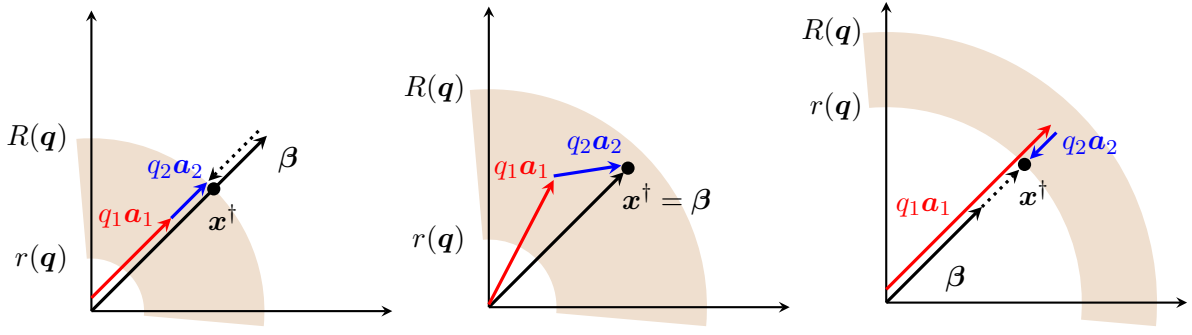


\begin{lemma} \label{lem_opt_A}
Given any $\bm{q} \ge \bm{0}$, $\omega(\bm{x})$ and $\pi(\bm{x})$ are uniquely maximized over $\calD(\bm{q})$ at $\bm{x}^\dagger(\bm{q}) = \rho^\dagger(\bm{q}) \cdot \bm{\beta}$ and $\bm{x}^\ddagger(\bm{q}) = \rho^\ddagger(\bm{q}) \cdot \bm{\beta}$, respectively, where
\[
\rho^\dagger(\bm{q}) = 
\max \qty{r(\bm{q}),\, \min\qty{R(\bm{q}),\, 1}}, \qquad \rho^\ddagger(\bm{q}) = 
\max \qty{r(\bm{q}),\, \min\qty{R(\bm{q}),\, 1/2}}.
\]
\end{lemma}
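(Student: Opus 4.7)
The plan is to recast each maximization as a nearest-point problem on the annulus $\calD(\bm{q})$. Completing the square gives
\[
\omega(\bm{x}) = \tfrac{1}{2}\|\bm{\beta}\|_2^2 - \tfrac{1}{2}\|\bm{x} - \bm{\beta}\|_2^2, \qquad \pi(\bm{x}) = \tfrac{1}{4}\|\bm{\beta}\|_2^2 - \|\bm{x} - \tfrac{1}{2}\bm{\beta}\|_2^2,
\]
so maximizing $\omega$ (resp.~$\pi$) over $\calD(\bm{q})$ is equivalent to finding the Euclidean projection of $\bm{\beta}$ (resp.~$\bm{\beta}/2$) onto $\calD(\bm{q})$.

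Next I would characterize the projection of an arbitrary nonzero $\bm{v} \in \R^m$ onto $\calD(\bm{q})$ by exploiting its rotational symmetry about the origin. For any $\bm{x} \in \calD(\bm{q})$ with $\|\bm{x}\|_2 = \rho$, the radial candidate $\bar{\bm{x}} = \rho \bm{v}/\|\bm{v}\|_2$ has the same norm and thus also lies in $\calD(\bm{q})$; the Cauchy--Schwarz inequality then yields
\[
\|\bm{x} - \bm{v}\|_2^2 = \rho^2 - 2\bm{x}^\top \bm{v} + \|\bm{v}\|_2^2 \ge (\rho - \|\bm{v}\|_2)^2 = \|\bar{\bm{x}} - \bm{v}\|_2^2,
\]
with equality only when $\bm{x} = \bar{\bm{x}}$. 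The projection must therefore lie on the ray $\R_{+}\bm{v}$, and the problem reduces to minimizing the strictly convex scalar quadratic $(\rho - \|\bm{v}\|_2)^2$ over the feasible radii $\rho \in [\max\{r(\bm{q}),0\}, R(\bm{q})]$. The unique solution is
\[
\rho^{*}(\bm{v}) = \max\qty{r(\bm{q}),\, \min\qty{R(\bm{q}),\, \|\bm{v}\|_2}},
\]
where the outer $\max$ yields a nonnegative number whenever $r(\bm{q}) \le 0$ because $\|\bm{v}\|_2 \ge 0$.

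Substituting $\bm{v} = \bm{\beta}$ (with $\|\bm{\beta}\|_2 = 1$) gives $\rho^{*}(\bm{\beta}) = \rho^\dagger(\bm{q})$ and hence $\bm{x}^\dagger(\bm{q}) = \rho^\dagger(\bm{q})\bm{\beta}$, while $\bm{v} = \bm{\beta}/2$ gives $\rho^{*}(\bm{\beta}/2) = \rho^\ddagger(\bm{q})$ and, after renormalizing the ray direction, $\bm{x}^\ddagger(\bm{q}) = \rho^\ddagger(\bm{q})\bm{\beta}$. The main subtlety I anticipate is the non-convexity of $\calD(\bm{q})$ when $r(\bm{q}) > 0$, which rules out a direct appeal to Hilbert-space projection theorems; the rotational-symmetry reduction to a one-dimensional problem sidesteps this and also handles the degenerate regime $r(\bm{q}) \le 0$ (in which $\calD(\bm{q})$ collapses to a ball) and the interior regime $\|\bm{v}\|_2 \in [r(\bm{q}), R(\bm{q})]$ uniformly through the same max-min formula.
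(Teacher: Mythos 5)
Your proof is correct and follows essentially the same route as the paper, which treats the lemma as an immediate consequence of Lemma~\ref{lem_donut} together with the projection argument sketched in the surrounding text: the unconstrained maximizers are $\bm{\beta}$ and $\bm{\beta}/2$, and maximizing over $\calD(\bm{q})$ amounts to radially clamping their lengths to $[\max\{r(\bm{q}),0\},R(\bm{q})]$. Your completion of the square and the Cauchy--Schwarz/rotational-symmetry reduction simply make rigorous (including uniqueness and the non-convex case $r(\bm{q})>0$) what the paper leaves as immediate.
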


Therefore, for both the monopolist and the social planner, the conditionally optimal characteristics profile for a given output profile $\bm{q}$ yields an aggregate characteristics vector that is proportional to the consumer's ideal vector $\bm{\beta}$, where the proportionality coefficient is determined by the norms of $\bm{q}$.
Notably, the monopolist's choice of product differentiation is always aligned with the social planner's optimum, even though the output profiles they select may differ.
This implies that any inefficiency under monopoly arises solely from distortions in production quantities rather than from misaligned product design.

We can further characterize each characteristics vector $\bm{a}_i$ when the aggregate vector $\bm{x} = \bm{A}\bm{q}$ lies on the boundary of $\calD(\bm{q})$, i.e., $\|\bm{x}\|_2$ is equal to either $R(\bm{q})$ or $r(\bm{q})$.
First, when $\|\bm{x}\|_2 = R(\bm{q})$, the left panel of Figure~\ref{fig_donut2} suggests that $\bm{x}$ can be generated by aligning all characteristics vectors $\bm{a}_i$ in the same direction as $\bm{\beta}$.
We say that a characteristics profile $\bm{A}$ exhibits \emph{product concentration} if $\bm{a}_i = \bm{a}_j$ for all $i,j \in [n]$.
The next lemma shows that product concentration is not only sufficient but also necessary to generate $\bm{x}$ on the outer boundary of $\calD(\bm{q})$.

\begin{lemma} \label{lem_R}
Given any $\bm{q} \gg \bm{0}$, let $\bm{x} \in \calD(\bm{q})$ be such that $\|\bm{x}\|_2 = R(\bm{q})$.
Then, $\bm{A}\bm{q}=\bm{x}$ holds if and only if $\bm{A}$ exhibits product concentration, with $\bm{a}_i = \frac{\bm{x}}{\|\bm{x}\|_2}$ for all $i \in [n]$.
\end{lemma}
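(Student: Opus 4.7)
The plan is to split the biconditional into two parts: the ``if'' direction reduces to a direct substitution, while the ``only if'' direction follows from the equality case of the triangle inequality in Euclidean space.

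For the ``if'' direction, I would assume every $\bm{a}_i$ equals $\bm{x}/\|\bm{x}\|_2$ (which is a legitimate element of $\calA$ since it has unit length), and simply compute
\[
\bm{A}\bm{q} \;=\; \sum_{i=1}^n q_i \bm{a}_i \;=\; \Bigl(\sum_{i=1}^n q_i\Bigr) \frac{\bm{x}}{\|\bm{x}\|_2} \;=\; \frac{R(\bm{q})}{\|\bm{x}\|_2}\, \bm{x} \;=\; \bm{x},
\]
where the last equality uses the hypothesis $\|\bm{x}\|_2 = R(\bm{q}) = \|\bm{q}\|_1$ (valid since $\bm{q} \gg \bm{0}$).

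For the ``only if'' direction, I would start from $\bm{A}\bm{q} = \bm{x}$ and apply the triangle inequality to obtain
\[
\|\bm{x}\|_2 \;=\; \Bigl\|\sum_{i=1}^n q_i \bm{a}_i\Bigr\|_2 \;\le\; \sum_{i=1}^n q_i \|\bm{a}_i\|_2 \;=\; \sum_{i=1}^n q_i \;=\; R(\bm{q}).
\]
The assumption $\|\bm{x}\|_2 = R(\bm{q})$ forces this inequality to hold with equality. Because the Euclidean norm is strictly convex, the equality case requires all nonzero summands $q_i \bm{a}_i$ to be positive scalar multiples of a single direction; combined with $q_i > 0$ and $\|\bm{a}_i\|_2 = 1$ for every $i \in [n]$, this pins down $\bm{a}_1 = \cdots = \bm{a}_n = \bm{u}$ for some common unit vector $\bm{u}$. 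Substituting back gives $\bm{x} = R(\bm{q})\,\bm{u}$, so $\bm{u} = \bm{x}/\|\bm{x}\|_2$, which is exactly the claimed product-concentration configuration.

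The argument is entirely elementary and the only point that requires a moment of care is the equality case of the triangle inequality for a sum of more than two vectors in $\R^m$; this is where $\bm{q} \gg \bm{0}$ matters, since strict positivity of every $q_i$ prevents any $\bm{a}_i$ from being left unconstrained and guarantees the simultaneous alignment of all characteristics vectors.
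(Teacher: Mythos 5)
Your proposal is correct and follows essentially the same route as the paper: the ``if'' direction is direct substitution, and the ``only if'' direction exploits the equality case of the triangle inequality, which together with $\bm{q}\gg\bm{0}$ forces all $\bm{a}_i$ to coincide with $\bm{x}/\|\bm{x}\|_2$. The only difference is presentational — the paper derives the equality case explicitly by expanding the squared norm to get $\sum_{i,j} q_i q_j\qty(1-\bm{a}_i^\top\bm{a}_j)=0$ and using $\bm{a}_i^\top\bm{a}_j\le 1$, whereas you invoke the standard alignment criterion as a known fact.
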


Next, we turn to the opposite boundary case, where $\|\bm{x}\|_2 = r(\bm{q})$.
Note that this case can arise only when $r(\bm{q}) \ge 0$, which requires the existence of a unique firm $i$ whose output $q_i$ is excessively large relative to the others.
As illustrated in the right panel of Figure~\ref{fig_donut2}, the vector $\bm{x}$ can then be generated by orienting the characteristics vector of the largest firm in the direction of $\bm{x}$, while orienting those of all remaining firms in the opposite direction.
Generalizing this case, we say that a characteristics profile $\bm{A}$ exhibits \emph{product polarization} if $\bm{a}_i = \pm \bm{u}$ for all $i \in [n]$, for some unit vector $\bm{u}$, with both signs occurring.
The next lemma shows that product polarization is necessary and sufficient to generate $\bm{x}$ on the inner boundary of $\calD(\bm{q})$, and, in particular, that all non-dominant firms orient their characteristics vectors in the direction opposite to that of the largest firm.

\begin{lemma} \label{lem_r}
Given any $\bm{q} \gg \bm{0}$ with $r(\bm{q}) > 0$, let $\bm{x} \in \calD(\bm{q})$ be such that $\|\bm{x}\|_2 = r(\bm{q})$.
Then, $\bm{A}\bm{q}=\bm{x}$ holds if and only if $\bm{A}$ exhibits product differentiation, with $\bm{a}_i = \frac{\bm{x}}{\|\bm{x}\|_2}$ for the unique firm $i$ with the highest output $q_i$, and $\bm{a}_j = -\frac{\bm{x}}{\|\bm{x}\|_2}$ for all $j \neq i$.
\end{lemma}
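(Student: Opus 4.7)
The plan is to exploit tightness in the triangle inequality, analogously to the argument sketched after Lemma~\ref{lem_donut}. The \emph{if} direction is a direct verification: if $\bm{a}_{i^*}=\bm{u}$ and $\bm{a}_j=-\bm{u}$ for all $j\neq i^*$, where $\bm{u}\coloneqq \bm{x}/\|\bm{x}\|_2$ and $i^*$ is the (candidate) argmax, then
\[
\bm{A}\bm{q} \;=\; q_{i^*}\bm{u} - \sum_{j\neq i^*}q_j\,\bm{u} \;=\; \bigl(2q_{i^*}-\|\bm{q}\|_1\bigr)\bm{u} \;=\; r(\bm{q})\,\bm{u} \;=\; \bm{x},
\]
using $q_{i^*}=\|\bm{q}\|_\infty$. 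So all the work lies in the \emph{only if} direction.

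For that, I first note that $r(\bm{q})>0$ forces the argmax $i^*\coloneqq \argmax_{i\in[n]} q_i$ to be unique: any two firms attaining the maximum would give $\|\bm{q}\|_1\ge 2\|\bm{q}\|_\infty$, contradicting $r(\bm{q})>0$. Fix this $i^*$. Now isolate $\bm{a}_{i^*}$ by writing $q_{i^*}\bm{a}_{i^*} = \bm{x}-\sum_{j\neq i^*}q_j\bm{a}_j$ and apply the triangle inequality twice:
\[
q_{i^*} \;=\; \Bigl\|\bm{x}-\sum_{j\neq i^*}q_j\bm{a}_j\Bigr\|_2 \;\le\; \|\bm{x}\|_2 + \Bigl\|\sum_{j\neq i^*}q_j\bm{a}_j\Bigr\|_2 \;\le\; \|\bm{x}\|_2 + \sum_{j\neq i^*}q_j.
\]
Since $\|\bm{x}\|_2=r(\bm{q})=2q_{i^*}-\|\bm{q}\|_1$, the left- and right-hand sides coincide, so equality must hold throughout.

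The two equalities then pin down $\bm{A}$. Equality in the inner inequality $\|\sum_{j\neq i^*}q_j\bm{a}_j\|_2 = \sum_{j\neq i^*}q_j$, together with $q_j>0$ for every $j$, forces all the vectors $\bm{a}_j$ with $j\neq i^*$ to coincide with a common unit vector, call it $\bm{a}$. Equality in the outer inequality requires that $\bm{x}$ and $-\sum_{j\neq i^*}q_j\bm{a}_j=-(\|\bm{q}\|_1-q_{i^*})\bm{a}$ be non-negative multiples of one another, and since $\|\bm{q}\|_1-q_{i^*}>0$, this gives $\bm{a}=-\bm{u}$. Substituting back yields $q_{i^*}\bm{a}_{i^*} = \bm{x}+(\|\bm{q}\|_1-q_{i^*})\bm{u} = q_{i^*}\bm{u}$, hence $\bm{a}_{i^*}=\bm{u}$, completing the characterization.

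The main obstacle is the careful handling of the equality cases: one must invoke \emph{both} tight inequalities simultaneously (one to collapse the $n-1$ non-dominant vectors into a single direction, the other to orient that direction opposite to $\bm{x}$), and each step relies on the strict positivity assumptions $\bm{q}\gg\bm{0}$ and $r(\bm{q})>0$. Everything else—including the uniqueness of the argmax and the recovery of $\bm{a}_{i^*}$—then follows mechanically from this equality analysis.
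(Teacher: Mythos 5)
Your proof is correct and follows essentially the same route as the paper: both arguments isolate the dominant firm, use tightness of the triangle inequality to force the $n-1$ non-dominant vectors onto a common direction, and then use the remaining equality to orient that direction opposite to $\bm{x}$ and recover $\bm{a}_{i^*}=\bm{x}/\|\bm{x}\|_2$. The only cosmetic difference is that you chain two forward triangle inequalities after moving $q_{i^*}\bm{a}_{i^*}$ to one side, whereas the paper combines a triangle and a reverse-triangle inequality on the full sum.
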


These lemmas imply that the characteristics profile $\bm{A}$ is uniquely determined once we require the targeted vector $\bm{x}$ to have length equal to either boundary value $R(\bm{q})$ or $r(\bm{q})$.
In both cases, the resulting characteristics profile satisfies $\rank(\bm{A})=1$.
By contrast, we say that $\bm{A}$ exhibits \emph{product differentiation} when $\rank(\bm{A})\ge 2$,\footnote{As the construction in Lemma~\ref{lem_donut} shows, any target vector $\bm{x}\in\calD(\bm{q})$ can be generated in such a way that the vectors $q_1\bm{a}_1,\ldots,q_n\bm{a}_n$, and $\bm{x}$ together form a ``triangle'' in a space isomorphic to $\R^2$. Hence, both the social planner's and equilibrium characteristics profiles may be chosen with $\rank(\bm{A})\le 2$, although there can be other higher-rank matrices $\bm{A}$ that achieve the same aggregate vector $\bm{x}$.}
in which case there exist firms $i\neq j$ such that the cosine similarity between $\bm{a}_i$ and $\bm{a}_j$ lies strictly between the extreme values $-1$ and $1$.
The next lemma shows that product differentiation arises for generic targeted vector $\bm{x}$ in $\calD(\bm{q})$.

\begin{lemma} \label{lem_b}
Given any $\bm{q}\gg 0$, let $\bm{x}\in \calD(\bm{q})$ be such that there exists no $\bm{\sigma}\in\{-1,1\}^n$ satisfying $\bm{\sigma}^\top \bm{q}=\|\bm{x}\|_2$.
If $\bm{A}\bm{q}=\bm{x}$, then $\bm{A}$ exhibits product differentiation.
In particular, if $n=2$, $\bm{A}$ exhibits product differentiation whenever $|q_1-q_2|<\|\bm{x}\|_2<q_1+q_2$.
\end{lemma}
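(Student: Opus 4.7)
The plan is to prove the contrapositive of the general statement: if the characteristics profile does not exhibit product differentiation, i.e.\ $\rank(\bm{A}) \le 1$, then there exists $\bm{\sigma} \in \{-1,1\}^n$ satisfying $\bm{\sigma}^\top \bm{q} = \|\bm{x}\|_2$. Since every column $\bm{a}_i$ has unit norm, $\bm{A}$ cannot be the zero matrix, so the hypothesis forces $\rank(\bm{A}) = 1$; all columns then lie on a single line through the origin in $\R^m$.

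Concretely, I would pick any unit vector $\bm{u} \in \R^m$ spanning this line and write $\bm{a}_i = \sigma_i \bm{u}$ for some $\sigma_i \in \{-1,1\}$. Substituting into $\bm{A}\bm{q} = \bm{x}$ yields
\[
\bm{x} \;=\; \sum_{i=1}^n \sigma_i q_i \bm{u} \;=\; (\bm{\sigma}^\top \bm{q})\,\bm{u},
\]
so $\|\bm{x}\|_2 = |\bm{\sigma}^\top \bm{q}|$. Replacing $(\bm{u}, \bm{\sigma})$ by $(-\bm{u}, -\bm{\sigma})$ if necessary leaves $\bm{A}$ unchanged but flips the sign of $\bm{\sigma}^\top \bm{q}$, so we may assume $\bm{\sigma}^\top \bm{q} \ge 0$ and conclude $\bm{\sigma}^\top \bm{q} = \|\bm{x}\|_2$, contradicting the hypothesis on $\bm{x}$. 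This establishes the general statement.

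For the $n=2$ refinement, I would simply enumerate the four possible sign patterns: $\bm{\sigma}^\top \bm{q} = \sigma_1 q_1 + \sigma_2 q_2$ belongs to $\{\pm(q_1+q_2),\,\pm(q_1-q_2)\}$. Under the assumption $|q_1 - q_2| < \|\bm{x}\|_2 < q_1 + q_2$, every element of this set has absolute value either $q_1+q_2$, which strictly exceeds $\|\bm{x}\|_2$, or $|q_1-q_2|$, which is strictly less than $\|\bm{x}\|_2$. Hence no $\bm{\sigma}$ satisfies $\bm{\sigma}^\top \bm{q} = \|\bm{x}\|_2$, and the first part of the lemma applies.

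The argument is essentially a one-line linear-algebra observation — rank-one matrices with unit columns are precisely the $\pm$-sign flips of a single axis, and this structure is captured exactly by the equation $\bm{\sigma}^\top \bm{q} = \|\bm{x}\|_2$ — so no step poses substantive difficulty. The only mild subtlety is correctly absorbing the sign ambiguity in the choice of spanning unit vector $\bm{u}$, which a single simultaneous sign flip on $(\bm{u},\bm{\sigma})$ handles.
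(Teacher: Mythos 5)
Your proof is correct and follows essentially the same route as the paper: argue the contrapositive, note that a rank-one profile with unit columns forces $\bm{a}_i=\sigma_i\bm{u}$, deduce $|\bm{\sigma}^\top\bm{q}|=\|\bm{x}\|_2$, and flip the sign of $(\bm{u},\bm{\sigma})$ if needed to contradict the hypothesis. Your explicit enumeration of the four sign patterns for the $n=2$ case is a small addition the paper leaves implicit, but it is the obvious intended argument.
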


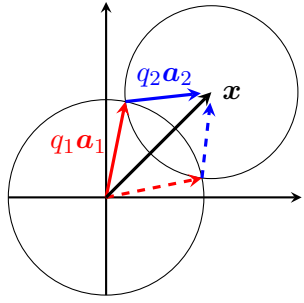
\begin{figure}[t]
\centering
\begin{tikzpicture}[>=stealth]
\usetikzlibrary{calc,intersections}

\def\qone{1.3}
\def\qtwo{1.15}
\def\AxisPos{2.6}
\def\AxisNeg{1.3}
\def\ShortenBlue{4pt}

\coordinate (O) at (0,0);
\coordinate (X) at (1.4,1.4);

\path[name path=Cone] (O) circle (\qone);
\path[name path=Ctwo] (X) circle (\qtwo);

\path[name intersections={of=Cone and Ctwo, by={P,Q}}];

\draw[->, thick] (-\AxisNeg,0) -- (\AxisPos,0);
\draw[->, thick] (0,-\AxisNeg) -- (0,\AxisPos);

\draw (O) circle (\qone);
\draw (X) circle (\qtwo);

\draw[->, very thick, red] (O) -- (P);
\draw[->, very thick, blue, shorten >=\ShortenBlue] (P) -- (X);

\draw[->, very thick, red, dashed] (O) -- (Q);
\draw[->, very thick, blue, dashed, shorten >=\ShortenBlue] (Q) -- (X);

\draw[->, very thick] (O) -- (X)
  node[right] {$\bm{x}$};

\node[red, anchor=east] at ($(O)!0.55!(P)$) {$q_1\bm{a}_1$};
\node[blue, anchor=south, xshift=-3pt] at ($(P)!0.55!(X)$) {$q_2\bm{a}_2$};

\end{tikzpicture}
\caption{Geometric illustration of multiplicity of $(\bm{a}_1,\bm{a}_2)$ when \eqref{feasible_x} is slack.}
\label{fig_remark1}
\end{figure}

\begin{remark} \label{remark_multiplicity}
When $\bm{x}$ belongs to the interior of $\calD(\bm{q})$, it can be generated by multiple characteristics profiles.
To illustrate, suppose that $n=m=2$.
As depicted in Figure~\ref{fig_remark1}, consider a circle of radius $q_1$ centered at the origin and another circle of radius $q_2$ centered at $\bm{x}$.
When \eqref{feasible_x} is slack, these two circles intersect at exactly two points, each of which corresponds to a pair of characteristics vectors $(\bm{a}_1,\bm{a}_2)$ satisfying $\bm{x}=q_1\bm{a}_1+q_2\bm{a}_2$.
More generally, when $m\ge 3$, there are infinitely many such pairs $(\bm{a}_1,\bm{a}_2)$ that generate $\bm{x}$; geometrically, they form a sphere embedded in a space isomorphic to $\R^{m-1}$.
Also, the degree of flexibility in $\bm{A}$ increases with $n$.
\end{remark}


\subsection{Optimal Output Levels}

Given any $\bm{q} \ge \bm{0}$, let $\bm{A}^\dagger(\bm{q})$ and $\bm{A}^\ddagger(\bm{q})$ be any characteristics profiles that maximize $\omega$ and $\pi$, respectively.
Incorporating $\bm{A}^\dagger(\bm{q})$ into $\Omega$, Lemma~\ref{lem_opt_A} implies that the social planner's problem reduces to the following optimization over $\bm{q}$ alone:
\[
\max_{\bm{q}\ge\bm{0}} \; \Omega(\bm{A}^\dagger(\bm{q}), \bm{q}) 
= \alpha \qty(\rho^\dagger(\bm{q}) - \frac{\rho^\dagger(\bm{q})^2}{2}) 
+ \bm{q}^\top \bm{\gamma} - \frac{\bm{q}^\top \bm{q}}{2}.
\]
Similarly, substituting $\bm{A}^\ddagger(\bm{q})$ into $\Pi$, the monopolist's problem reduces to
\[
\max_{\bm{q}\ge\bm{0}} \; \Pi(\bm{A}^\ddagger(\bm{q}), \bm{q})
= \alpha \qty(\rho^\ddagger(\bm{q}) - \rho^\ddagger(\bm{q})^2) 
+ \bm{q}^\top \bm{\gamma} - \bm{q}^\top \bm{q}.
\]

These problems share a close similarity, as they differ only in the coefficients on the quadratic terms.
Indeed, the next proposition shows that the same characteristics profile is optimal for both the social planner and the monopolist, whereas the monopolist's output is scaled down by one half.
As a result, welfare under monopoly amounts to $75\%$ of the social optimum.\footnote{Once the coincidence of $\bm{A}^{\dagger}$ and $\bm{A}^\ddagger$ is established, the remaining implications of Proposition~\ref{prop_sp_mono} come with no surprise given the result of \cite{amiretal2016}; see also Section~3.3 of \cite{chone2020linnemer}.}

\begin{proposition} \label{prop_sp_mono}
Any $(\bm{A}^\ddagger, \bm{q}^\ddagger)$ is optimal for the monopolist if and only if there is $(\bm{A}^\dagger, \bm{q}^\dagger)$ that is optimal for the social planner such that $\bm{A}^\ddagger = \bm{A}^\dagger$ and $\bm{q}^\ddagger = \bm{q}^\dagger/2$.
Consequently, $\frac{\Omega(\bm{A}^\ddagger, \bm{q}^\ddagger)}{\Omega(\bm{A}^\dagger, \bm{q}^\dagger)} = \frac{3}{4}$.
\end{proposition}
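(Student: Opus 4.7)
The plan is to reduce both problems to an optimization over $\bm{q}$ alone via Lemma~\ref{lem_opt_A} and then exploit a scaling identity between the two reduced objectives. Set $V(\bm{q}) \coloneqq \Omega(\bm{A}^\dagger(\bm{q}), \bm{q})$ and $W(\bm{q}) \coloneqq \Pi(\bm{A}^\ddagger(\bm{q}), \bm{q})$. Because $r$ and $R$ are positively homogeneous of degree one, direct inspection of the $\max$--$\min$ expressions in Lemma~\ref{lem_opt_A} yields $\rho^\ddagger(\bm{q}/2) = \rho^\dagger(\bm{q})/2$. Substituting into the definition of $W$ and using $\|\bm{\beta}\|_2 = 1$, I will verify the key identity $W(\bm{q}/2) = V(\bm{q})/2$. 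Hence $\bm{q}^\ddagger$ maximizes $W$ if and only if $\bm{q}^\dagger \coloneqq 2\bm{q}^\ddagger$ maximizes $V$, delivering the output relation $\bm{q}^\ddagger = \bm{q}^\dagger/2$.

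For the characteristics profile, Lemma~\ref{lem_opt_A} pins down the aggregate vector at either optimum: $\bm{A}^\dagger \bm{q}^\dagger = \rho^\dagger(\bm{q}^\dagger)\bm{\beta}$ for the planner and $\bm{A}^\ddagger \bm{q}^\ddagger = \rho^\ddagger(\bm{q}^\ddagger)\bm{\beta} = \rho^\dagger(\bm{q}^\dagger)\bm{\beta}/2$ for the monopolist. Since $\bm{q}^\ddagger = \bm{q}^\dagger/2$, any matrix $\bm{A}$ satisfying one of these identities also satisfies the other, so the two sets of optimal characteristics profiles coincide; this yields the asserted biconditional.

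It remains to compute the welfare ratio. Plugging $\bm{q}^\ddagger = \bm{q}^\dagger/2$ and $\bm{A}^\ddagger \bm{q}^\ddagger = \rho^\dagger \bm{\beta}/2$ into $\Omega$ (with $\rho^\dagger \coloneqq \rho^\dagger(\bm{q}^\dagger)$), routine algebra reduces the target $\Omega^\ddagger = \tfrac{3}{4}\Omega^\dagger$ to the identity $\alpha \rho^\dagger(1-\rho^\dagger) + \bm{q}^{\dagger\top}(\bm{\gamma} - \bm{q}^\dagger) = 0$, which I will verify from the first-order condition $\alpha(1-\rho^\dagger)\bm{v} + \bm{\gamma} - \bm{q}^\dagger = 0$ of the planner's reduced problem, where $\bm{v}$ is the (sub)gradient of $\rho^\dagger$ at $\bm{q}^\dagger$. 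In the interior regime $r(\bm{q}^\dagger) < 1 < R(\bm{q}^\dagger)$, we have $\rho^\dagger = 1$ and $\bm{v} = \bm{0}$, so the FOC gives $\bm{q}^\dagger = \bm{\gamma}$ and both terms vanish separately. In the two boundary regimes where $\rho^\dagger$ locally equals $R(\bm{q})$ or $r(\bm{q})$, the relevant gradient satisfies Euler's identity $\bm{v}^\top \bm{q}^\dagger = \rho^\dagger$ by degree-one homogeneity (explicitly, $\bm{v} = \bm{1}$ or $\bm{v} = 2\bm{e}_{i^*} - \bm{1}$, respectively); combined with the FOC, this yields $\bm{q}^{\dagger\top}(\bm{\gamma} - \bm{q}^\dagger) = -\alpha(1-\rho^\dagger)\rho^\dagger$, exactly canceling the first term.

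The main delicate point will be this last case distinction, since $\rho^\dagger$ is only piecewise smooth along the boundaries of the three regimes of Lemma~\ref{lem_opt_A} and no single gradient formula is available globally. Once a regime is fixed, however, the argument closes in one line via Euler's identity for the degree-one function that locally coincides with $\rho^\dagger$, and the remainder of the proof is straightforward algebraic substitution.
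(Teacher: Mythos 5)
Your proof is correct and, for the main equivalence, follows essentially the same route as the paper: homogeneity of $r$ and $R$ gives $\rho^\ddagger(\bm{q}/2)=\rho^\dagger(\bm{q})/2$, which yields the scaling identity $\Pi(\bm{A}^\ddagger(\bm{q}/2),\bm{q}/2)=\tfrac{1}{2}\Omega(\bm{A}^\dagger(\bm{q}),\bm{q})$ and hence the correspondence $\bm{q}^\ddagger=\bm{q}^\dagger/2$ with coinciding optimal characteristics profiles (your linearity observation that $\bm{A}\bm{q}^\dagger=\rho^\dagger\bm{\beta}$ iff $\bm{A}(\bm{q}^\dagger/2)=\tfrac{\rho^\dagger}{2}\bm{\beta}$ in fact nails the biconditional more explicitly than the paper's one-line remark). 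Where you go beyond the paper is the welfare ratio: the paper's appendix proof does not verify $\Omega^\ddagger/\Omega^\dagger=3/4$ at all (it is treated as a known consequence in the spirit of the quadratic-objective literature), whereas you reduce it to the identity $\alpha\rho^\dagger(1-\rho^\dagger)+\bm{q}^{\dagger\top}(\bm{\gamma}-\bm{q}^\dagger)=0$ and check it regime by regime via the planner's first-order conditions and Euler's identity for the degree-one functions $R$ and $r$; this is correct (the kink cases reduce to $\rho^\dagger=1$, $\bm{q}^\dagger=\bm{\gamma}$, and interiority in $\bm{q}$ is guaranteed by $\bm{\gamma}\gg\bm{0}$), and it supplies a self-contained argument for a step the paper leaves implicit.
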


The next proposition characterizes the social planner's optimal output profile $\bm{q}^\dagger$---and hence, the monopolist's optimum $\bm{q}^\dagger$ via Proposition~\ref{prop_sp_mono}---which in turn determines the configuration of the optimal characteristics profile $\bm{A}^\dagger(\bm{q}^\dagger)$ via Lemmas~\ref{lem_R}--\ref{lem_b}.

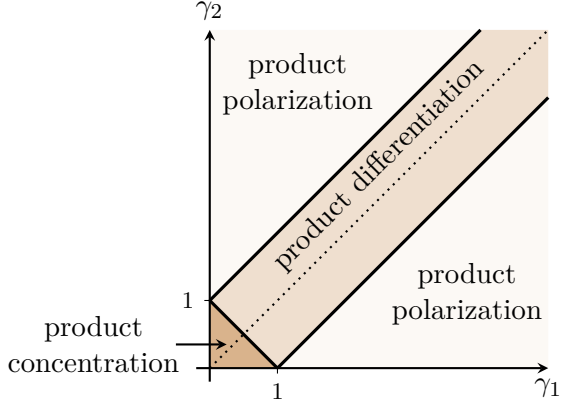
\begin{figure}[t]
\centering
\begin{tikzpicture}[>=stealth, scale=0.9]
\def\max{5}
\useasboundingbox (0,0) rectangle (\max,\max);

\fill[brown!25]
  (0,1) --
  (\max-1,\max) --
  (\max,\max) --
  (\max,\max-1) --
  (1,0) --
  cycle;
  
\fill[brown!60]
  (0,0) --
  (0,1) --
  (1,0) --
  cycle;
  
\fill[brown!5]
  (0,1) --
  (0,\max) --
  (\max-1,\max) --
  cycle;
  
\fill[brown!5]
  (1,0) --
  (\max,0) --
  (\max,\max-1) --
  cycle;

\draw[black, very thick]
  (0,1) -- (1,0);

\draw[black, very thick]
  (1,0) -- (\max,\max-1);

\draw[black, very thick]
  (0,1) -- (\max-1,\max);

\node[rotate=45] at (2.5,3) {product differentiation};

\node[align=center] at (1.3,4.2) {product\\polarization};
\node[align=center] at (3.8,1.1) {product\\polarization};

\node[align=center, anchor=east] at (-0.3,0.35) {product\\concentration};
\draw[->, thick] (-0.5,0.35) -- (0.25,0.35);

\draw[->, thick] (-.2,0) -- (\max,0) node[below] {$\gamma_1$};
\draw[->, thick] (0,-.2) -- (0,\max) node[above] {$\gamma_2$};
\draw (1,0) -- (1,-0.08) node[below, font=\footnotesize] {$1$};
\draw (0,1) -- (-0.08,1) node[left, font=\footnotesize] {$1$};
\draw[dotted, thick] (0,0) -- (\max,\max);

\end{tikzpicture}

\vspace{1em}

\caption{In the case with $n=2$ and $\alpha=1$, each shaded area depicts the set of $(\gamma_1,\gamma_2)$ under which the socially optimal allocation exhibits the corresponding property.}
\label{fig_sp}
\end{figure}

\begin{proposition} \label{prop_sp}
The social planner's optimal output profile $\bm{q}^\dagger$, characteristics profile $\bm{A}^\dagger$, and resulting $\bm{x}^\dagger = \bm{A}^\dagger\bm{q}^\dagger $ are characterized as follows:
\begin{enumerate}[\rm i).]
\item \label{prop_sp1}
If $r(\bm{\gamma}) \le 1 \le R(\bm{\gamma})$, then $\bm{q}^\dagger = \bm{\gamma}$.
In this case, $\bm{x}^\dagger = \bm{\beta}$ holds, and any $\bm{A}^\dagger$ exhibits product differentiation unless there exist $\bm{\sigma} \in \{-1,1\}^n$ such that $\bm{\sigma}^\top \bm{\gamma} = 1$. 

\item \label{prop_sp2}
If $R(\bm{\gamma}) \le 1$, then
\[
q^\dagger_i = \gamma_i + \frac{\alpha \qty(1-R(\bm{\gamma}))}{1+n\alpha}, \quad \forall i \in [n].
\]
In this case, $\bm{x}^\dagger = R(\bm{q}^\dagger)\bm{\beta}$ holds, and $\bm{A}^\dagger$ is unique and exhibits product concentration, with $\bm{a}_i^\dagger = \bm{\beta}$ for all $i \in [n]$.

\item \label{prop_sp3}
If $r(\bm{\gamma}) \ge 1$ with $\gamma_i = \|\bm{\gamma}\|_\infty$, then
\[
q_i = \gamma_i - \frac{\alpha(r(\bm{\gamma})-1)}{1+n\alpha}, \qquad
q_j = \gamma_j + \frac{\alpha(r(\bm{\gamma})-1)}{1+n\alpha}, \quad \forall j \neq i.
\]
In this case, $\bm{x}^\dagger = r(\bm{q}^\dagger)\bm{\beta}$ holds, and $\bm{A}^\dagger$ is unique and exhibits product polarization, with $\bm{a}^\dagger_i = \bm{\beta}$ and $\bm{a}^\dagger_j = -\bm{\beta}$ for all $j \neq i$.
\end{enumerate}
\end{proposition}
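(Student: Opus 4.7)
The plan is to decompose the reduced planner's problem
\[
\max_{\bm{q}\ge\bm{0}} \Omega(\bm{A}^\dagger(\bm{q}),\bm{q})=\alpha\bigl(\rho^\dagger(\bm{q})-\tfrac{1}{2}\rho^\dagger(\bm{q})^2\bigr)+\bm{q}^\top\bm{\gamma}-\tfrac{1}{2}\|\bm{q}\|_2^2
\]
into three subproblems according to which branch of $\rho^\dagger(\bm{q})=\max\{r(\bm{q}),\min\{R(\bm{q}),1\}\}$ is active, solve each by a first-order condition, and then read off the structure of $\bm{A}^\dagger(\bm{q}^\dagger)$ via Lemmas~\ref{lem_R}--\ref{lem_b}. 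Equivalently, I would work with the joint formulation
\[
\max_{\bm{q}\ge\bm{0},\,\bm{x}\in\calD(\bm{q})} \alpha\,\omega(\bm{x})+\bm{q}^\top\bm{\gamma}-\tfrac{1}{2}\|\bm{q}\|_2^2,
\]
whose unconstrained maximizer is $(\bm{q},\bm{x})=(\bm{\gamma},\bm{\beta})$.

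Case~(i) is then immediate: the unconstrained optimum $(\bm{\gamma},\bm{\beta})$ is feasible precisely when $r(\bm{\gamma})\le 1 \le R(\bm{\gamma})$, and the product-differentiation claim follows from Lemma~\ref{lem_b} applied with $\bm{q}=\bm{\gamma}$ and $\|\bm{x}\|_2=1$. For case~(ii), where $R(\bm{\gamma})<1$, the outer constraint $\|\bm{x}\|_2=R(\bm{q})$ binds and $\bm{x}$ stays parallel to $\bm{\beta}$ by Lemma~\ref{lem_opt_A}; substituting $\bm{x}=R(\bm{q})\bm{\beta}$ yields a strictly concave quadratic in $\bm{q}$ with coordinate-wise FOC $q_i=\gamma_i+\alpha(1-R(\bm{q}))$. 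Summing over $i$ gives $R(\bm{q}^\dagger)=(R(\bm{\gamma})+n\alpha)/(1+n\alpha)$ and hence the stated formula; the internal consistency $R(\bm{q}^\dagger)<1$ reduces to $R(\bm{\gamma})<1$, and Lemma~\ref{lem_R} then yields $\bm{a}_i^\dagger=\bm{\beta}$.

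Case~(iii) is the main technical step. Under $r(\bm{\gamma})\ge 1$ together with $\bm{\gamma}\gg\bm{0}$, we have $\gamma_i>\sum_{j\neq i}\gamma_j\ge \gamma_k$ for every $k\neq i$, so the dominant coordinate $\gamma_i=\|\bm{\gamma}\|_\infty$ is unique. Conditional on firm $i$ remaining dominant at the optimum, $r(\bm{q})=2q_i-\bm{1}^\top\bm{q}$ is linear in $\bm{q}$, and substituting $\bm{x}=r(\bm{q})\bm{\beta}$ gives a strictly concave quadratic whose FOCs read $q_i=\gamma_i+\alpha(1-r(\bm{q}))$ and $q_j=\gamma_j-\alpha(1-r(\bm{q}))$ for $j\neq i$. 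Solving as in case~(ii) produces $r(\bm{q}^\dagger)=(r(\bm{\gamma})+n\alpha)/(1+n\alpha)\ge 1$; positivity of each $q^\dagger_k$ follows from $\bm{\gamma}\gg\bm{0}$ and $\alpha>0$, and the characteristics assignment comes from Lemma~\ref{lem_r}.

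The main obstacle is global optimality across cases. For cases~(i) and~(ii), I would drop the inner constraint $\|\bm{x}\|_2\ge r(\bm{q})$ to obtain a strictly concave program on a convex set whose unique KKT point is the case-(i) or case-(ii) candidate, and then verify that the dropped constraint holds at that candidate precisely when $r(\bm{\gamma})\le 1$. Case~(iii) is harder because the binding inner constraint makes the joint feasible set non-convex, so the local FOC argument does not by itself deliver a global optimum; I would close this gap by comparing objective values across the $n$ possible assignments of the dominant firm and against candidates from the other two regions, leveraging the uniqueness of $\gamma_i=\|\bm{\gamma}\|_\infty$ under $r(\bm{\gamma})>1$.
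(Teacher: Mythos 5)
Your treatment of cases (i) and (ii) is sound and in fact takes a slightly different route from the paper: instead of the paper's improvement-path argument (showing any $\bm{q}$ with $R(\bm{q})>1$ is dominated and then applying the FOC on $\{R(\bm{q})\le 1\}$), you relax the inner constraint $\|\bm{x}\|_2\ge r(\bm{q})$ to obtain a strictly concave program over the convex set $\{(\bm{q},\bm{x}):\bm{q}\ge\bm{0},\ \|\bm{x}\|_2\le R(\bm{q})\}$, whose unique KKT point coincides with the case-(i) or case-(ii) candidate and automatically satisfies the dropped constraint (trivially in case (ii), and precisely when $r(\bm{\gamma})\le 1$ in case (i)). That is a legitimate and arguably cleaner argument for those two regimes.

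Case (iii), however, contains a genuine gap, and you have correctly located but not closed it. Your plan of ``comparing objective values across the $n$ possible assignments of the dominant firm and against candidates from the other two regions'' does not work as stated: when $r(\bm{\gamma})>1$ the case-(i) and case-(ii) candidates are infeasible, so there are no closed-form ``candidates from the other two regions'' to compare against; the maximum of $\Omega$ over the set $\{r(\bm{q})\le 1\}$ (where the planner would set $\rho^\dagger(\bm{q})=\min\{R(\bm{q}),1\}$) is attained at a boundary point that your proposal never computes or dominates. Likewise, for a dominant-firm assignment $k$ with $\gamma_k<\|\bm{\gamma}\|_\infty$ the FOC point on $\{r(\bm{q})\ge 1,\ q_k=\|\bm{q}\|_\infty\}$ may be infeasible, in which case the maximum on that region sits on its boundary and a finite comparison of stationary points is not sufficient. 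The paper closes exactly these two holes with short but essential arguments that your proposal lacks: (a) for any $\bm{q}$ with $r(\bm{q})<1$, moving along $\bm{q}^\lambda=(1-\lambda)\bm{q}+\lambda\bm{\gamma}$ strictly raises $\Omega_2$, and by the intermediate value theorem some $\lambda$ gives $r(\bm{q}^\lambda)=1$, where $\Omega_1$ is maximal, so the optimum must satisfy $r(\bm{q})\ge 1$; and (b) a coordinate-swap (rearrangement) argument showing that at the optimum the firm with the highest output must be the firm with the highest $\gamma_i$, since swapping leaves $\Omega_1$ and the quadratic part of $\Omega_2$ unchanged while strictly increasing $\bm{q}^\top\bm{\gamma}$. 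With those two facts the problem reduces to a concave program on the convex region $\{r(\bm{q})\ge 1,\ q_i=\|\bm{q}\|_\infty\}$, where your FOC computation (which matches the paper's) and the appeal to Lemma~\ref{lem_r} complete the proof. Without (a) and (b), your case-(iii) argument establishes only local optimality of the polarization candidate.
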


There are three distinct patterns of characteristics design that can arise at the social planner's optimum.
These are determined by the summary statistics $R(\bm{\gamma})$ and $r(\bm{\gamma})$ of the distribution of firms' standalone values.
A graphical illustration in the duopoly case is provided in Figure~\ref{fig_sp}.

First, product differentiation arises when $R(\bm{\gamma})$ is sufficiently large while $r(\bm{\gamma})$ is not too large, as described in case~\eqref{prop_sp1}.
The requirement on $R(\bm{\gamma})$ implies that firms' standalone values are sufficiently large in the aggregate.
By contrast, the requirement on $r(\bm{\gamma})$ rules out situations in which a single firm has an excessively large standalone value relative to the others.
The latter condition is automatically satisfied in the symmetric case where $\gamma_i$ is constant across firms, or more generally, if there are at least two firms with nearly equal and largest standalone values.
In such environments, no single firm dominates the supply of common characteristics; instead, the planner designate firms to differentiate common characteristics.

Second, product concentration arises when $R(\bm{\gamma})$ is small, as in case~\eqref{prop_sp2}.
This situation corresponds to an economy in which firms' standalone values are collectively weak, so that idiosyncratic characteristics do not provide substantial appeal to the representative consumer.
In this case, products are relatively plain in terms of brand-specific attributes, and the social planner assigns all firms to supply the same ideal common characteristics.
As a result, product concentration emerges, and firms perfectly aligned with the consumer's ideal.

Third, product polarization arises when $r(\bm{\gamma})$ is large, as in case~\eqref{prop_sp3}.
This is an extreme case in which a single firm has a standalone value that is excessively large relative to others.
The dominant firm is designated to produce common characteristics aligned with the consumer's ideal, while the remaining firms are directed to supply common characteristics pointing in the opposite direction.
Nevertheless, these marginal firms continue to generate value through their idiosyncratic characteristics, which remain appreciated by the consumer.
Thus, the dominant firm monopolizes the provision of desirable common characteristics, and the remaining firms specialize in supplying idiosyncratic attributes.


\section{Cournot Oligopoly with Characteristics Design}
\label{sec_eq}

We now turn to analyzing Cournot oligopoly in the present model, in which each firm $i$ simultaneously chooses both a characteristics vector $\bm{a}_i$ and an output level $q_i$ to maximize its profit.
Denote by $\bm{A}_{-i} = (\bm{a}_1,\ldots,\bm{a}_{i-1}, \bm{a}_{i+1},\ldots,\bm{a}_n)$ and $\bm{q}_{-i} = (q_1,\ldots,q_{i-1},q_{i+1},\ldots,q_n)$ generic choice variables of firm $i$'s rival firms.
Substituting the expression for $i$'s markup \eqref{eq_markup} into its profit function \eqref{eq_profit},
\begin{equation} \label{eq_profit_sep}
\Pi_i \qty(\bm{a}_i, q_i; \bm{A}_{-i}, \bm{q}_{-i}) = \alpha q_i \bm{a}_i^\top \qty(\bm{\beta} - \sum_{j \neq i} q_j \bm{a}_j) - (1+\alpha)q_i^2 +  \gamma_i q_i.
\end{equation}
We say that $(\bm{A}^*,\bm{q}^*)$ constitutes a \emph{Cournot-Nash equilibrium}, or an \emph{equilibrium} in short, if
\[
(\bm{a}_i^*, q_i^*) \in \argmax_{\|\bm{a}_i\|_2 = 1,\, q_i \ge 0} \; \Pi_i \qty(\bm{a}_i, q_i; \bm{A}^*_{-i}, \bm{q}^*_{-i}), \quad \forall i \in [n].
\]

As can be seen from \eqref{eq_profit_sep}, the firm $i$'s total profit $\Pi_i$ depends on its common characteristics vector $\bm{a}_i$ solely through the first term,
\[
\pi_i\qty(q_i \bm{a}_i; \textstyle \sum_{j \neq i} q_j \bm{a}_j) \coloneqq q_i \bm{a}_i^\top \qty(\bm{\beta} - \sum_{j \neq i} q_j \bm{a}_j),
\]
which is a function of $q_i \bm{a}_i$ and $\sum_{j \neq i} q_j \bm{a}_j$.
Taking $\sum_{j \neq i} q_j \bm{a}_j$ as given, the value of $\pi_i$ is maximized by choosing $q_i \bm{a}_i$ to be proportional to $\bm{\beta} - \sum_{j \neq i} q_j \bm{a}_j$, which represents the gap between the consumer's ideal vector and the aggregate common characteristics supplied from other firms.
Specifically, firm $i$'s optimal $\bm{a}_i$ is obtained by scaling $\bm{\beta} - \sum_{j \neq i} q_j \bm{a}_j$ to have unit length if $\sum_{j \neq i} q_j \bm{a}_j \neq \bm{\beta}$, while it is arbitrary if $\sum_{j \neq i} q_j \bm{a}_j = \bm{\beta}$.

In both cases, the maximized value of firm $i$'s payoff from common characteristics design is given as the following linear function of $q_i$:
\begin{equation} \label{opt_value_pi}
\max_{\|\bm{a}_i\|_2 = 1} \; \pi_i\qty(q_i \bm{a}_i; \textstyle \sum_{j \neq i} q_j \bm{a}_j) = \left\|\bm{\beta} - \textstyle \sum_{j \neq i} q_j \bm{a}_j \right\|_2 \cdot q_i.
\end{equation}
Moreover, substituting the maximized value \eqref{opt_value_pi} back into \eqref{eq_profit_sep}, we can also determine the firm $i$'s optimal output level as a function of $\sum_{j \neq i} q_j \bm{a}_j$.
This yields a full characterization of equilibrium conditions, recorded as the following lemma.

\begin{lemma} \label{lem_br}
Given any profile $(\bm{A}_{-i}, \bm{q}_{-i})$ of opponents' strategies, firm $i$'s best response strategy $(\bm{a}_i,q_i)$ is characterized as follows:
\[
\delta_i \bm{a}_i = \bm{\beta} - \sum_{j \neq i} q_j \bm{a}_j, \qquad
q_i = \frac{\alpha \delta_i + \gamma_i}{2(1+\alpha)}, \qquad \text{where} \quad \delta_i = \left\|\bm{\beta} - \textstyle \sum_{j \neq i} q_j \bm{a}_j \right\|_2.
\]
Moreover, $(\bm{A}^*,\bm{q}^*)$ constitutes an equilibrium if and only if the following are satisfied:
\begin{align}
2(1+\alpha) \bm{q}^* &\ge \bm{\gamma}; \quad \text{and} \label{eq_cond1} \\
\qty((2+\alpha)q_i^* - \gamma_i) \bm{a}_i^* &= \alpha \qty(\bm{\beta} - \bm{A}^* \bm{q}^*), \quad \forall i \in [n]. \label{eq_cond2}
\end{align}
Finally, for any equilibrium $(\bm{A}^*,\bm{q}^*)$, each firm~$i$'s markup and profit depend solely on $q_i^*$ and are given by $(1+\alpha)q_i^*$ and $(1+\alpha)q_i^{*2}$, respectively.
\end{lemma}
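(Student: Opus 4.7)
The plan is to perform a two-step reduction of firm $i$'s joint optimization over $(\bm{a}_i, q_i)$: first maximize over $\bm{a}_i$ holding $q_i$ fixed, then over $q_i$. Inspecting \eqref{eq_profit_sep}, $\bm{a}_i$ enters $\Pi_i$ only through the inner product $\alpha q_i \bm{a}_i^\top(\bm{\beta} - \sum_{j\neq i} q_j\bm{a}_j)$, so for any fixed $q_i > 0$ the Cauchy--Schwarz inequality (subject to $\|\bm{a}_i\|_2 = 1$) gives the unique maximizer $\bm{a}_i = (\bm{\beta} - \sum_{j\neq i} q_j\bm{a}_j)/\delta_i$ whenever $\delta_i > 0$, or equivalently the scaled identity $\delta_i\bm{a}_i = \bm{\beta} - \sum_{j\neq i}q_j\bm{a}_j$ that also holds trivially in the degenerate case $\delta_i = 0$. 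Substituting the maximized value $\delta_i$ back into \eqref{eq_profit_sep} leaves the strictly concave univariate quadratic $\alpha\delta_i q_i - (1+\alpha)q_i^2 + \gamma_i q_i$ on $q_i \ge 0$, whose first-order condition yields $q_i = (\alpha\delta_i + \gamma_i)/(2(1+\alpha))$; since $\gamma_i > 0$ and $\delta_i \ge 0$, this critical point is strictly positive and thus the unique maximizer on $\R_+$.

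For the equilibrium characterization, necessity of \eqref{eq_cond1} is immediate from the output relation, since $2(1+\alpha)q_i^* = \alpha\delta_i + \gamma_i \ge \gamma_i$. To obtain \eqref{eq_cond2}, I would subtract $q_i^*\bm{a}_i^*$ from the scaled identity, giving $(\delta_i - q_i^*)\bm{a}_i^* = \bm{\beta} - \bm{A}^*\bm{q}^*$, and eliminate $\delta_i$ via $\alpha\delta_i = 2(1+\alpha)q_i^* - \gamma_i$, which leaves the stated relation $((2+\alpha)q_i^* - \gamma_i)\bm{a}_i^* = \alpha(\bm{\beta} - \bm{A}^*\bm{q}^*)$. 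For the converse, given \eqref{eq_cond1}--\eqref{eq_cond2}, I would rewrite \eqref{eq_cond2} as $\bm{\beta} - \bm{A}^*\bm{q}^* = ((2+\alpha)q_i^* - \gamma_i)\bm{a}_i^*/\alpha$ and add $q_i^*\bm{a}_i^*$ to both sides, obtaining $\bm{\beta} - \sum_{j\neq i}q_j^*\bm{a}_j^* = \delta_i\bm{a}_i^*$ with $\delta_i := (2(1+\alpha)q_i^* - \gamma_i)/\alpha \ge 0$; taking norms confirms $\delta_i = \|\bm{\beta} - \sum_{j\neq i}q_j^*\bm{a}_j^*\|_2$, and rearranging the definition of $\delta_i$ reproduces the best-response output formula.

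Finally, for the markup and profit claims, I would substitute the scaled identity into \eqref{eq_markup} to collapse the inner-product term into $\alpha\delta_i$, then apply the first-order relation $\alpha\delta_i + \gamma_i = 2(1+\alpha)q_i^*$ to reduce the markup to $(1+\alpha)q_i^*$; the profit $(1+\alpha)q_i^{*2}$ then follows by multiplying by $q_i^*$. The only subtlety I anticipate is the degenerate case $\delta_i = 0$, in which $\bm{a}_i^*$ is indeterminate from the best-response map alone: here one checks directly that $2(1+\alpha)q_i^* = \gamma_i$ forces $\bm{\beta} - \bm{A}^*\bm{q}^* = -q_i^*\bm{a}_i^*$, so both the scaled identity and \eqref{eq_cond2} hold for any admissible $\bm{a}_i^*$. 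Since the remainder of the argument rests only on Cauchy--Schwarz and a single strictly concave quadratic program, I do not expect any substantial obstacle beyond this bookkeeping.
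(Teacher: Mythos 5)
Your proposal is correct and follows essentially the same route as the paper: a two-step reduction (Cauchy--Schwarz for $\bm{a}_i$, then a concave quadratic in $q_i$), elimination of $\delta_i$ to obtain \eqref{eq_cond1}--\eqref{eq_cond2}, reconstruction of $\delta_i$ for the converse, and substitution into the markup expression (the paper instead left-multiplies \eqref{eq_cond2} by $(\bm{a}_i^*)^\top$, which is algebraically the same step). Your explicit treatment of the degenerate case $\delta_i=0$ is consistent with the paper's implicit handling and raises no issues.
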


The equilibrium characterization in this lemma is obtained by eliminating $\delta_i$ from each firm's best-response conditions, where the inequality condition~\eqref{eq_cond1} follows directly from the nonnegativity of $\delta_i$, as it represents the norm of a vector.
The key restriction is instead condition~\eqref{eq_cond2}, which is informative about the equilibrium pattern of product differentiation.
To see this, since the right-hand side of \eqref{eq_cond2} is identical across firms, any equilibrium must satisfy
\begin{equation} \label{eq_delta}
\bigl((2+\alpha)q_i^* - \gamma_i\bigr)\bm{a}_i^*
=
\bigl((2+\alpha)q_j^* - \gamma_j\bigr)\bm{a}_j^*,
\quad \forall i,j \in [n].
\end{equation}
In particular, since each $\bm{a}_i^*$ is a nonzero vector, $(2+\alpha)q_i^*=\gamma_i$ holds for some $i$ if and only if it holds for every $i$.
This implies that if $(2+\alpha)q_i^*\neq\gamma_i$, the common-characteristics vectors chosen by any two firms must be parallel, i.e., $\bm{a}_i^{*\top}\bm{a}_j^*=\pm1$ for all $i,j\in[n]$.
As such, if all inner products equal $1$, all firms choose identical characteristics vectors and the equilibrium exhibits product concentration.
If instead there exist $i\neq j$ such that $\bm{a}_i^{*\top}\bm{a}_j^*=-1$, the equilibrium exhibits product polarization.

On the flip side, \eqref{eq_delta} also implies that the condition $(2+\alpha)q_i^*=\gamma_i$ is necessary for an equilibrium to exhibit product differentiation, and when this condition holds, the equilibrium output profile is uniquely determined as $\bm{q}^*=\frac{\bm{\gamma}}{2+\alpha}$.
Condition~\eqref{eq_cond2} then dictates the characteristics profile to satisfy $\bm{A}^*\bm{q}^*=\bm{\beta}$, so that the aggregate supply of common characteristics coincides with the consumer's ideal.
Nevertheless, this allocation need not be socially optimal, since the equilibrium output profile differs from the planner's choice.


\begin{figure}[t]
\centering
\begin{minipage}[t]{0.24\textwidth}
\centering
\begin{tikzpicture}[>=stealth, scale=3/5]
\def\xmax{5}
\def\ymax{5}

\fill[brown!25]
  (0,3) --
  (2,\ymax) --
  (\xmax,\ymax) --
  (\xmax,2) --
  (3,0) --
  cycle;

\draw[black, very thick] (0,3) -- (3,0);
\draw[black, very thick] (3,0) -- (\xmax,2);
\draw[black, very thick] (0,3) -- (2,\ymax);

\node[align=center, font=\footnotesize] at (2.75,2.75) {product \\ differentiation};

\draw[->, thick] (-.2,0) -- (\xmax,0) node[below] {$\gamma_1$};
\draw[->, thick] (0,-.2) -- (0,\ymax) node[above] {$\gamma_2$};
\draw (3,0) -- (3,-0.08) node[below, font=\footnotesize] {$3$};
\draw (0,3) -- (-0.08,3) node[left, font=\footnotesize] {$3$};
\draw[dotted] (0,0) -- (\xmax,\ymax);
\end{tikzpicture}
\end{minipage}%
\hfill
\begin{minipage}[t]{0.24\textwidth}
\centering
\begin{tikzpicture}[>=stealth, scale=3/5]
\def\xmax{5}
\def\ymax{5}
\pgfmathsetmacro{\xTop}{(\ymax+12)/4}
\pgfmathsetmacro{\yRight}{(\xmax+12)/4}

\fill[brown!25]
  (0,0) --
  (3,0) --
  (4,4) --
  cycle;

\fill[brown!25]
  (0,0) --
  (4,4) --
  (0,3) --
  cycle;

\draw[black, very thick] (3,0) -- (\xTop,\ymax);
\draw[black, very thick] (0,3) -- (\xmax,\yRight);

\node[align=center, font=\footnotesize] at (1.75,2.25) {product \\ concentration};

\draw[->, thick] (-.2,0) -- (\xmax,0) node[below] {$\gamma_1$};
\draw[->, thick] (0,-.2) -- (0,\ymax) node[above] {$\gamma_2$};
\draw (3,0) -- (3,-0.08) node[below, font=\footnotesize] {$3$};
\draw (0,3) -- (-0.08,3) node[left, font=\footnotesize] {$3$};
\draw[dotted] (0,0) -- (\xmax,\ymax);
\end{tikzpicture}
\end{minipage}%
\hfill
\begin{minipage}[t]{0.24\textwidth}
\centering
\begin{tikzpicture}[>=stealth, scale=3/5]
\def\xmax{5}
\def\ymax{5}
\pgfmathsetmacro{\xTopRed}{(12-\ymax)/4}

\fill[brown!25]
  (0,\ymax) --
  (\xTopRed,\ymax) --
  (3,0) --
  (\xmax,0) --
  (\xmax,\ymax) --
  cycle;

\draw[black, very thick] (\xTopRed,\ymax) -- (3,0);

\node[align=center, font=\footnotesize] at (3.6,4)
  {product \\ polarization \\ (firm 1)};

\draw[->, thick] (-.2,0) -- (\xmax,0) node[below] {$\gamma_1$};
\draw[->, thick] (0,-.2) -- (0,\ymax) node[above] {$\gamma_2$};
\draw (3,0) -- (3,-0.08) node[below, font=\footnotesize] {$3$};
\draw (0,3) -- (-0.08,3) node[left, font=\footnotesize] {$3$};
\draw[dotted] (0,0) -- (\xmax,\ymax);
\end{tikzpicture}
\end{minipage}%
\hfill
\begin{minipage}[t]{0.24\textwidth}
\centering
\begin{tikzpicture}[>=stealth, scale=3/5]
\def\xmax{5}
\def\ymax{5}

\fill[brown!25]
  (0,3) --
  (0,\ymax) --
  (\xmax,\ymax) --
  (\xmax,3-\xmax/4) --
  cycle;

\draw[black, very thick] (0,3) -- (\xmax,3-\xmax/4);

\node[align=center, font=\footnotesize] at (2.5,4)
  {product \\ polarization \\ (firm 2)};

\draw[->, thick] (-.2,0) -- (\xmax,0) node[below] {$\gamma_1$};
\draw[->, thick] (0,-.2) -- (0,\ymax) node[above] {$\gamma_2$};
\draw (3,0) -- (3,-0.08) node[below, font=\footnotesize] {$3$};
\draw (0,3) -- (-0.08,3) node[left, font=\footnotesize] {$3$};
\draw[dotted] (0,0) -- (\xmax,\ymax);
\end{tikzpicture}
\end{minipage}%
\hfill

\caption{
In the case with $n=2$ and $\alpha=1$, each shaded area depicts the set of $(\gamma_1,\gamma_2)$ under which an equilibrium of the corresponding type exists.
}
\label{fig_duopoly_exist}
\end{figure}
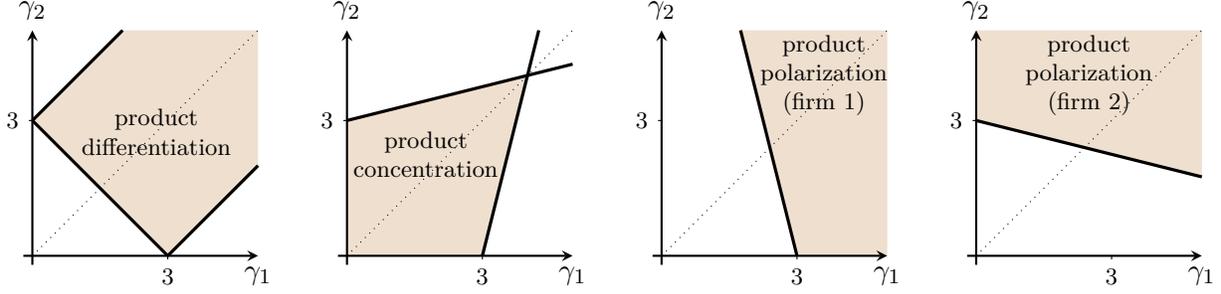

\subsection{Equilibrium with Product Differentiation}

The next proposition characterizes equilibria that exhibit product differentiation.

\begin{proposition} \label{prop_eq_D}
If and only if $r(\bm{\gamma}) \le 2+\alpha \le R(\bm{\gamma})$, there exists an equilibrium $(\bm{A}^{\rm d},\bm{q}^{\rm d})$ such that
\begin{equation} \label{profile_diff}
\bm{A}^{\rm d} \bm{q}^{\rm d} = \bm{\beta}, \qquad \bm{q}^{\rm d} = \frac{\bm{\gamma}}{2+\alpha}.
\end{equation}
\end{proposition}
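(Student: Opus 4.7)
The plan is to exploit the clean structure of the equilibrium characterization in Lemma~\ref{lem_br}, which essentially pins down any candidate ``product-differentiation'' equilibrium, and then reduce the existence question to the geometric feasibility result in Lemma~\ref{lem_donut}.

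First, I would observe that the pair $(\bm{A}^{\rm d},\bm{q}^{\rm d})$ specified in \eqref{profile_diff} makes both sides of the key equilibrium condition \eqref{eq_cond2} identically zero. Indeed, substituting $\bm{q}^{\rm d}=\bm{\gamma}/(2+\alpha)$ gives $(2+\alpha)q_i^{\rm d}-\gamma_i=0$ for every $i\in[n]$, so the left-hand side of \eqref{eq_cond2} vanishes; and by hypothesis $\bm{A}^{\rm d}\bm{q}^{\rm d}=\bm{\beta}$, so the right-hand side $\alpha(\bm{\beta}-\bm{A}^{\rm d}\bm{q}^{\rm d})$ is also zero. Hence \eqref{eq_cond2} holds automatically regardless of the direction of each $\bm{a}_i^{\rm d}$. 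The inequality \eqref{eq_cond1} is then a one-line check: $2(1+\alpha)\bm{q}^{\rm d}=\tfrac{2(1+\alpha)}{2+\alpha}\bm{\gamma}\ge\bm{\gamma}$ since $2(1+\alpha)\ge 2+\alpha$ for $\alpha\ge 0$.

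Next, I would reduce the question of whether such a pair can actually be constructed to a feasibility problem for the aggregate characteristics vector. The only remaining requirement is to find some $\bm{A}^{\rm d}\in\calA$ with $\bm{A}^{\rm d}\bm{q}^{\rm d}=\bm{\beta}$. By Lemma~\ref{lem_donut}, such an $\bm{A}^{\rm d}$ exists if and only if $\bm{\beta}\in\calD(\bm{q}^{\rm d})$, i.e., $r(\bm{q}^{\rm d})\le\|\bm{\beta}\|_2\le R(\bm{q}^{\rm d})$. Using the normalization $\|\bm{\beta}\|_2=1$ together with the homogeneity $R(\bm{q}^{\rm d})=R(\bm{\gamma})/(2+\alpha)$ and $r(\bm{q}^{\rm d})=r(\bm{\gamma})/(2+\alpha)$, this inequality is equivalent to $r(\bm{\gamma})\le 2+\alpha\le R(\bm{\gamma})$, which is exactly the condition stated in the proposition.

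Putting the two directions together: if the condition $r(\bm{\gamma})\le 2+\alpha\le R(\bm{\gamma})$ holds, Lemma~\ref{lem_donut} delivers an $\bm{A}^{\rm d}$ satisfying $\bm{A}^{\rm d}\bm{q}^{\rm d}=\bm{\beta}$, and the verification above confirms that $(\bm{A}^{\rm d},\bm{q}^{\rm d})$ is a Cournot--Nash equilibrium via Lemma~\ref{lem_br}. Conversely, any equilibrium of the form \eqref{profile_diff} requires $\bm{\beta}=\bm{A}^{\rm d}\bm{q}^{\rm d}\in\calD(\bm{q}^{\rm d})$, which forces the same inequality by the scaling argument. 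I do not anticipate a real obstacle here; both the equilibrium conditions and the feasibility set have been prepared in advance, so the argument is essentially an alignment of pre-existing pieces. The only point that warrants care is confirming that \eqref{eq_cond2} is satisfied \emph{trivially} rather than imposing additional restrictions on the individual $\bm{a}_i^{\rm d}$'s---which is precisely why the characterization is so permissive and, as anticipated in Remark~\ref{remark_multiplicity}, leaves substantial flexibility in the equilibrium characteristics profile.
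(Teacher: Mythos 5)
Your proposal is correct and follows essentially the same route as the paper's proof: feasibility of $\bm{A}^{\rm d}\bm{q}^{\rm d}=\bm{\beta}$ via Lemma~\ref{lem_donut} combined with the homogeneity of $r$ and $R$, plus verification of conditions \eqref{eq_cond1}--\eqref{eq_cond2} from Lemma~\ref{lem_br}, which you spell out in slightly more detail than the paper does.
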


\begin{remark}
By Lemma~\ref{lem_b}, this equilibrium exhibits product differentiation generically, unless there exists $\bm{\sigma} \in \{-1,1\}^n$ such that $\bm{\sigma}^\top \bm{\gamma} = 2+\alpha$, so we simply call it a product-differentiation equilibrium.
Moreover, there exists no other equilibrium exhibiting product differentiation since Lemma~\ref{lem_br} implies that any such equilibrium must satisfy \eqref{profile_diff}.
\end{remark}

While there remains some flexibility in the choice of each individual $\bm{a}_i^{\rm d}$, as illustrated in Remark~\ref{remark_multiplicity}, note that \eqref{profile_diff} pins down both the output profile $\bm{q}^{\rm d}$ and the aggregate supply of common characteristics $\bm{A}^{\rm d}\bm{q}^{\rm d}$.
As a result, social welfare is identical across all equilibria that exhibit product differentiation.
Moreover, together with Lemma~\ref{lem_br}, the uniqueness of $\bm{q}^{\rm d}$ implies that each firm's markup and profit are also invariant across all such equilibria.

The parameter region for product differentiation to emerge in oligopoly is related to, but different from, the corresponding region in the social planner's problem.
Specifically, the relevant cutoff shifts from $1$ to $2+\alpha$, which tightens the restriction on $R(\bm{\gamma})$ while relaxing the restriction on $r(\bm{\gamma})$.
To facilitate comparison among the three allocations in the social planner's problem, monopoly, and oligopoly, we focus on the parameter region
\begin{equation} \label{gamma_diff}
r(\bm{\gamma}) \le 1 < 2+\alpha \le R(\bm{\gamma}), \tag{D}
\end{equation}
under which product differentiation arises in all three cases.
Recalling the explicit forms of the output profiles characterized in Propositions~\ref{prop_sp_mono}--\ref{prop_eq_D}, we obtain the following ranking:
\[
\bm{q}^\dagger = \bm{\gamma}
\gg
\bm{q}^\ddagger = \frac{\bm{\gamma}}{2}
\gg
\bm{q}^{\rm d} = \frac{\bm{\gamma}}{2+\alpha}.
\]
That is, the social planner supplies the largest output, followed by the monopolist, while output under oligopoly is the smallest.

The rationale behind this output ranking is well understood by relating it with the equilibrium pattern of product differentiation.
At this point, it is not immediately clear how firms' characteristics choices are positioned relative to $\bm{\beta}$ and to one another since each firm faces a trade-off in its characteristics design.
On the one hand, choosing $\bm{a}_i$ closer to the consumer's ideal vector $\bm{\beta}$ raises demand.
On the other hand, doing so can intensify competition and reduce markups if opponents also attempt to move their characteristics closer to $\bm{\beta}$.

To quantify the overall effect of this trade-off, we consider the weighted average of cosine similarities between $\bm{a}_i$ and $\bm{a}_j$, defined for any characteristics profile $\bm{A}$ as follows:\footnote{The summation is taken over all unordered pairs of distinct firms, since own-firm cosine similarity is always one by normalization. Here the weighted index is considered for technical purposes to prove Corollary~\ref{cor_cosine_diff}, while $\bar{s}_{\bm{\gamma}}$ reduces to the simple average when $\gamma_i$ is identical across all firms.}
\[
\bar{s}_{\bm{\gamma}}(A) \coloneqq \frac{1}{\binom{n}{2}} \sum_{\{i,j\} \subseteq [n]} \gamma_i \gamma_j \bm{a}_i^\top \bm{a}_j .
\]
The next result shows that under condition \eqref{gamma_diff}, the value of $\bar{s}_{\bm{\gamma}}$ is higher in oligopoly than in the social planner's and monopolist's benchmarks, reflecting stronger alignment among firms' characteristics on average.

\begin{corollary} \label{cor_cosine_diff}
Under condition~\eqref{gamma_diff}, we have
\[
\bar{s}_{\bm{\gamma}}(\bm{A}^\dagger) = \bar{s}_{\bm{\gamma}}(\bm{A}^\ddagger) = \frac{1 - \|\bm{\gamma}\|_2^2}{n(n-1)}
< \frac{(2+\alpha)^2 - \|\bm{\gamma}\|_2^2}{n(n-1)} = \bar{s}_{\bm{\gamma}}(\bm{A}^{\rm d}).
\]
\end{corollary}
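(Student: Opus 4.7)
The plan is to reduce each of the three weighted averages $\bar{s}_{\bm{\gamma}}(\bm{A})$ to a single scalar quantity, namely $\|\bm{A}\bm{\gamma}\|_2^2$, and then read off the values using the explicit formulas for $\bm{q}^\dagger$, $\bm{q}^\ddagger$, and $\bm{q}^{\rm d}$ that have already been derived.

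First, I would note the identity
\[
\|\bm{A}\bm{\gamma}\|_2^2
= \sum_{i,j \in [n]} \gamma_i \gamma_j \bm{a}_i^\top \bm{a}_j
= \sum_{i \in [n]} \gamma_i^2 + 2\sum_{\{i,j\} \subseteq [n]} \gamma_i \gamma_j \bm{a}_i^\top \bm{a}_j,
\]
where in the diagonal terms I use $\|\bm{a}_i\|_2 = 1$. Since $\binom{n}{2} = n(n-1)/2$, rearranging gives
\[
\bar{s}_{\bm{\gamma}}(\bm{A}) = \frac{\|\bm{A}\bm{\gamma}\|_2^2 - \|\bm{\gamma}\|_2^2}{n(n-1)},
\]
which reduces the problem to computing $\|\bm{A}^\dagger \bm{\gamma}\|_2$, $\|\bm{A}^\ddagger \bm{\gamma}\|_2$, and $\|\bm{A}^{\rm d} \bm{\gamma}\|_2$.

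Next, I would use the previously established optima. Under condition~\eqref{gamma_diff}, Proposition~\ref{prop_sp}\eqref{prop_sp1} gives $\bm{q}^\dagger = \bm{\gamma}$ with $\bm{A}^\dagger \bm{q}^\dagger = \bm{\beta}$, hence $\bm{A}^\dagger \bm{\gamma} = \bm{\beta}$ and $\|\bm{A}^\dagger \bm{\gamma}\|_2 = \|\bm{\beta}\|_2 = 1$. By Proposition~\ref{prop_sp_mono}, one can take $\bm{A}^\ddagger = \bm{A}^\dagger$, so $\bm{A}^\ddagger \bm{\gamma} = \bm{\beta}$ and $\|\bm{A}^\ddagger \bm{\gamma}\|_2 = 1$ as well. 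For the oligopoly equilibrium, Proposition~\ref{prop_eq_D} yields $\bm{q}^{\rm d} = \bm{\gamma}/(2+\alpha)$ and $\bm{A}^{\rm d} \bm{q}^{\rm d} = \bm{\beta}$, so multiplying through by $2+\alpha$ gives $\bm{A}^{\rm d} \bm{\gamma} = (2+\alpha)\bm{\beta}$ and hence $\|\bm{A}^{\rm d} \bm{\gamma}\|_2^2 = (2+\alpha)^2$. Substituting these into the reduction formula delivers the three equalities in the statement.

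Finally, the strict inequality follows because $\alpha > 0$ implies $2+\alpha > 1$, so $(2+\alpha)^2 > 1$, which makes the numerator strictly larger for $\bm{A}^{\rm d}$ than for $\bm{A}^\dagger$ or $\bm{A}^\ddagger$. There is no real obstacle here: the proof is essentially a one-line reduction combined with three direct substitutions. The only thing worth double-checking is the invocation of Proposition~\ref{prop_sp_mono} in the non-uniqueness case, but since the statement only asserts $\bar{s}_{\bm{\gamma}}(\bm{A}^\ddagger) = \bar{s}_{\bm{\gamma}}(\bm{A}^\dagger)$, it suffices that $\bm{A}^\ddagger \bm{\gamma}$ and $\bm{A}^\dagger \bm{\gamma}$ have the same norm, which they do by the biconditional characterization of the monopolist's optimal profiles.
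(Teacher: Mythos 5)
Your proposal is correct and follows essentially the same route as the paper: the identity $\bar{s}_{\bm{\gamma}}(\bm{A}) = \frac{\|\bm{A}\bm{\gamma}\|_2^2 - \|\bm{\gamma}\|_2^2}{n(n-1)}$ is just a compact restatement of the paper's Lemma~\ref{lem_cosine} (whose proof likewise amounts to expanding $\|d\sum_i \gamma_i \bm{a}_i\|_2^2 = c^2$), and the rest is the same substitution of $\bm{A}^\dagger\bm{\gamma} = \bm{A}^\ddagger\bm{\gamma} = \bm{\beta}$ and $\bm{A}^{\rm d}\bm{\gamma} = (2+\alpha)\bm{\beta}$ from Propositions~\ref{prop_sp_mono}, \ref{prop_sp}, and \ref{prop_eq_D}. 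Your handling of the monopolist's possible multiplicity via the biconditional in Proposition~\ref{prop_sp_mono} is also consistent with how the paper invokes $\bm{A}^\ddagger = \bm{A}^\dagger$.
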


Thus, common characteristics are more concentrated under monopoly than under oligopoly.
This difference is driven by the presence of idiosyncratic characteristics, which are firm-specific and cannot be replicated by rival firms, thereby mitigating firms' concerns about increased competition.
Specifically, condition~\eqref{gamma_diff} implies that these characteristics are sufficiently valuable to the representative consumer.
As a result, each firm is motivated to exploit its standalone value to raise markup by lowering its output level, resulting in the aforementioned output ranking relative to the monopolist benchmark.
At the same time, the high standalone values allow firms to compete aggressively to let them go where the demand is, i.e., they locate common characteristics closer to the consumer's ideal.

To sum up, while condition~\eqref{gamma_diff} ensures that product differentiation arises in equilibrium, the resulting degree of differentiation remains smaller than what is socially optimal.
This naturally raises the question of welfare implications.
Strikingly, we will see that when product differentiation is socially optimal and a weak additional assumption is satisfied, social welfare under monopoly is strictly higher than under oligopoly; see Theorem~\ref{thm_welfare_D}

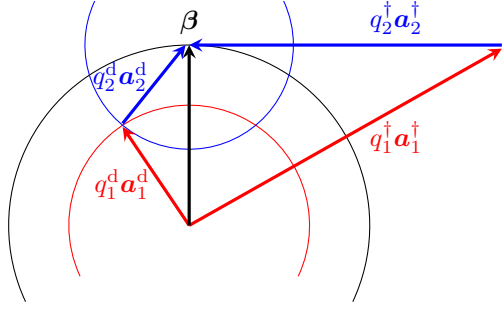
\begin{figure}[t]
\centering
\begin{tikzpicture}[>=stealth, scale=0.8]
\usetikzlibrary{calc,intersections}

\def\R{3}
\def\alpha{1}
\def\gOne{2}
\def\gTwo{sqrt(3)}
\pgfmathsetmacro{\Rred}{\gOne/(2+\alpha) * \R}
\pgfmathsetmacro{\Rblue}{\gTwo/(2+\alpha) * \R}
\def\angUpA{-25}
\def\angUpB{205}
\def\angBlueA{155}
\def\angBlueB{385}

\useasboundingbox (-4.2,-3.8) rectangle (4.2,4.2);

\coordinate (O) at (0,0);
\coordinate (B) at (0,\R);
\coordinate (P) at ({sqrt(3)*\R},\R);

\draw (O) ++(\angUpA:\R)
  arc[start angle=\angUpA, end angle=\angUpB, radius=\R];
\draw[red] (O) ++(\angUpA:\Rred)
  arc[start angle=\angUpA, end angle=\angUpB, radius=\Rred];
\draw[blue] (B) ++(\angBlueA:\Rblue)
  arc[start angle=\angBlueA, end angle=\angBlueB, radius=\Rblue];

\path[name path=rc] (O) circle (\Rred);
\path[name path=bc] (B) circle (\Rblue);
\path[name intersections={of=rc and bc, by={I1,I2}}];

\makeatletter
\newdimen\Xone \newdimen\Xtwo
\pgfextractx{\Xone}{\pgfpointanchor{I1}{center}}
\pgfextractx{\Xtwo}{\pgfpointanchor{I2}{center}}
\ifdim\Xone<\Xtwo \coordinate (I) at (I1); \else \coordinate (I) at (I2); \fi
\makeatother

\coordinate (Ired)  at ($(I)+(0,-0.03)$);
\coordinate (Iblue) at ($(I)+(0,0.03)$);
\coordinate (Bblue) at ($(B)-(0.06,0)$);
\coordinate (Pred)  at ($(P)-(0,0.06)$);

\draw[->, very thick, red]
  (O) -- (Pred) node[midway, right, xshift=5pt, font=\small] {$q_1^\dagger \bm{a}_1^\dagger$};
\draw[->, very thick, blue]
  (P) -- (B) node[midway, above right, xshift=5pt, font=\small] {$q_2^\dagger \bm{a}_2^\dagger$};
\draw[->, very thick, red]
  (O) -- (Ired) node[midway, left, font=\small, xshift=2pt, yshift=-3pt] {$q_1^{\rm d} \bm{a}_1^{\rm d}$};
\draw[->, very thick, blue]
  (Iblue) -- (Bblue) node[midway, left, font=\small, xshift=2pt, yshift=3pt] {$q_2^{\rm d} \bm{a}_2^{\rm d}$};
\draw[->, very thick]
  (O) -- (B) node[above, font=\small] {$\bm{\beta}$};

\end{tikzpicture}
\vspace{-5em}
\caption{Socially optimal and equilibrium product differentiation in the duopoly case with $m=n=2$, $\alpha=1$, $\bm{\beta}=(0,1)$, $\bm{q}^\dagger = \bm{\gamma} = (2,\sqrt{3})$, and $\bm{q}^{\rm d} = \frac{\bm{\gamma}}{2+\alpha} = (\frac{2}{3},\frac{1}{\sqrt{3}})$.}
\label{fig_duopoly_D}
\end{figure}

\begin{example} \label{ex_duopoly}
Consider the duopoly case with two common characteristics, $m=n=2$, with $\alpha=1$ and $\bm{\beta}=(0,1)$.
Suppose that the firms' standalone values are $\gamma_1=2$ and $\gamma_2=\sqrt{3}$.
Then $R(\bm{\gamma})>1$, and Proposition~\ref{prop_sp} implies that the social planner's optimum exhibits product differentiation with $\bm{q}^\dagger=\bm{\gamma}$.
The associated characteristics profile is such that $\bm{a}_1$ and $\bm{a}_2$ form angles $\frac{\pi}{6}$ and $\pi$ on the unit circle, respectively, or their mirror image with respect to the vertical axis.
In either case, the resulting cosine similarity equals $-\frac{\sqrt{3}}{2}$, and the aggregate provision of common characteristics is $q_1^\dagger \bm{a}_1^\dagger + q_2^\dagger \bm{a}_2^\dagger = \bm{\beta}$; see Figure~\ref{fig_duopoly_D}.
The monopolist selects the same characteristics profile but produces at a lower scale, with $\bm{q}^\ddagger=\frac{\bm{\gamma}}{2}$.

Moreover, since $R(\bm{\gamma})>2+\alpha=3$, there exists an equilibrium that also exhibits product differentiation.
By Proposition~\ref{prop_eq_D}, the equilibrium output profile is $\bm{q}^{\rm d}=\frac{\bm{\gamma}}{3}$, which confirms the ranking $\bm{q}^\dagger \gg \bm{q}^\ddagger \gg \bm{q}^{\rm d}$.
In addition, Corollary~\ref{cor_cosine_diff} implies that, in this equilibrium, the cosine similarity between the two firms' characteristics vectors equals $\frac{1}{2\sqrt{3}}$, which is substantially larger than that under the social planner's optimum.
One such equilibrium is depicted in Figure~\ref{fig_duopoly_D}, whereas, again, another equilibrium exists as the mirror image with respect to the vertical axis.
\end{example}


\subsection{Equilibrium with Product Concentration/Polarization}

Next, we consider equilibria that exhibit either product concentration or product polarization, in which each firm's common-characteristics vector is aligned with $\bm{\beta}$ or with its negation $-\bm{\beta}$.
The next proposition characterizes all such equilibria.

\begin{proposition} \label{prop_eq_CP}
Given any $\bm{\sigma} \in \{-1,1\}^n$, let $N_+ = \{i: \sigma_i = 1\}$ and $N_- = \{i: \sigma_i = -1\}$.
There exists an equilibrium such that $\bm{a}^*_i = \sigma_i \bm{\beta}$ for each $i \in [n]$ if and only if
\begin{equation} \label{cond_rank1}
(2+\alpha) - \qty(\frac{2+(n+1)\alpha}{2(1+\alpha)}) \cdot \min_{j \in N_-} \gamma_j \le \bm{\sigma}^\top \bm{\gamma} \le (2+\alpha) + \qty(\frac{2+(n+1)\alpha}{2(1+\alpha)}) \cdot \min_{j \in N_+} \gamma_j.
\end{equation}
In this equilibrium, the output profile is uniquely determined as
\begin{equation} \label{eq_rank1}
\bm{q}^* =
\frac{\bm{\gamma}}{2+\alpha} - \phi(\bm{\sigma}) \cdot \bm{\sigma}, \quad \text{where} \quad \phi(\bm{\sigma}) \coloneqq \frac{\alpha(\bm{\sigma}^\top \bm{\gamma} - (2+\alpha))}{(2+\alpha)(2+(n+1)\alpha)}.
\end{equation}
\end{proposition}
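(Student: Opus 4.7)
The plan is to apply the equilibrium characterization of Lemma~\ref{lem_br} directly to the ansatz $\bm{a}_i^* = \sigma_i \bm{\beta}$. I would first substitute this ansatz into \eqref{eq_cond2}: since $\bm{A}^*\bm{q}^* = (\bm{\sigma}^\top \bm{q}^*)\,\bm{\beta}$ and both sides of \eqref{eq_cond2} are then scalar multiples of $\bm{\beta}$, the vector equation reduces to the scalar identity
\[
\sigma_i\bigl((2+\alpha)q_i^* - \gamma_i\bigr) \;=\; \alpha\bigl(1-\bm{\sigma}^\top \bm{q}^*\bigr), \quad \forall i \in [n].
\]
Denoting the common right-hand side by $K$ yields $q_i^* = \frac{\gamma_i + \sigma_i K}{2+\alpha}$. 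Multiplying by $\sigma_i$ and summing over $i$ gives $\bm{\sigma}^\top \bm{q}^* = \frac{\bm{\sigma}^\top \bm{\gamma} + nK}{2+\alpha}$, and substituting back into $K = \alpha(1-\bm{\sigma}^\top \bm{q}^*)$ solves uniquely for $K = \frac{\alpha((2+\alpha) - \bm{\sigma}^\top \bm{\gamma})}{2+(n+1)\alpha}$. Plugging this back into $q_i^* = \frac{\gamma_i + \sigma_i K}{2+\alpha}$ reproduces \eqref{eq_rank1} and simultaneously shows that $\bm{q}^*$ is uniquely determined by $\bm{\sigma}$.

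Next I would translate condition \eqref{eq_cond1}, namely $2(1+\alpha)q_i^* \ge \gamma_i$, into a restriction on $K$. A direct manipulation using $q_i^* = \frac{\gamma_i + \sigma_i K}{2+\alpha}$ reduces it to $\sigma_i K \ge -\frac{\alpha \gamma_i}{2(1+\alpha)}$ for each $i$, which splits by the sign of $\sigma_i$ into a lower bound $K \ge -\frac{\alpha \min_{j \in N_+} \gamma_j}{2(1+\alpha)}$ from firms in $N_+$ and an upper bound $K \le \frac{\alpha \min_{j \in N_-} \gamma_j}{2(1+\alpha)}$ from firms in $N_-$, where either side is understood to be vacuous if the corresponding set is empty. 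Substituting the closed-form expression for $K$ into these two bounds and rearranging yields exactly the two-sided inequality~\eqref{cond_rank1}.

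Combining the two steps via Lemma~\ref{lem_br}: an equilibrium of the claimed form exists if and only if \eqref{eq_cond1} and \eqref{eq_cond2} admit a solution, which by the above is equivalent to \eqref{cond_rank1}, and in that case $\bm{q}^*$ is forced to coincide with \eqref{eq_rank1}. The derivation is essentially mechanical once Lemma~\ref{lem_br} is in hand; the only point calling for care is the sign bookkeeping when translating \eqref{eq_cond1} into constraints on $K$, where the inequality flips direction between $i \in N_+$ and $i \in N_-$. The boundary case $\bm{\sigma}^\top \bm{\gamma} = 2+\alpha$ (equivalently $K=0$, $\bm{q}^* = \bm{\gamma}/(2+\alpha)$) needs no separate treatment, as the inequalities in \eqref{cond_rank1} are weak and the resulting output profile coincides with the product-differentiation equilibrium of Proposition~\ref{prop_eq_D}.
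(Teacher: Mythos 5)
Your proposal is correct and follows essentially the same route as the paper's proof: both reduce condition \eqref{eq_cond2} of Lemma~\ref{lem_br} under the ansatz $\bm{a}_i^*=\sigma_i\bm{\beta}$ to a scalar identity, solve the resulting linear system by projecting onto $\bm{\sigma}$ (your constant $K=\alpha(1-\bm{\sigma}^\top\bm{q}^*)$ is exactly the common term the paper eliminates in vector form), and then translate \eqref{eq_cond1} into the two-sided bound \eqref{cond_rank1} via the tightest constraints over $N_+$ and $N_-$. The sign bookkeeping and the empty-set convention are handled correctly, so no gaps remain.
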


\begin{remark} \label{remark_min_empty}
Throughout, we adopt the convention that the minimum over an empty set is $+\infty$.
\end{remark}

\begin{remark}
The lower-bound condition in \eqref{cond_rank1} is violated for $\bm{\sigma}=(-1,\ldots,-1)$.
Hence, in any equilibrium characterized in Proposition~\ref{prop_eq_CP}, there must exist at least one firm $i$ such that $\bm{a}_i=\bm{\beta}$.
\end{remark}

This proposition encompasses product concentration as the special case in which $\bm{\sigma}=\bm{1}$.
It also includes the extreme form of product polarization that appears in Proposition~\ref{prop_sp};
let us refer to an equilibrium $(\bm{A}^*,\bm{q}^*)$ as exhibiting \emph{dominant-firm polarization} if $\bm{a}_i^*=\bm{\beta}$ for one and only one firm $i$ such that $\gamma_i=\|\bm{\gamma}\|_\infty$, and $\bm{a}_j^*=-\bm{\beta}$ for all $j\neq i$.
Beyond these cases, however, the proposition reveals additional forms of product polarization that can arise in oligopoly but never occur at the social optimum.

Let $(\bm{A}^{\bm{\sigma}},\bm{q}^{\bm{\sigma}})$ be an equilibrium defined by some $\bm{\sigma}\in\{-1,1\}^n$.
Together with Proposition~\ref{prop_eq_D}, equation~\eqref{eq_rank1} implies that $\bm{q}^{\bm{\sigma}}$ can be expressed as
\[
\bm{q}^{\bm{\sigma}} = \bm{q}^{\rm d} - \phi(\bm{\sigma}) \cdot \bm{\sigma}.
\]
Thus, relative to the product-differentiation equilibrium, each firm~$i$'s output level is shifted by a common magnitude $\phi(\bm{\sigma})$, with the direction of the shift determined by the sign of $\sigma_i$.
Moreover, Lemma~\ref{lem_br} implies that the sign of $\sigma_i\phi(\bm{\sigma})$ determines whether firm~$i$ earns a lower or higher markup and profit relative to the product-differentiation equilibrium.
We will address the welfare implications of this relationship in Theorem~\ref{thm_welfare_DCP}.

Let $(\bm{A}^{\rm c}, \bm{q}^{\rm c})$ denote the special case of product concentration with $\bm{\sigma}=\bm{1}$.
This equilibrium exists if and only if
\begin{equation} \label{cond_conc}
R(\bm{\gamma}) \le (2+\alpha) + \qty(\frac{2+(n+1)\alpha}{2(1+\alpha)}) \cdot \min_{j \in [n]} \gamma_j.
\end{equation}
This condition is markedly milder than the corresponding condition in the social planner's problem since the threshold for $R(\bm{\gamma})$ shifts to $2+\alpha$, rather than $1$, plus an additional slack term, which becomes material when the minimum standalone value is large; for illustration, compare the second panel of Figure~\ref{fig_duopoly_exist} with Figure~\ref{fig_sp}.
This observation is consistent with Corollary~\ref{cor_cosine_diff}, which shows that even when product differentiation arises in oligopoly, the equilibrium characteristics profile is more concentrated than that in the social planner's optimum.

Let us compare the product-concentration equilibrium with the social planner's and monopolist's benchmarks by focusing on the parameter region
\begin{equation} \label{gamma_conc}
R(\bm{\gamma}) \le 1, \tag{C}
\end{equation}
under which product concentration is socially optimal as well.
Under this condition, we obtain the following ranking of aggregate output levels across firms:
\[
R(\bm{q}^\dagger) = \frac{n\alpha+R(\bm{\gamma})}{n\alpha+1}
> R(\bm{q}^{\rm c}) = \frac{n\alpha+R(\bm{\gamma})}{n\alpha + 2 + \alpha}
> R(\bm{q}^\ddagger) = \frac{n\alpha+R(\bm{\gamma})}{2n\alpha+2}.
\]
Namely, the ranking between oligopoly and monopoly is reversed relative to the case of product differentiation.
The degree of underproduction is therefore less severe under oligopoly than under monopoly in the present case, and this will lead to a potential reversal of the social-welfare ranking; see Theorem~\ref{thm_welfare_C}.

Beyond condition~\eqref{gamma_conc}, the product-concentration equilibrium can coexist with the product-differentiation equilibrium when parameter values fall in the intersection of condition~\eqref{cond_conc} and the condition in Proposition~\ref{prop_eq_D}, i.e.,
\begin{equation} \label{gamma_CD}
r(\bm{\gamma}) \le 2+\alpha \le R(\bm{\gamma}) \le (2+\alpha) + \qty(\frac{2+(n+1)\alpha}{2(1+\alpha)}) \cdot \min_{j \in [n]} \gamma_j,
\end{equation}
Moreover, as long as the second inequality is strict, the corresponding output profiles are ranked as
\[
\bm{q}^{\rm d} = \frac{\bm{\gamma}}{2+\alpha} \gg
\bm{q}^{\rm c} = \frac{\bm{\gamma}}{2+\alpha} - \phi(\bm{1}) \cdot \1,
\]
since $\phi(\bm{1}) > 0$ whenever $R(\bm{\gamma}) > 2+\alpha$.
Hence, when both types of equilibria coexist, firms underproduce in the product-concentration equilibrium relative to the product-differentiation equilibrium, resulting in lower markups and profit levels.
This relationship further determines the welfare ranking between these equilibria; see Theorem~\ref{thm_welfare_DCP}~\eqref{thm_welfare_DC}.
We remark that the above comparison between $\bm{q}^{\rm d}$ and $\bm{q}^{\rm c}$ does not contradict the ranking relative to the monopolist benchmark discussed early, since condition~\eqref{gamma_CD} is disjoint from~\eqref{gamma_conc}.

\setcounter{example}{0}
\begin{example}[continued]
For the specified parameter values, the right-hand side of condition~\eqref{cond_conc} becomes $3+\frac{5\sqrt{3}}{4}$, which is greater than $R(\bm{\gamma}) = 2 + \sqrt{3}$.
Therefore, Proposition~\ref{prop_eq_CP} implies that there exists an equilibrium exhibiting product concentration.
The associated output profile is given by $q_1^{\rm c}=\frac{11}{15}-\frac{1}{5\sqrt{3}}$ and $q_2^{\rm c}=\frac{1}{15}+\frac{4}{5\sqrt{3}}$, which is strictly smaller than that in the equilibrium with product differentiation.
Lastly, we note that no equilibrium exhibiting product polarization exists for the current parameter values, since the lower-bound condition in~\eqref{cond_rank1} is violated whenever $\bm{\sigma}\neq(1,1)$.
\end{example}

We turn to another special case of Proposition~\ref{prop_eq_CP}, in which one and only one firm $i$ chooses $\bm{a}_i=\bm{\beta}$.
Unlike in the social planner's problem, such polarization can arise in oligopoly even when firm~$i$ does not have the highest standalone value.
Indeed, substituting the vector $\bm{\sigma}$ with $\sigma_i=1$ and $\sigma_j=-1$ for all $j\neq i$ into \eqref{cond_rank1} yields
\begin{equation} \label{cond_pola}
\tilde{r}_i(\bm{\gamma}) \ge 2+\alpha - \qty(\frac{2+(n+1)\alpha}{2+2\alpha}) \cdot \max_{j\neq i}\gamma_j,
\quad \text{where} \quad
\tilde{r}_i(\bm{\gamma}) \coloneqq \gamma_i - \sum_{j\neq i}\gamma_j .
\end{equation}
This condition imposes a restriction on $\tilde{r}_i(\bm{\gamma})$---i.e., the difference between firm~$i$'s standalone value and the sum of its opponents'---rather than on $r(\bm{\gamma})$.
In addition, the threshold for $\tilde{r}_i(\bm{\gamma})$ shifts to $2+\alpha$, rather than $1$, and is further relaxed by an additional term proportional to the largest standalone value among the opponents.
This term is positive and typically sizable, so that the overall threshold for $\tilde{r}_i(\bm{\gamma})$ can be negative.
Taken together, these imply that multiple equilibria exhibiting product polarization can exist, differing in which firm aligns its common-characteristics choice with the consumer's ideal.
The right panels of Figure~\ref{fig_duopoly_exist} illustrate the relevant parameter regions in the duopoly case.

More generally, there exist multiple product-polarization equilibria that differ in the set of firms aligning positively with $\bm{\beta}$.
This multiplicity points to a source of inefficiency that is specific to oligopoly and absent under monopoly.
In oligopoly, firms with relatively low standalone values---potentially reflecting higher marginal costs---may nonetheless dominate the provision of common characteristics.
Once such firms choose to align positively with $\bm{\beta}$, rival firms may have no incentive to realign their characteristics choices toward $\bm{\beta}$, even when doing so would be socially desirable.
Indeed, we will show that social welfare under such ``inefficient'' product-polarization equilibria is lower than under the product-differentiation equilibrium; see Theorem~\ref{thm_welfare_DCP}~\eqref{thm_welfare_DP}.

\begin{figure}[t!]
\begin{center}
\includegraphics[width=4in]{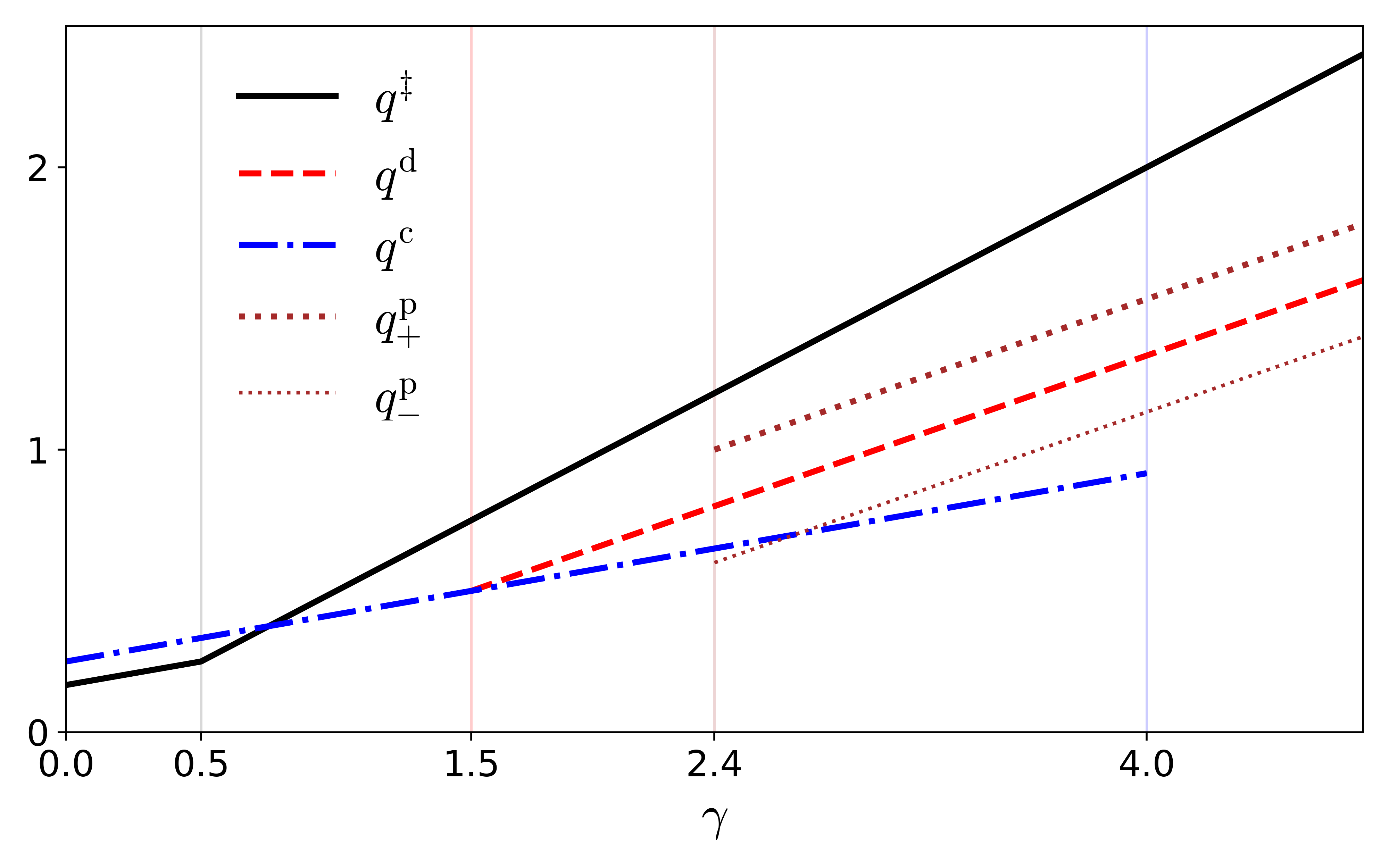}
\caption{In the symmetric case with $n=2$ and $\alpha=1$, the output levels calculated in Example~\ref{ex_sym} are plotted as functions of $\gamma$.}
\label{fig_sym_q}
\end{center}
\end{figure}

\begin{example} \label{ex_sym}
Consider the symmetric case with $\gamma_i=\gamma$ for all $i\in[n]$.
The monopolist's optimal output profile $\bm{q}^\ddagger=q^\ddagger\1$ is symmetric and given by
\[
q^\ddagger = \begin{cases}
\frac{\gamma+\alpha}{2(1+n\alpha)} &\text{if} \quad \gamma \le \frac{1}{n}, \\
\frac{\gamma}{2} &\text{if} \quad \gamma \le \frac{1}{n}.
\end{cases}
\]
Likewise, the product-differentiation equilibrium is symmetric, with its output level and existence condition given by
\[
q^{\rm d}=\frac{\gamma}{2+\alpha} \quad \text{if} \quad \gamma \ge \frac{2+\alpha}{n}.
\]
Also, the product-concentration equilibrium is given by
\[
q^{\rm c}=\frac{\gamma+\alpha}{2+(n+1)\alpha} \quad \text{if} \quad \gamma \le \frac{2(1+\alpha)}{n-1}.
\]
Lastly, assuming that $n$ is even, we consider an equilibrium exhibiting ``even'' product polarization with $N_+ = N_-$.\footnote{Note that this even splitting is the only possible form of product polarization in duopoly.}
This equilibrium is characterized by
\[
q^{\rm p}_{\pm}=\frac{\gamma}{2+\alpha} \pm \frac{\alpha(2+\alpha)}{2+(n+1)\alpha} \quad \text{if} \quad
\gamma \ge \frac{2(1+\alpha)(2+\alpha)}{2+(n+1)\alpha},
\]
where the sign $\pm$ indicates whether each firm's characteristics vector is aligned positively or negatively with $\bm{\beta}$.
Among these three equilibria, the output levels are uniformly ranked as $q^{\rm p}_+ > q^{\rm d} > q^{\rm p}_-$ and $q^{\rm d} \ge q^{\rm c}$ on the relevant parameter regions.
In addition, the monopolist's output level $q^\ddagger$ dominates these equilibrium output levels, except when $\gamma$ is small wherein only the product-concentration equilibrium exists.
Figure~\ref{fig_sym_q} depicts these relationships in the duopoly case.
\end{example}

To conclude this section, we note that the equilibrium characterizations given in Propositions~\ref{prop_eq_D} and~\ref{prop_eq_CP} are complete, in the sense that at least one equilibrium described in these propositions always exists, while no equilibria other than those described there exist.
In fact, a stronger statement holds: for any parameter values, there exists an equilibrium that exhibits one of the patterns that can arise in the social planner's problem.

\begin{theorem} \label{thm_eq}
For any parameter values, there exists at least one equilibrium that exhibits either product differentiation, product concentration, or dominant-firm polarization.
Moreover, there exists no equilibrium other than those characterized in Propositions~\ref{prop_eq_D} and~\ref{prop_eq_CP}.
\end{theorem}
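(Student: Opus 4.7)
The plan is to split Theorem~\ref{thm_eq} into its two assertions: a classification of all equilibria, and the existence of at least one in each parameter region.

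\textbf{Classification.} I would start from the system~\eqref{eq_cond2} in Lemma~\ref{lem_br}. Writing $\lambda_i \coloneqq (2+\alpha)q_i^* - \gamma_i$, each equation reads $\lambda_i \bm{a}_i^* = \alpha(\bm{\beta}-\bm{A}^*\bm{q}^*)$, and the fact that the right-hand side is common across $i$ yields $\lambda_i \bm{a}_i^* = \lambda_j \bm{a}_j^*$ for all $i,j$. Using $\|\bm{a}_k^*\|_2 = 1$, this forces a sharp dichotomy: either every $\lambda_i$ vanishes, in which case $\bm{q}^* = \bm{\gamma}/(2+\alpha)$ and $\bm{A}^*\bm{q}^* = \bm{\beta}$---precisely the form in Proposition~\ref{prop_eq_D}---or every $\lambda_i$ is nonzero, in which case $\bm{a}_i^* = \pm\bm{a}_j^*$, so all characteristics vectors are parallel to a common unit vector $\bm{u}$. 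In the second branch, $\bm{A}^*\bm{q}^*$ is a scalar multiple of $\bm{u}$, and \eqref{eq_cond2} then forces $\bm{\beta}$ itself to be parallel to $\bm{u}$; since both have unit norm, $\bm{u} = \pm\bm{\beta}$, and after absorbing the sign into the $\sigma_i$'s, we recover $\bm{a}_i^* = \sigma_i\bm{\beta}$---the form in Proposition~\ref{prop_eq_CP}. This settles the second assertion.

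\textbf{Existence.} I would partition the parameter space into three exhaustive regions by comparing $R(\bm{\gamma})$ and $r(\bm{\gamma})$ to the cutoff $2+\alpha$: (i) $r(\bm{\gamma})\le 2+\alpha\le R(\bm{\gamma})$, (ii) $R(\bm{\gamma})<2+\alpha$, and (iii) $r(\bm{\gamma})>2+\alpha$. Case (i) is immediate from Proposition~\ref{prop_eq_D}. In case (ii), a product-concentration equilibrium ($\bm{\sigma}=\bm{1}$) exists because \eqref{cond_conc} follows at once from $R(\bm{\gamma})<2+\alpha<(2+\alpha)+\frac{2+(n+1)\alpha}{2(1+\alpha)}\min_j\gamma_j$, using $\bm{\gamma}\gg\bm{0}$.

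\textbf{The main obstacle} is case (iii), which requires building a dominant-firm polarization equilibrium. Since $r(\bm{\gamma})>2+\alpha>0$, the maximizer $i$ of $\gamma_j$ is strictly unique (a tie in the top would drive $r(\bm{\gamma})\le 0$); I would choose $\bm{\sigma}$ with $\sigma_i=1$ and $\sigma_j=-1$ for $j\neq i$, so that $\bm{\sigma}^\top\bm{\gamma}=r(\bm{\gamma})$, and verify the two inequalities in~\eqref{cond_rank1}. The lower inequality is immediate since $r(\bm{\gamma})>2+\alpha$ while the right-hand side lies strictly below $2+\alpha$ thanks to $\min_{j\neq i}\gamma_j>0$. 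The upper inequality, after substituting $r(\bm{\gamma})=2\gamma_i-R(\bm{\gamma})$ and rearranging, reduces to $\frac{2+(3-n)\alpha}{2(1+\alpha)}\gamma_i\le(2+\alpha)+R(\bm{\gamma})$; since the coefficient on $\gamma_i$ is at most $1$ for every $n\ge 1$ (because $(3-n)\alpha\le 2\alpha$) and $\gamma_i\le R(\bm{\gamma})$, this is a routine algebraic check. Proposition~\ref{prop_eq_CP} then delivers the desired dominant-firm polarization equilibrium, completing the existence argument.
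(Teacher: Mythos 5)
Your proposal is correct and follows essentially the same route as the paper: the classification rests on the same dichotomy extracted from \eqref{eq_cond2} (either every $(2+\alpha)q_i^*-\gamma_i$ vanishes, yielding the profile of Proposition~\ref{prop_eq_D}, or none does, forcing all $\bm{a}_i^*$ parallel to a common unit vector which must equal $\pm\bm{\beta}$), and existence is obtained by exhausting the parameter space with the three existence conditions from Propositions~\ref{prop_eq_D} and~\ref{prop_eq_CP}. If anything your write-up is tighter at two points the paper leaves terse: you deduce $\bm{u}=\pm\bm{\beta}$ directly by rewriting \eqref{eq_cond2} as $\alpha\bm{\beta}$ equal to a nonzero scalar multiple of $\bm{u}$ and comparing norms, where the paper instead multiplies the best-response condition by $(\bm{u}+\bm{\beta})^\top$ and argues by cases, and you verify both bounds of \eqref{cond_rank1} for the dominant-firm sign vector explicitly (correctly using $\min_{j\neq i}\gamma_j$), whereas the paper simply asserts that the union of the three existence conditions is exhaustive.
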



\section{Welfare Implications}
\label{sec_welfare}

Throughout this section, whenever we refer to a particular equilibrium characterized in the previous section, we assume that the corresponding parameter conditions for its existence are satisfied.
Let $(\bm{A}^{\rm d}, \bm{q}^{\rm d})$ denote the equilibrium characterized in Proposition~\ref{prop_eq_D}, and let $(\bm{A}^{\bm{\sigma}}, \bm{q}^{\bm{\sigma}})$ be any equilibrium characterized in Proposition~\ref{prop_eq_CP} for some $\bm{\sigma} \in \{-1,1\}^n$.
In particular, let $(\bm{A}^{\rm c}, \bm{q}^{\rm c})$ denote the product-concentration equilibrium with $\bm{\sigma} = \1$.
In addition, let $(\bm{A}^\dagger,\bm{q}^\dagger)$ and $(\bm{A}^\ddagger,\bm{q}^\ddagger)$ denote the social planner's and the monopolist's optimal allocations, respectively.

\subsection{Welfare Comparison Between Monopoly and Oligopoly}

First, we compare oligopoly outcomes with the monopolist benchmark.
The welfare implications of this exercise hinge on the underlying parameters as well as on which equilibrium is played.
In the next two theorems, we compare social welfare under oligopoly and monopoly by focusing on two parameter regions in which both exhibit the same pattern of characteristics design; the first features product differentiation, and the second features product concentration.

\begin{theorem} \label{thm_welfare_D}
Under condition~\eqref{gamma_diff}, we have $\Omega(\bm{A}^{\rm d},\bm{q}^{\rm d}) < \Omega(\bm{A}^\ddagger,\bm{q}^\ddagger)$ if and only if $\|\bm{\gamma}\|_2 > \frac{2+\alpha}{\sqrt{4+3\alpha}}$.
\end{theorem}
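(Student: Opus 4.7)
The plan is a direct computation: under condition~\eqref{gamma_diff} both allocations have closed-form output profiles and aggregate common-characteristics vectors proportional to $\bm{\beta}$, so the welfare levels reduce to explicit functions of $\alpha$ and $\|\bm{\gamma}\|_2^2$, whose comparison yields the stated threshold.

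First, I would use the decomposition
\[
\Omega(\bm{A},\bm{q}) = \alpha\,\omega(\bm{A}\bm{q}) + \bm{q}^\top\bm{\gamma} - \tfrac{1}{2}\|\bm{q}\|_2^2, \qquad \omega(\bm{x}) = \bm{x}^\top\bm{\beta} - \tfrac{1}{2}\|\bm{x}\|_2^2,
\]
together with the characterizations from Propositions~\ref{prop_sp_mono} and~\ref{prop_eq_D}. Under condition~\eqref{gamma_diff}, the social planner's optimum has $\bm{A}^\dagger\bm{q}^\dagger = \bm{\beta}$ with $\bm{q}^\dagger = \bm{\gamma}$, so Proposition~\ref{prop_sp_mono} gives $\bm{A}^\ddagger\bm{q}^\ddagger = \bm{\beta}/2$ with $\bm{q}^\ddagger = \bm{\gamma}/2$. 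The oligopoly equilibrium satisfies $\bm{A}^{\rm d}\bm{q}^{\rm d}=\bm{\beta}$ and $\bm{q}^{\rm d} = \bm{\gamma}/(2+\alpha)$. Using $\|\bm{\beta}\|_2 = 1$, this gives $\omega(\bm{\beta}/2) = 3/8$ and $\omega(\bm{\beta}) = 1/2$.

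Second, I would substitute these into the expression for $\Omega$ and collect terms, obtaining
\[
\Omega(\bm{A}^\ddagger,\bm{q}^\ddagger) = \tfrac{3}{8}\bigl(\alpha + \|\bm{\gamma}\|_2^2\bigr), \qquad
\Omega(\bm{A}^{\rm d},\bm{q}^{\rm d}) = \tfrac{\alpha}{2} + \frac{3+2\alpha}{2(2+\alpha)^2}\,\|\bm{\gamma}\|_2^2.
\]
The inequality $\Omega(\bm{A}^{\rm d},\bm{q}^{\rm d}) < \Omega(\bm{A}^\ddagger,\bm{q}^\ddagger)$ then rearranges, after combining the $\alpha$-terms (which yield a common factor $\alpha/8$) and simplifying the coefficient of $\|\bm{\gamma}\|_2^2$ (which evaluates to $\alpha(4+3\alpha)/[8(2+\alpha)^2]$), to
\[
\|\bm{\gamma}\|_2^2 \cdot \frac{4+3\alpha}{(2+\alpha)^2} > 1,
\]
which is equivalent to $\|\bm{\gamma}\|_2 > (2+\alpha)/\sqrt{4+3\alpha}$.

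There is no real obstacle here; the argument is essentially algebraic once one has the explicit output profiles and the fact that $\bm{A}\bm{q}$ is proportional to $\bm{\beta}$ in both allocations. The only step that requires mild care is simplifying the coefficient of $\|\bm{\gamma}\|_2^2$, where the factor $(4+3\alpha)/(2+\alpha)^2$ that drives the stated threshold emerges from expanding $3(2+\alpha)^2 - 4(3+2\alpha)$; this is what makes the threshold involve $\sqrt{4+3\alpha}$ rather than something simpler.
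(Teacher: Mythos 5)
Your proposal is correct and follows essentially the same route as the paper: under condition~\eqref{gamma_diff} one evaluates $\Omega$ at the monopoly allocation $(\bm{x}^\ddagger,\bm{y}^\ddagger)=(\bm{\beta}/2,\bm{\gamma}/2)$ and at the product-differentiation allocation $(\bm{x}^{\rm d},\bm{y}^{\rm d})=(\bm{\beta},\bm{\gamma}/(2+\alpha))$, obtaining exactly the two closed forms you wrote, and compares them. Your simplification of the coefficient of $\|\bm{\gamma}\|_2^2$ to $\alpha(4+3\alpha)/[8(2+\alpha)^2]$ is right and in fact gives the threshold in the cleaner (and correct) form $\|\bm{\gamma}\|_2^2 > (2+\alpha)^2/(4+3\alpha)$, matching the theorem statement.
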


This theorem shows that social welfare is higher under monopoly than under oligopoly when product differentiation is socially optimal, provided that an additional condition on $\bm{\gamma}$ is satisfied.
This additional condition is fairly weak given that condition~\eqref{gamma_diff} is already imposed.
Indeed, \eqref{gamma_diff} posits that the $\ell_1$ norm of $\bm{\gamma}$ exceeds $2+\alpha$, which is itself larger than the threshold imposed on the $\ell_2$ norm of $\bm{\gamma}$.
While this does not automatically guarantee the additional condition---since the $\ell_2$ norm of any vector is weakly smaller than its $\ell_1$ norm---it suggests that the restriction is unlikely to be binding.
To gain insights, we define the variance of standalone values as
\[
\Var\qty[\bm{\gamma}] \coloneqq \frac{1}{n} \sum_{i=1}^n \qty(\gamma_i - \frac{R(\bm{\gamma})}{n})^2 = \frac{1}{n} \qty(\|\bm{\gamma}\|_2^2 - \frac{R(\bm{\gamma})^2}{n}) .
\]
Then, simple algebra shows that the condition can be rewritten as
\[
\frac{n^2(4+3\alpha)}{(2+\alpha)^2} \cdot \Var\qty[\bm{\gamma}]
> n - (4+3\alpha)\qty(\frac{R(\bm{\gamma})}{2+\alpha})^2 .
\]
Under \eqref{gamma_diff}, the right-hand side is at most $n-(4+3\alpha)$, whereas the left-hand side is nonnegative by definition.
Therefore, the inequality is satisfied if $n<4+3\alpha$, which holds whenever the number of firms is at most $4$.

To gain economic intuition for Theorem~\ref{thm_welfare_D}, it is useful to compare the provision of common and idiosyncratic characteristics across different market structures.
Under monopoly, the characteristics profile $\bm{A}^\ddagger$ coincides with the social planner's choice $\bm{A}^\dagger$, while the output profile $\bm{q}^\ddagger$ is exactly one half of the socially optimal output $\bm{q}^\dagger$.
As a result, both common characteristics $\bm{x}^\ddagger=\bm{A}^\ddagger\bm{q}^\ddagger$ and idiosyncratic characteristics $\bm{y}^\ddagger=\bm{q}^\ddagger$ are provided at one half of their socially optimal levels, i.e., $\bm{\beta}$ and $\bm{\gamma}$, respectively.

By contrast, under oligopoly, Proposition~\ref{prop_eq_D} shows that the aggregate provision of common characteristics attains the socially optimal level, i.e., $\bm{x}^{\rm d}=\bm{A}^{\rm d}\bm{q}^{\rm d}=\bm{\beta}$.
However, this aggregate efficiency conceals a distortion in how common characteristics are produced at the firm level; recall that by Corollary~\ref{cor_cosine_diff}, the equilibrium characteristics profile $\bm{A}^{\rm d}$ is more concentrated toward $\bm{\beta}$ relative to $\bm{A}^\dagger$.
Geometrically, this means that $\bm{\beta}$ is spanned by characteristics vectors that are closer to one another rather than efficiently dispersed; see Figure~\ref{fig_duopoly_D}.
This concentration shortens the total path toward $\bm{\beta}$, meaning that $\bm{\beta}$ is achieved with smaller individual output levels.
Thus, although common characteristics are efficiently provided in aggregate, idiosyncratic characteristics $\bm{y}^{\rm d}=\bm{q}^{\rm d}$ are undersupplied in oligopoly, to a greater extent than under monopoly.

In summary, oligopoly achieves the efficient aggregate provision of common characteristics, but it does so through firms' exploitation of their standalone values, which leads to severe undersupply of idiosyncratic characteristics.
By contrast, monopoly centralizes production and thereby implements a more efficient pattern of product differentiation, despite underproducing in scale.
Theorem~\ref{thm_welfare_D} shows that, under the stated condition, the efficiency gain from centralization dominates the scale distortion, resulting in higher overall welfare under monopoly.

By contrast, the superiority of monopoly in characteristics design over oligopoly disappears when the equilibrium pattern of product differentiation coincides with the socially optimal one.
The next theorem shows that, when product concentration is socially optimal and arises in equilibrium, oligopoly can yield higher welfare than monopoly.

\begin{theorem} \label{thm_welfare_C}
Under condition~\eqref{gamma_conc}, we have $\Omega(\bm{A}^{\rm c},\bm{q}^{\rm c}) > \Omega(\bm{A}^\ddagger,\bm{q}^\ddagger)$ if either $\alpha$ is sufficiently large, $n$ is sufficiently large, or $\Var\qty[\bm{\gamma}]$ is sufficiently small.
\end{theorem}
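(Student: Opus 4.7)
The plan is to exploit the fact that under condition~\eqref{gamma_conc} the two allocations share the same characteristics profile. By Proposition~\ref{prop_sp}~\eqref{prop_sp2} and Proposition~\ref{prop_sp_mono}, $\bm{A}^\ddagger$ coincides with $\bm{A}^\dagger$ and has $\bm{a}_i=\bm{\beta}$ for every $i$; by construction this is also true of $\bm{A}^{\rm c}$. With this common $\bm{A}$, one has $\bm{\Sigma}_{\bm{A}}=\alpha \bm{1}\bm{1}^\top+\bm{I}$, so $\Omega(\bm{A},\cdot)$ is a concave quadratic in $\bm{q}$ whose unique maximizer is $\bm{q}^\dagger$. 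A second-order expansion about $\bm{q}^\dagger$ yields
\[
\Omega(\bm{A}^{\rm c},\bm{q}^{\rm c}) - \Omega(\bm{A}^\ddagger,\bm{q}^\ddagger)
= \frac{1}{2}\Bigl[\|\bm{\Delta}^\ddagger\|_{\bm{\Sigma}_{\bm{A}}}^2 - \|\bm{\Delta}^{\rm c}\|_{\bm{\Sigma}_{\bm{A}}}^2\Bigr],
\qquad \bm{\Delta}^\ddagger\coloneqq \bm{q}^\ddagger-\bm{q}^\dagger,\ \bm{\Delta}^{\rm c}\coloneqq \bm{q}^{\rm c}-\bm{q}^\dagger,
\]
where $\|\bm{v}\|_{\bm{\Sigma}_{\bm{A}}}^2 = \|\bm{v}\|_2^2 + \alpha R(\bm{v})^2$. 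The target welfare ranking is therefore equivalent to $\|\bm{\Delta}^\ddagger\|_{\bm{\Sigma}_{\bm{A}}}^2 > \|\bm{\Delta}^{\rm c}\|_{\bm{\Sigma}_{\bm{A}}}^2$.

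Next, I plug in the closed-form expressions from Propositions~\ref{prop_sp}~\eqref{prop_sp2},~\ref{prop_sp_mono}, and~\ref{prop_eq_CP} (with $\bm{\sigma}=\bm{1}$). Both $\bm{\Delta}^\ddagger$ and $\bm{\Delta}^{\rm c}$ take the form $a\bm{\gamma}+b\bm{1}$ for explicit scalars $(a,b)$ that depend only on $\alpha$, $n$, and $R\coloneqq R(\bm{\gamma})$. Using the identity $\|\bm{\gamma}\|_2^2 = nV + R^2/n$ with $V\coloneqq \Var[\bm{\gamma}]$, a short calculation gives the clean decomposition
\[
\|a\bm{\gamma}+b\bm{1}\|_{\bm{\Sigma}_{\bm{A}}}^2 = a^2 n V + \frac{1+n\alpha}{n}\bigl(aR+nb\bigr)^2.
\]
A direct computation (using the fact that $R(\bm{q}^\dagger)=(R+n\alpha)/(1+n\alpha)$ and the analogous formula $R(\bm{q}^{\rm c})=(R+n\alpha)/(2+(n+1)\alpha)$) shows that in both cases $aR+nb$ factors through the common scalar $(R+n\alpha)/(1+n\alpha)$. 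After cancellation, the desired inequality reduces to
\[
V < F(\alpha,n,R) \coloneqq \frac{(R+n\alpha)^2(n-1)(4+(n+3)\alpha)(2+\alpha)^2}{n^2(1+n\alpha)(2+(n+1)\alpha)^2(4+3\alpha)}.
\]

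I then verify that each of the three sufficient conditions guarantees $V<F$. The small-$V$ case is immediate, since $F>0$ whenever $n\ge 2$. For large $\alpha$, a term-by-term asymptotic gives $F\sim \frac{(n-1)(n+3)\alpha}{3n(n+1)^2}\to\infty$, so the inequality eventually holds. For large $n$, I first note the a~priori bound $V\le R^2(n-1)/n^2$, which follows from $\bm{\gamma}\gg\bm{0}$ and $R\le 1$ under~\eqref{gamma_conc} via $\sum_i \gamma_i^2\le(\max_i\gamma_i)R\le R^2$; combined with the limit $nF\to(2+\alpha)^2/(4+3\alpha)$ and the elementary inequality $(2+\alpha)^2/(4+3\alpha)>1$ for $\alpha>0$, this gives $nV\le R^2\le 1< (2+\alpha)^2/(4+3\alpha) \leftarrow nF$ for all sufficiently large $n$, which delivers $V<F$.

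The main obstacle is the algebraic bookkeeping in passing from the two explicit difference vectors to the single rational inequality $V<F$: one must verify that $aR+nb$ has a parallel factorization in the monopoly and oligopoly cases so that the squared-norm comparison collapses into a clean threshold condition on $V$. Once this reduction is achieved, the three regimes in the theorem follow from routine asymptotic estimates on $F$ together with the a~priori bound on $V$ implied by~\eqref{gamma_conc}.
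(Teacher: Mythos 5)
Your proof is correct, and it reaches exactly the paper's key inequality by a more self-contained route. The paper proves this theorem by invoking a general fixed-$\bm{A}$ welfare comparison (Theorem~\ref{thm_app} in Appendix~\ref{app_q}), which diagonalizes $\bm{\Sigma}=\alpha\bm{J}+\bm{I}$ and $\bm{\psi}=\alpha\1+\bm{\gamma}$ and then, in Example~\ref{example_app}, reduces the ranking to condition~\eqref{app_sym_conc}; the three regimes follow from the bounds $0\le R(\bm{\gamma})\le 1$ and $0\le\Var[\bm{\gamma}]\le (n-1)/n^2$ together with polynomial-degree comparisons. You instead exploit that under~\eqref{gamma_conc} both allocations share $\bm{A}=\bm{\beta}\1^\top$, expand the (exact) concave quadratic $\Omega(\bm{A},\cdot)$ around its maximizer $\bm{q}^\dagger$ (which indeed solves $\bm{\Sigma}_{\bm{A}}\bm{q}^\dagger=\alpha\1+\bm{\gamma}$, as Proposition~\ref{prop_sp}\eqref{prop_sp2} confirms), and compare $\bm{\Sigma}_{\bm{A}}$-norms of the two deviation vectors written as $a\bm{\gamma}+b\1$. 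Your split into the $\1$-direction term $(aR+nb)^2$ and the variance term $a^2nV$ is precisely the spectral decomposition of $\alpha\bm{J}+\bm{I}$ (top eigenvector $\1/\sqrt{n}$, eigenvalue $1+n\alpha$), so your threshold $V<F(\alpha,n,R)$ is algebraically identical to~\eqref{app_sym_conc}; I checked the factorizations $a^\ddagger R+nb^\ddagger=-\tfrac12\cdot\tfrac{R+n\alpha}{1+n\alpha}$ and $a^{\rm c}R+nb^{\rm c}=-\tfrac{1+\alpha}{2+(n+1)\alpha}\cdot\tfrac{R+n\alpha}{1+n\alpha}$, and your large-$\alpha$ and large-$n$ limits, and they are all right. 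What your approach buys is a proof of Theorem~\ref{thm_welfare_C} that does not require the appendix machinery; what it loses is the general criterion~\eqref{cond_app}, which the paper uses to interpret the result (alignment of $\bm{\psi}$ with the major principal components) and which applies beyond the concentration case. Two cosmetic points: your intermediate formula $\|\bm{v}\|_{\bm{\Sigma}_{\bm{A}}}^2=\|\bm{v}\|_2^2+\alpha R(\bm{v})^2$ should read $\alpha(\1^\top\bm{v})^2$, since $R(\cdot)$ is the $\ell_1$ norm and the deviation vectors need not have a uniform sign (your later identity uses the correct quantity $aR+nb$, so nothing breaks); and the final inequality chain in the large-$n$ step should be phrased as $nV\le R^2\tfrac{n-1}{n}<1<\lim_n nF$, with the observation that $nF$ is bounded below uniformly in $R\in[0,1]$ because it is increasing in $R$.
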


The rationale behind this theorem is closely related to that of the standard Cournot model with homogeneous goods.
Indeed, under the product-concentration equilibrium, firms' products are highly similar, though not perfectly homogeneous due to the presence of idiosyncratic characteristics, and thus competition can benefit welfare by reducing firms' collective market power.
As such, each condition in Theorem~\ref{thm_welfare_C} describes a situation in which the environment approaches the homogeneous-products benchmark: a large $\alpha$ implies that product features are largely determined by common characteristics that are chosen identically across firms; a large $n$ intensifies competition; and a small $\Var\qty[\bm{\gamma}]$ implies that differences across products arising from idiosyncratic characteristics are relatively small.

In fact, Theorem~\ref{thm_welfare_C} corresponds to a special case of Theorem~\ref{thm_app} in Appendix~\ref{app_q}, where we conduct a welfare analysis in the baseline model of \cite{pellegrino2025} without characteristics design.
In that setting, social welfare is a quadratic function of the output profile $\bm{q}$ alone, with $\bm{b}_{\bm{A}}-\bm{c}$ and $\bm{\Sigma}_{\bm{A}}$ serving as the first- and second-order coefficients, respectively.
Economically, the former captures baseline demand for each firm's product net of its production cost, while the latter captures patterns of complementarity and substitutability across products.
It turns out that the (im)balance between these two primitives is decisive for the welfare ranking between monopoly and oligopoly.

To illustrate, suppose that firm~$i$'s product features high baseline demand, i.e., the $i$-th component of $\bm{b}_{\bm{A}}-\bm{c}$ is large.
At the same time, suppose that this product is a close substitute for firm~$j$'s product---though not necessarily a perfect substitute---so that the $(i,j)$-entry of $\bm{\Sigma}_{\bm{A}}$ is positive and relatively large.
Under oligopoly, firm~$i$ cares only about its own profit, and the high baseline demand encourages it to produce a large quantity, potentially crowding out demand for firm~$j$'s product.
The resulting allocation need not be socially desirable, since some characteristics that are specific to firm~$j$'s product may then be undersupplied.
By contrast, when a monopolist controls the provision of both products, it internalizes the substitutability between them.
Specifically, since higher output of product~$i$ cannibalizes demand for product~$j$, the monopolist has an incentive to adjust total supply so as to balance sales across the two products.
At the same time, however, the familiar welfare loss from underproduction arises due to the substantial market power granted to the monopolist.
While the overall welfare tradeoff remains ambiguous, we establish that a particular alignment between $\bm{b}_{\bm{A}}-\bm{c}$ and $\bm{\Sigma}_{\bm{A}}$ constitutes a sufficient condition under which oligopoly yields higher social welfare than monopoly.

From a theoretical perspective, this alignment is formalized through the principal-components decomposition of $\bm{\Sigma}_{\bm{A}}$ and the projection of $\bm{b}_{\bm{A}} - \bm{c}$ onto the subspace spanned by the ``major'' components.
As detailed in Appendix~\ref{app_q}, this spectral approach is closely related to recent work on optimal interventions in network games, such as \cite{galeottietal2020, galeottietal2025}, and is of independent theoretical interest.

\subsection{Welfare Comparison Across Equilibria}

Next, we analyze the welfare ranking across different types of equilibria.
The following theorem provides a sharp answer to this exercise by taking the product-differentiation equilibrium as the benchmark for comparison.

\begin{theorem} \label{thm_welfare_DCP}
Given any $\bm{\sigma} \in \{-1,1\}^n$, we have $\Omega(\bm{A}^{\rm d}, \bm{q}^{\rm d}) > \Omega(\bm{A}^{\bm{\sigma}}, \bm{q}^{\bm{\sigma}})$ if either of the following conditions is satisfied:
\[
\bm{\sigma}^\top \bm{\gamma} > 2+\alpha
\qquad \text{or} \qquad
\bm{\sigma}^\top \bm{\gamma}
<
\frac{n\alpha(2+\alpha)}{2(1+\alpha)(2+(n+1)\alpha)+n\alpha}.
\]
In particular:
\begin{enumerate}[\rm i).]
\item \label{thm_welfare_DC}
$\Omega(\bm{A}^{\rm d}, \bm{q}^{\rm d}) \ge \Omega(\bm{A}^{\rm c}, \bm{q}^{\rm c})$, with strict inequality unless $R(\bm{\gamma}) = 2+\alpha$.

\item \label{thm_welfare_DP}
$\Omega(\bm{A}^{\rm d}, \bm{q}^{\rm d}) > \Omega(\bm{A}^{\bm{\sigma}}, \bm{q}^{\bm{\sigma}})$ whenever $\bm{\sigma}^\top \bm{\gamma} \le 0$.
\end{enumerate}
\end{theorem}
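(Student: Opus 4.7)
The plan is to compute $\Delta \coloneqq \Omega(\bm{A}^{\rm d}, \bm{q}^{\rm d}) - \Omega(\bm{A}^{\bm{\sigma}}, \bm{q}^{\bm{\sigma}})$ directly from the closed-form profiles in Propositions~\ref{prop_eq_D} and \ref{prop_eq_CP}, reduce it to a one-variable expression in $s \coloneqq \bm{\sigma}^\top\bm{\gamma}$, and verify positivity under each sufficient condition. From $\bm{A}^{\rm d}\bm{q}^{\rm d} = \bm{\beta}$ with $\bm{q}^{\rm d} = \bm{\gamma}/(2+\alpha)$, and $\bm{a}_i^{\bm{\sigma}} = \sigma_i\bm{\beta}$ with $\bm{q}^{\bm{\sigma}} = \bm{\gamma}/(2+\alpha) - \phi(\bm{\sigma})\,\bm{\sigma}$, one has $\bm{A}^{\bm{\sigma}}\bm{q}^{\bm{\sigma}} = \rho\,\bm{\beta}$ where $\rho \coloneqq \bm{\sigma}^\top\bm{q}^{\bm{\sigma}} = (s+n\alpha)/(2+(n+1)\alpha)$ (using $\bm{\sigma}^\top\bm{\sigma}=n$). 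Plugging these into the decomposition $\Omega = \alpha\,\omega(\bm{A}\bm{q}) + \bm{q}^\top\bm{\gamma} - \tfrac{1}{2}\bm{q}^\top\bm{q}$, the $\|\bm{\gamma}\|_2^2$-dependent pieces cancel and I expect to arrive at
\[
\Delta \;=\; \frac{\alpha\,(1-\rho)^2}{2} \;+\; \frac{n\,\phi(\bm{\sigma})^2}{2} \;+\; \frac{(1+\alpha)\,\phi(\bm{\sigma})\,s}{2+\alpha}.
\]

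The first two terms on the right are manifestly nonnegative, so $\Delta > 0$ holds whenever the cross term $\phi(\bm{\sigma})\,s$ is nonnegative. Since $\phi(\bm{\sigma})$ has the sign of $s - (2+\alpha)$, this covers both $s > 2+\alpha$ (the first sufficient condition, with strict inequality) and $s \le 0$ (which subsumes part~\eqref{thm_welfare_DP}); at the knife-edge $s = 2+\alpha$, $\phi(\bm{\sigma}) = 0$ and $\rho = 1$, so the two equilibria collapse to the same allocation. Part~\eqref{thm_welfare_DC} then reduces to the first regime, because existence of $(\bm{A}^{\rm d}, \bm{q}^{\rm d})$ forces $R(\bm{\gamma}) \ge 2+\alpha$ and $\bm{\sigma} = \bm{1}$ gives $s = R(\bm{\gamma})$; strict inequality fails only when $R(\bm{\gamma}) = 2+\alpha$.

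What remains is the intermediate range $0 < s < 2+\alpha$ covered by the second sufficient condition (which, since $n\alpha(2+\alpha)/[2L+n\alpha] < 2+\alpha$ for $L \coloneqq (1+\alpha)(2+(n+1)\alpha)$, is a strict sub-interval of this window). Here the cross term is negative, so a term-by-term argument fails. The plan is to substitute $\phi(\bm{\sigma}) = -\alpha v/[(2+\alpha)(2+(n+1)\alpha)]$ and $1-\rho = v/(2+(n+1)\alpha)$ with $v \coloneqq (2+\alpha) - s > 0$, clear common denominators, and collapse $\Delta$ into the single-factor form
\[
\Delta \;=\; \frac{\alpha\,v\,\bigl[\,vK - 2Ls\,\bigr]}{2\,(2+\alpha)^2\,(2+(n+1)\alpha)^2}, \qquad K \coloneqq (2+\alpha)^2 + n\alpha.
\]
Positivity then reduces to $s < (2+\alpha)K/(K+2L)$, and the theorem's stated threshold $n\alpha(2+\alpha)/(n\alpha+2L)$ is always weakly below this sharp value because $n\alpha \le K$ is equivalent to $(2+\alpha)^2 \ge 0$. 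Consequently, the stated condition implies $\Delta > 0$.

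The main obstacle is this algebraic collapse into the compact factored form, which requires carefully tracking repeated appearances of $(2+\alpha)$ and $2+(n+1)\alpha$ and assembling the square and cross terms under a common denominator; once this is executed, every remaining step is a transparent sign argument, and the theorem's (possibly non-tight but clean) threshold emerges directly from a one-line comparison with the sharp cutoff.
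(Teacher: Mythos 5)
Your proposal is correct, and the closed forms you assert do check out: with $s=\bm{\sigma}^\top\bm{\gamma}$, $\rho=\frac{s+n\alpha}{2+(n+1)\alpha}$, the $\|\bm{\gamma}\|_2^2$ terms indeed cancel and $\Delta=\frac{\alpha(1-\rho)^2}{2}+\frac{n\phi^2}{2}+\frac{(1+\alpha)\phi s}{2+\alpha}$, which collapses to $\frac{\alpha v\,[vK-2Ls]}{2(2+\alpha)^2(2+(n+1)\alpha)^2}$ with $v=2+\alpha-s$, $K=(2+\alpha)^2+n\alpha$, $L=(1+\alpha)(2+(n+1)\alpha)$; the comparison $n\alpha\le K$ then shows the theorem's threshold is weakly below your sharp cutoff, and your treatment of the knife-edge $s=2+\alpha$ and of parts i) and ii) is sound (the only cosmetic slip is the sentence claiming $\Delta>0$ whenever $\phi s\ge 0$, which fails exactly at that knife-edge you then handle). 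Your route differs from the paper's: the paper never computes $\Delta$ exactly, but instead notes that $\Omega$ is decreasing in the distances $\|\bm{\beta}-\bm{A}\bm{q}\|_2$ and $\|\bm{\gamma}-\bm{q}\|_2$ to the bliss point, uses $\bm{A}^{\rm d}\bm{q}^{\rm d}=\bm{\beta}$ to dispose of the first coordinate, and shows $\|\bm{\gamma}-\bm{q}^{\bm{\sigma}}\|_2^2-\|\bm{\gamma}-\bm{q}^{\rm d}\|_2^2=\phi(\bm{\sigma})\psi(\bm{\sigma})$, so the stated conditions are exactly the sign conditions for $\phi\psi>0$. In your notation this is the decomposition $\Delta=\frac{\alpha}{2}(1-\rho)^2+\frac12\phi\psi$ with the nonnegative first term discarded; the paper's argument is shorter and delivers the stated conditions immediately, while your exact factorization buys more, namely the sharp cutoff $s<\frac{(2+\alpha)K}{K+2L}$ (together with $s>2+\alpha$) characterizing $\Delta>0$, which makes explicit that the theorem's second condition is sufficient but not tight.
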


This theorem provides sufficient conditions on the signed average of firms' standalone values that guarantees equilibrium social welfare is higher under product differentiation than under product concentration or polarization.

\begin{table}[t!]
\centering
\caption{In the symmetric case, the welfare levels under the social optimum $(\Omega^\dagger)$, monopoly $(\Omega^\ddagger)$, oligopoly with product differentiation $(\Omega^{\rm d})$, product concentration $(\Omega^{\rm c})$, and even product polarization $(\Omega^{\rm p})$ are reported.}
\label{table_omega_summary}
\renewcommand{\arraystretch}{1.66}
\begin{tabular}{|c|c|c|c|c|}
\hline
& $\gamma \le \tfrac{1}{n}$ 
& $\frac{1}{n} \le \gamma \le \tfrac{2+\alpha}{n}$ 
& $\frac{2+\alpha}{n} \le \gamma \le \frac{2(1+\alpha)}{n-1}$ 
& $\frac{2(1+\alpha)}{n-1} \le \gamma$ \\
\hline\hline
$\Omega^\dagger$
& $\tfrac{n(\alpha+\gamma)^2}{2(1+n\alpha)}$
& \multicolumn{3}{|c|}{$\tfrac{\alpha+n\gamma^2}{2}$} \\
\hline
$\Omega^\ddagger$
& $\tfrac{3n(\alpha+\gamma)^2}{8(1+n\alpha)}$
& \multicolumn{3}{|c|}{$\tfrac{3(\alpha+n\gamma^2)}{8}$} \\
\hline
$\Omega^{\rm d}$
& \multicolumn{2}{|c|}{NA}
& \multicolumn{2}{|c|}{$\tfrac{\alpha}{2} + \tfrac{n\gamma^2(3+2\alpha)}{2(2+\alpha)^2}$} \\
\hline
$\Omega^{\rm c}$
& \multicolumn{3}{|c|}{$\tfrac{n(\alpha+\gamma)^2(3+(n+2)\alpha)}{2(2+(n+1)\alpha)^2}$}
& NA \\
\hline
$\Omega^{\rm p}$
& \multicolumn{2}{|c|}{NA}
& \multicolumn{2}{|c|}{%
$\begin{cases}
\tfrac{n\gamma^2(3+2\alpha)}{2(2+\alpha)^2} + \tfrac{n\alpha^2(2+\alpha)(2+\alpha-n\alpha^2)}{2(2+(n+1)\alpha)^2}
&\text{if} \quad \gamma \ge \frac{2(1+\alpha)(2+\alpha)}{2+(n+1)\alpha} \\[5pt]
\text{NA}& \text{else}
\end{cases}$} \\
\hline
\end{tabular}
\end{table}

The first implication \eqref{thm_welfare_DC} is that equilibrium social welfare is always higher under product differentiation than under product concentration.
Note that this result does not contradict the welfare rankings relative to the monopolist benchmark established in Theorems~\ref{thm_welfare_D} and \ref{thm_welfare_C}, since those theorems apply to disjoint parameter regions.
Specifically, when the aggregate standalone value $R(\bm{\gamma})$ is relatively small, only the product-concentration equilibrium exists, and in this region, its welfare can exceed that under monopoly.
However, this equilibrium continues to exist even as $R(\bm{\gamma})$ increases and the socially optimal pattern of characteristics design switches to product differentiation.
When this occurs, equilibrium social welfare under product concentration becomes lower than under product differentiation, which in turn is generally lower than welfare under monopoly.

The second implication \eqref{thm_welfare_DP} pertains to the case of an ``inefficient'' product-polarization equilibrium, in which the sum of standalone values of the firms that dominate the provision of common characteristics is lower than that of the remaining firms.
There may be an intuitive sense that such an equilibrium is socially inefficient.
Indeed, firms with relatively large standalone values nevertheless choose characteristics that point in the opposite direction of the consumer's ideal vector $\bm{\beta}$ solely for the purpose of differentiating themselves from the others.
As shown in \eqref{eq_rank1}, this behavior results in reduced output levels for those firms with high standalone values.
Theorem~\ref{thm_welfare_DCP} confirms that social welfare in such polarization equilibria is necessarily lower than that under the product-differentiation equilibrium.

\setcounter{example}{1}
\begin{example}[continued]
In the symmetric case with $\bm{\gamma} = \gamma\1$, we can calculate the welfare level in each situation by substituting the output levels derived earlier into $\Omega$, albeit with tedious algebra.
Table~\ref{table_omega_summary} summarizes the results of these calculations.\footnote{The ordering of the cutoff values for product-concentration and product-polarization equilibria is ambiguous in general, so the two equilibria may or may not coexist depending on $n$ and $\alpha$.}

Focusing further on the case with $n=2$ and $\alpha=1$, Figure~\ref{fig_omega_duopoly} plots welfare levels under monopoly and duopoly relative to the social optimum.
When $\gamma$ is smaller than $0.5$---i.e., when \eqref{gamma_conc} is satisfied in the present case---only the product-concentration equilibrium exists.
Since $\Var\qty[\bm{\gamma}] = 0$, Theorem~\ref{thm_welfare_C} implies that welfare under oligopoly is strictly higher than welfare under monopoly throughout this parameter region.
Once $\gamma$ exceeds $1.5$---i.e., when \eqref{gamma_diff} becomes satisfied---the product-differentiation equilibrium emerges, and Theorem~\ref{thm_welfare_D} implies that welfare under this equilibrium is lower than under monopoly.
As $\gamma$ increases further and exceeds $2.4$, the product-polarization equilibrium emerges.
Among these equilibria, Theorem~\ref{thm_welfare_DCP} implies that the product-differentiation equilibrium yields the highest welfare, although it remains dominated by the monopolist benchmark.
\end{example}

\begin{figure}[t!]
\begin{center}
\includegraphics[width=4in]{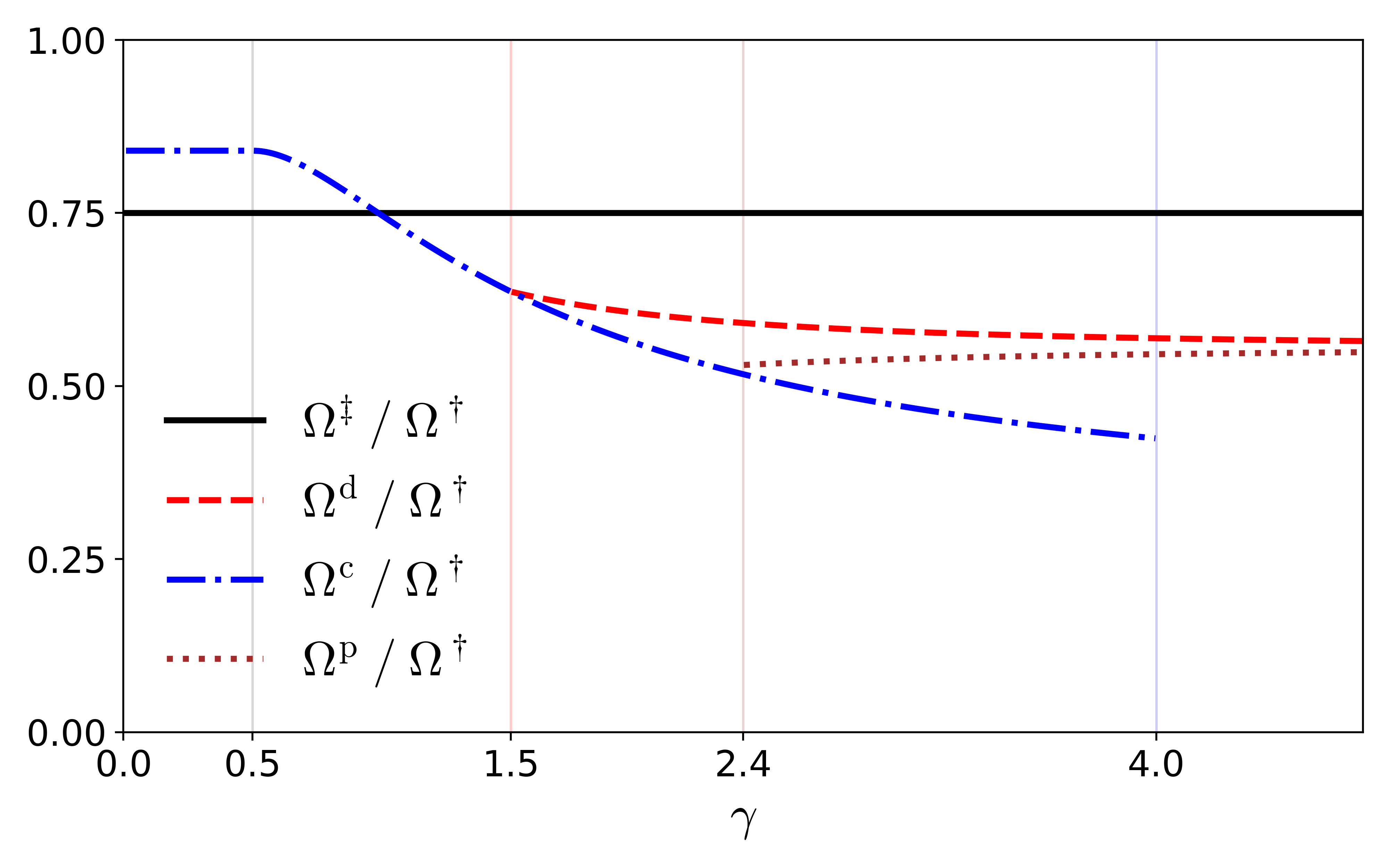}
\caption{In the symmetric case with $n=2$ and $\alpha=1$, welfare levels under monopoly and duopoly are plotted relative to the socially optimal level.}
\label{fig_omega_duopoly}
\end{center}
\end{figure}


\section{Extensions}
\label{sec_ext}

Now, we pursue two extensions of our baseline analysis.
To simplify exposition and to focus on the empirically most relevant case, throughout this section we restrict attention to environments in which product differentiation is socially optimal and arises in equilibrium.

\subsection{Network Effects}
Our baseline model implicitly assumes that there are no complementarity or substitutability across different characteristics.
This is reflected in the representative consumer's utility function, where the coefficients on all cross-quadratic terms are set to zero.

For common characteristics, this assumption is largely innocuous, since any non-orthogonality among common characteristics can be eliminated through an appropriate redefinition of the characteristics.
By contrast, idiosyncratic characteristics are exogenously given objects, and assuming orthogonality among them is potentially less innocuous.
Economically, it is natural to consider situations in which firms' brand values are interrelated, e.g., consumers may prefer to consume products from brands with similar styles in order to maintain consistency, rather than combining products from very different brands.
In such cases, the provision of a product by one firm may enhance or diminish the value of products provided by other firms.\footnote{\cite{singh1984vives} provide a canonical analysis of a differentiated duopoly in which the goods supplied by two firms deliver interrelated value to consumers.}

We capture such network effects among idiosyncratic characteristics by generalizing the consumer's utility function as follows:\footnote{
See Online Appendix~B of \cite{pellegrino2025} for an alternative extension.}
\[
U(\bm{x}, \bm{y}, H)
= \alpha \qty(\bm{x}^\top \bm{\beta} - \frac{1}{2}\bm{x}^\top \bm{x})
+ \qty(\bm{y}^\top \bm{b} - \frac{1}{2}\bm{y}^\top \qty(\bm{I}-\bm{W}) \bm{y})
- H,
\]
where $\bm{W} = [w_{ij}]_{n \times n}$ is a symmetric matrix with zero diagonal entries and spectral radius $\rho(\bm{W})$ strictly less than one.
The matrix $\bm{W}$ captures network effects across idiosyncratic characteristics: $w_{ij} > 0$ corresponds to preferences for joint consumption of products $i$ and $j$, whereas $w_{ij} < 0$ corresponds to dispreferences.

This generalization does not substantially alter either equilibrium or benchmark outputs.
Solving the consumer's problem yields the same demand structure as in the baseline model, with the matrix $\bm{\Sigma}_{\bm A}$ now given by $\bm{A}^\top \bm{A} + \bm{I}-\bm{W}$.
As a result, social welfare can be written as
\begin{equation} \label{eq_welfare_network}
\alpha \qty(\bm{\beta}^\top \bm{x} - \frac{1}{2}\bm{x}^\top \bm{x})
+ \bm{\gamma}^\top \bm{y}
- \frac{1}{2} \bm{y}^\top \qty(\bm{I}-\bm{W}) \bm{y}.
\end{equation}
Moreover, each firm's profit is given by
\begin{equation} \label{eq_profit_network}
\Pi_i \qty(\bm{a}_i, q_i; \bm{A}_{-i}, \bm{q}_{-i}) = \alpha q_i \bm{a}_i^\top \qty(\bm{\beta} - \sum_{j \neq i} q_j \bm{a}_j) - (1+\alpha) q_i^2 + \qty(\gamma_i + \sum_{j \neq i} w_{ij} q_j) q_i,
\end{equation}
which differs from \eqref{eq_profit_sep} only through the last term $\sum_{j \neq i} w_{ij} q_j$.
It follows that the best-response condition for $\bm{a}_i$ remains unchanged, and the optimal $q_i$ continues to be characterized by a first-order condition.
As a result, in any equilibrium exhibiting product differentiation, the provision of common characteristics achieves the first best, i.e., $\bm{A}^{\rm d} \bm{q}^{\rm d} = \bm{\beta}$.
Moreover, the firm's output choice aligns with the equilibrium problem in canonical network games \cite{bcz2006}, leading to a centrality-based characterization.
Following \cite{bonacich1987}, we define the vector of \emph{weighted Bonacich centralities} as
\[
{\bf b}(\delta,\bm{z}) \coloneqq (\bm{I}-\delta \bm{W})^{-1} \bm{z},
\]
where $\delta \in \R$ is a decay parameter satisfying $\delta\rho(\bm{W}) < 1$ and $\bm{z} \in \R^n$ is a given vector.

\begin{proposition} \label{prop_network}
In the presence of network effects among idiosyncratic characteristics, suppose that parameter values are such that product differentiation is socially optimal and arises in equilibrium.
Then, the social planner's, monopolist's, and oligopoly equilibrium output profiles are given by
\[
\bm{q}^\dagger = {\bf b}(1,\bm{\gamma}), \qquad
\bm{q}^\ddagger = {\bf b}\qty(\textstyle 1,\frac{\bm{\gamma}}{2}), \qquad
\bm{q}^{\rm d} = {\bf b}\qty(\textstyle \frac{1}{2+\alpha},\frac{\bm{\gamma}}{2+\alpha}).
\]
In particular, if $w_{ij} \ge 0$ for all $i \neq j$, then $\bm{q}^\ddagger \gg \bm{q}^{\rm d}$.
\end{proposition}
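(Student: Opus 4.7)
The plan is to exploit that $\bm{W}$ enters only the $\bm{y}$-quadratic term, so the common-characteristics subproblem is unchanged. Indeed, the social welfare~\eqref{eq_welfare_network} decomposes as $\alpha\,\omega(\bm{x})+\bm{\gamma}^\top\bm{q}-\tfrac{1}{2}\bm{q}^\top(\bm{I}-\bm{W})\bm{q}$ with $\omega$ as in Section~\ref{sec_sp}, and an analogous decomposition $\alpha\,\pi(\bm{x})+\bm{\gamma}^\top\bm{q}-\bm{q}^\top(\bm{I}-\bm{W})\bm{q}$ holds for aggregate profit. Hence Lemma~\ref{lem_opt_A} continues to pin down the conditionally optimal aggregate vector $\bm{x}$: when product differentiation is socially optimal (resp.\ arises in equilibrium), the planner sets $\bm{x}=\bm{\beta}$ and the monopolist sets $\bm{x}=\bm{\beta}/2$, turning the first terms into the constants $\alpha/2$ and $\alpha/4$. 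Likewise, the best-response argument in Lemma~\ref{lem_br} is unaffected since the additional term $\sum_{j\neq i}w_{ij}q_j$ in \eqref{eq_profit_network} is independent of $\bm{a}_i$, so any product-differentiation equilibrium still satisfies $\bm{A}^{\rm d}\bm{q}^{\rm d}=\bm{\beta}$.

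With this pinned down, each problem reduces to a strictly concave quadratic in $\bm{q}$, and I read off the three output profiles from first-order conditions. The planner's FOC is $\bm{\gamma}=(\bm{I}-\bm{W})\bm{q}$, giving $\bm{q}^\dagger=(\bm{I}-\bm{W})^{-1}\bm{\gamma}={\bf b}(1,\bm{\gamma})$; the monopolist's FOC is $\bm{\gamma}=2(\bm{I}-\bm{W})\bm{q}$, giving $\bm{q}^\ddagger={\bf b}(1,\bm{\gamma}/2)$; both inverses exist since $\rho(\bm{W})<1$. For oligopoly, using $\bm{A}^{\rm d}\bm{q}^{\rm d}=\bm{\beta}$ yields $\sum_{j\neq i}q_j^{\rm d}\bm{a}_j^{\rm d}=\bm{\beta}-q_i^{\rm d}\bm{a}_i^{\rm d}$, hence $\bm{a}_i^{{\rm d}\top}\bigl(\bm{\beta}-\sum_{j\neq i}q_j^{\rm d}\bm{a}_j^{\rm d}\bigr)=q_i^{\rm d}$. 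Substituting this into the FOC of \eqref{eq_profit_network} collapses the nonlinear term and gives $(2+\alpha)q_i^{\rm d}=\gamma_i+[\bm{W}\bm{q}^{\rm d}]_i$, i.e.\ $((2+\alpha)\bm{I}-\bm{W})\bm{q}^{\rm d}=\bm{\gamma}$, which rearranges to $\bm{q}^{\rm d}={\bf b}\bigl(\tfrac{1}{2+\alpha},\tfrac{\bm{\gamma}}{2+\alpha}\bigr)$.

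Finally, for the comparison under $w_{ij}\ge 0$, I would use the Neumann series, valid because $\rho(\bm{W}/(2+\alpha))\le \rho(\bm{W})<1$:
\[
\bm{q}^\ddagger=\tfrac{1}{2}\sum_{k=0}^{\infty}\bm{W}^k\bm{\gamma},
\qquad
\bm{q}^{\rm d}=\sum_{k=0}^{\infty}\frac{\bm{W}^k\bm{\gamma}}{(2+\alpha)^{k+1}}.
\]
Subtracting termwise,
\[
\bm{q}^\ddagger-\bm{q}^{\rm d}=\sum_{k=0}^{\infty}\qty(\frac{1}{2}-\frac{1}{(2+\alpha)^{k+1}})\bm{W}^k\bm{\gamma}.
\]
Since $\alpha>0$, every scalar coefficient is strictly positive; since $\bm{W}\ge\bm{0}$ entrywise and $\bm{\gamma}\gg\bm{0}$, every $\bm{W}^k\bm{\gamma}$ is nonnegative and the $k=0$ term equals $(\tfrac{1}{2}-\tfrac{1}{2+\alpha})\bm{\gamma}\gg\bm{0}$. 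Thus $\bm{q}^\ddagger\gg\bm{q}^{\rm d}$.

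The main obstacle is the oligopoly derivation: one must verify that the FOC identity $\bm{a}_i^{{\rm d}\top}(\bm{\beta}-\sum_{j\neq i}q_j^{\rm d}\bm{a}_j^{\rm d})=q_i^{\rm d}$ still licenses the collapse to a scalar FOC even though individual $\bm{a}_i^{\rm d}$ are only determined up to the flexibility noted in Remark~\ref{remark_multiplicity}. Fortunately, this identity depends only on the aggregate relation $\bm{A}^{\rm d}\bm{q}^{\rm d}=\bm{\beta}$, which is shared by all such equilibria, so the closed-form output profile is uniquely pinned down.
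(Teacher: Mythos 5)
Your proposal is correct and follows essentially the same route as the paper: the unchanged $\bm{a}_i$-best-response forces $\bm{A}^{\rm d}\bm{q}^{\rm d}=\bm{\beta}$ (the paper phrases this as $\delta_i=q_i$ being necessary for product differentiation), the planner's and monopolist's problems reduce to linear first-order conditions in $\bm{q}$ once $\bm{x}=\bm{\beta}$ (resp.\ $\bm{\beta}/2$) is attainable under the stated assumption, and the comparison under $w_{ij}\ge 0$ uses exactly the paper's termwise Neumann-series argument. No gaps beyond the level of detail the paper itself adopts.
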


Each output profile admits a standard network interpretation.
Specifically, the social planner's allocation coincides with the vector of Bonacich centralities of the network $\bm{W}$ with unit decay.
Thus, each firm's output reflects not only its standalone value but also others through network effects, and firms are encouraged or discouraged to produce relative to their standalone benchmarks depending on the signs of $\bm{W}$.\footnote{Since $\bm{q}^\dagger - \bm{\gamma} = \bm{W}\bm{q}^\dagger$ by the definition of Bonacich centrraity, a firm produces more (resp.\ less) than its standalone value when its idiosyncratic characteristic is sufficiently complementary (resp.\ substitutable) with those of other firms.}
Monopoly uniformly scales down this allocation, while oligopoly outputs are both scaled down and associated with a smaller decay parameter, implying faster decay of network effects.
As in the baseline model, the scaling effect reflects underproduction due to market power, whereas the smaller decay parameter under oligopoly captures the fact that firms place relatively greater weight on direct interactions and less weight on indirect interactions that are internalized by the social planner or monopolist.
The overall implication for output levels is particularly transparent when the network encodes only nonnegative effects, in which case $\bm{q}^\ddagger \gg \bm{q}^{\rm d}$ continues to hold.

Under oligopoly, welfare derived from common characteristics continues to attain the first-best level, since the aggregate provision $\bm{A}^{\rm d}\bm{q}^{\rm d}$ remains equal to $\bm{\beta}$.
By contrast, when $\bm{W}$ is nonnegative, welfare derived from idiosyncratic characteristics tends to favor monopoly due to the above-mentioned output ranking.
More generally, the welfare comparison in the idiosyncratic dimension is nested within the welfare analysis without characteristics design developed in Appendix~\ref{app_q}.
The key takeaway is that the relative performance of monopoly and oligopoly depends on the alignment between the network structure $\bm{W}$ and the distribution of standalone values $\bm{\gamma}$.


\subsection{Common Ownership}

Our second extension concerns common ownership, which refers to an ownership structure in which large investors hold significant stakes in multiple firms, and firms act to maximize shareholder returns rather than their own standalone profits.
Such ownership structures often exhibit substantial overlap across competing firms, which creates firms' incentives to compete less aggressively, e.g., by reducing output or raising prices.\footnote{\cite{rotemberg1984} analyzes oligopolistic behavior under common ownership.
\cite{schmalz2018} provides a recent survey of the empirical literature on the anticompetitive effects of common ownership.}

When endogenous product differentiation is considered, however, the overall impact of common ownership on output levels and welfare is a priori ambiguous, since output reduction is not the only strategy firms can use to mitigate competition.
Rather, firms may instead choose a greater degree of product differentiation as a means of softening competition, which in turn raises markups and may strengthen their incentives to produce.
The net effect of common ownership on equilibrium outcomes therefore depends on the interaction between product-design and output-level channels.

We incorporate common ownership into our baseline model by adopting the modeling strategy of \cite{ederer2025pellegrino}.
Specifically, we modify each firm~$i$'s objective function as follows:\footnote{This objective function coincides with that in \cite{ederer2025pellegrino}. It is derived as the reduced form of a weighted aggregate of income accrued by investment funds holding shares in firm~$i$, with weights given by the ownership shares of each fund.}
\[
\tilde{\Pi}_i \coloneqq \Pi_i + \sum_{j \neq i} \kappa_{ij} \Pi_j,
\]
where $\Pi_i$ denotes firm~$i$'s standalone profit from \eqref{eq_profit_sep}, and $\kappa_{ij} \ge 0$ is the common-ownership weight that firm~$i$ assigns to the profit of firm~$j$ relative to its own profit.
Let $\bm{K} = [\kappa_{ij}]_{n \times n}$ be the matrix consisting of these weights, where $\kappa_{ii}=1$ for all $i \in [n]$.
We call $\bm{K}$ an \emph{ownership structure}, and assume that it is an arbitrary square matrix with nonnegative entries, unit diagonal entries, and such that $\frac{\bm{K}+\bm{K}^\top}{2}$ is positive semidefinite.\footnote{All these assumptions are satisfied if $\bm{K}$ is derived through the microfoundation in \cite{ederer2025pellegrino}.}

Substituting $\Pi_i$ from \eqref{eq_profit_sep} into $\tilde{\Pi}_i$, straightforward algebra yields
\begin{equation} \label{eq_profit_owner}
\tilde{\Pi}_i(\bm{a}_i, \bm{q}_i; \bm{A}_{-i}, \bm{q}_{-i}) = \alpha q_i \bm{a}_i^\top \qty(\bm{\beta} - \sum_{j \neq i} \qty(1+\kappa_{ij})q_j \bm{a}_j) - (1+\alpha)q_i^2 + \gamma_i q_i + F_{-i},
\end{equation}
where $F_{-i}$ collects all terms that do not depend on firm~$i$'s choice variables.
Observe that \eqref{eq_profit_owner} differs from \eqref{eq_profit_sep} in that the competitive pressure in characteristics design is now amplified by the ownership weights $\kappa_{ij}$.
The next lemma characterizes firms' best responses and the resulting equilibrium under common ownership.

\begin{lemma} \label{lem_br_owner}
Under common ownership, given any profile $(\bm{A}_{-i}, \bm{q}_{-i})$ of opponents' strategies, firm $i$'s best-response strategy $(\bm{a}_i,q_i)$ is characterized as follows:
\[
\delta_i \bm{a}_i = \bm{\beta} - \sum_{j \neq i} \qty(1+\kappa_{ij})q_j \bm{a}_j, \quad
q_i = \frac{\alpha \delta_i + \gamma_i}{2(1+\alpha)}, \quad \text{where} \quad \delta_i = \left\|\bm{\beta} - \textstyle \sum_{j \neq i} \qty(1+ \kappa_{ij})q_j \bm{a}_j \right\|_2.
\]
\end{lemma}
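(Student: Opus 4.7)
The plan is to repeat the two-step optimization argument behind Lemma~\ref{lem_br}, since the modified profit \eqref{eq_profit_owner} differs from \eqref{eq_profit_sep} only by replacing the opponents' aggregate $\sum_{j\neq i} q_j \bm{a}_j$ with the ownership-weighted aggregate $\sum_{j\neq i}(1+\kappa_{ij})q_j\bm{a}_j$, and by the additive constant $F_{-i}$, neither of which interacts with the structural logic of firm $i$'s best response.

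First I would hold $q_i$ fixed and optimize over $\bm{a}_i$ subject to $\|\bm{a}_i\|_2 = 1$. Write $\bm{v}_i \coloneqq \bm{\beta} - \sum_{j\neq i}(1+\kappa_{ij})q_j\bm{a}_j$, which depends only on opponents' choices. The only term in \eqref{eq_profit_owner} involving $\bm{a}_i$ is $\alpha q_i \bm{a}_i^\top \bm{v}_i$, a linear functional on the unit sphere. By Cauchy--Schwarz (assuming $q_i > 0$, which will be verified in the second step), this is maximized uniquely at $\bm{a}_i = \bm{v}_i/\|\bm{v}_i\|_2$ when $\bm{v}_i \neq \bm{0}$, with maximized value $\alpha q_i \|\bm{v}_i\|_2 = \alpha q_i \delta_i$. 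This yields exactly the first equation in the lemma, namely $\delta_i \bm{a}_i = \bm{v}_i$. In the degenerate case $\delta_i = 0$, any unit vector $\bm{a}_i$ is optimal and the equation holds trivially since both sides are zero.

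Next I would substitute the maximized value into \eqref{eq_profit_owner} to obtain the reduced objective
\[
\tilde{\Pi}_i = \alpha \delta_i q_i - (1+\alpha) q_i^2 + \gamma_i q_i + F_{-i},
\]
where crucially $\delta_i$ is pinned down by opponents' strategies and is independent of $q_i$. Since this is strictly concave in $q_i$ with coefficient $-(1+\alpha) < 0$, the first-order condition $\alpha \delta_i - 2(1+\alpha)q_i + \gamma_i = 0$ gives the unique unconstrained maximizer $q_i = (\alpha \delta_i + \gamma_i)/(2(1+\alpha))$. Because $\delta_i \ge 0$ as a norm and $\gamma_i > 0$ by the standing assumption $\bm{\gamma} \gg \bm{0}$, this maximizer is strictly positive, so the nonnegativity constraint $q_i \ge 0$ does not bind, and in particular the assumption $q_i > 0$ used in the first step is vindicated.

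There is essentially no obstacle here; the proof is a verbatim adaptation of Lemma~\ref{lem_br}'s argument with opponents' outputs $q_j$ rescaled to $(1+\kappa_{ij})q_j$ inside the aggregate against which firm $i$ differentiates. The only point requiring minor care is the compatibility of the two steps on the boundary case $\delta_i = 0$, which is handled as above, and the observation that the ownership weights $\kappa_{ij}$ do not reshape the quadratic-in-$q_i$ term $-(1+\alpha)q_i^2 + \gamma_i q_i$, which remains strictly concave with the same unique interior optimum structure as in the baseline model.
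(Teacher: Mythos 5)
Your proposal is correct and follows essentially the same route as the paper, which proves Lemma~\ref{lem_br_owner} by simply noting that the argument of Lemma~\ref{lem_br} applies verbatim to the modified objective \eqref{eq_profit_owner}, with the opponents' aggregate rescaled by the ownership weights $(1+\kappa_{ij})$. Your write-up just makes that adaptation explicit (alignment of $\bm{a}_i$ with the weighted residual vector via Cauchy--Schwarz, then the concave quadratic in $q_i$ with an interior optimum guaranteed by $\gamma_i>0$), including the degenerate case $\delta_i=0$, exactly as in the baseline proof.
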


The key difference from Lemma~\ref{lem_br} is that the effective residual demand for common characteristics available to firm~$i$, captured by $\delta_i$, now incorporates the effects of common ownership.
As a result, equilibrium product differentiation can be closer to the social optimum, which in turn leads to higher equilibrium output levels.
However, the extent to which common ownership can affect equilibrium outcomes is limited, since each firm continues to retain market power even when it internalizes the profits of other firms.
As an application of Lemma~\ref{lem_br_owner}, the next corollary shows that attaining the first-best outcome remains impossible, even when the ownership structure can be flexibly designed.

\begin{corollary} \label{cor_owner_sp}
There exists no ownership structure under which the equilibrium outcome attains the first-best outcome $(\bm{x}^\dagger, \bm{y}^\dagger) = (\bm{\beta}, \bm{\gamma})$.
\end{corollary}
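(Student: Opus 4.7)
The plan is to proceed by contradiction. Suppose some admissible ownership structure $\bm{K}$ makes the first-best allocation $(\bm{x}^\dagger,\bm{y}^\dagger)=(\bm{\beta},\bm{\gamma})$ an equilibrium. Since we are in the product-differentiation regime (Section~\ref{sec_ext}), this requires $\bm{q}^*=\bm{\gamma}$ together with $\bm{A}^*\bm{\gamma}=\bm{\beta}$. I will substitute these into the best-response characterization of Lemma~\ref{lem_br_owner} and show that the resulting constraints force $-1/\alpha$ to be a real eigenvalue of $\bm{K}$, which is incompatible with the assumption that $(\bm{K}+\bm{K}^\top)/2$ is positive semidefinite.

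The algebraic core proceeds as follows. Using $\bm{\beta}=\sum_j \gamma_j\bm{a}_j$, the residual-demand term in Lemma~\ref{lem_br_owner} collapses as $\bm{\beta}-\sum_{j\neq i}(1+\kappa_{ij})\gamma_j\bm{a}_j=\gamma_i\bm{a}_i-\sum_{j\neq i}\kappa_{ij}\gamma_j\bm{a}_j$. Meanwhile, the output FOC $\gamma_i=(\alpha\delta_i+\gamma_i)/(2(1+\alpha))$ pins down $\delta_i=(1+2\alpha)\gamma_i/\alpha>0$, so the best-response condition $\delta_i\bm{a}_i=\gamma_i\bm{a}_i-\sum_{j\neq i}\kappa_{ij}\gamma_j\bm{a}_j$, after absorbing the diagonal term $\kappa_{ii}\gamma_i\bm{a}_i=\gamma_i\bm{a}_i$, yields the key identity
\[
\sum_{j=1}^{n}\kappa_{ij}\gamma_j\bm{a}_j=-\frac{\gamma_i}{\alpha}\bm{a}_i,\qquad \forall i\in[n].
\]
In matrix form, with $\bm{\Gamma}=\diag(\bm{\gamma})$, this reads $\bm{K}(\bm{\Gamma}\bm{A}^\top)=-(\bm{\Gamma}\bm{A}^\top)/\alpha$.

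To close the argument, note that $\bm{A}$ has unit-norm columns and $\bm{\Gamma}$ is invertible, so $\bm{\Gamma}\bm{A}^\top$ is nonzero and admits at least one nonzero column $\bm{m}\in\R^n$ satisfying $\bm{K}\bm{m}=-\bm{m}/\alpha$. Then $\bm{m}^\top\bm{K}\bm{m}=-\|\bm{m}\|_2^2/\alpha<0$, whereas positive semidefiniteness of $(\bm{K}+\bm{K}^\top)/2$ gives $\bm{m}^\top\bm{K}\bm{m}=\bm{m}^\top\tfrac{\bm{K}+\bm{K}^\top}{2}\bm{m}\ge 0$, a contradiction.

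The main obstacle I anticipate is the algebraic reduction of the two first-order conditions into the single eigenvalue identity $\bm{K}(\bm{\Gamma}\bm{A}^\top)=-(\bm{\Gamma}\bm{A}^\top)/\alpha$; once this is in hand, the conclusion is a standard Rayleigh-quotient argument that uses only the structural hypotheses on $\bm{K}$ stated in Section~\ref{sec_ext} and requires no further control on $\bm{A}$ beyond unit-norm columns. In particular, the argument does not rely on any specific geometry of the equilibrium characteristics profile, which is reassuring given the multiplicity observed in Remark~\ref{remark_multiplicity}.
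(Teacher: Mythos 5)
Your proof is correct and follows essentially the same route as the paper: both substitute the first-best allocation $(\bm{q}^*,\bm{A}^*\bm{q}^*)=(\bm{\gamma},\bm{\beta})$ into the best-response conditions of Lemma~\ref{lem_br_owner} and arrive at the same key identity $\gamma_i\bm{a}_i^* + \alpha\sum_{j}\kappa_{ij}\gamma_j\bm{a}_j^* = \bm{0}$, then derive a contradiction with positive semidefiniteness of $\tfrac{\bm{K}+\bm{K}^\top}{2}$. The only difference is cosmetic and lies in the closing step: you rewrite the identity as $\bm{K}(\bm{\Gamma}\bm{A}^\top)=-\tfrac{1}{\alpha}\bm{\Gamma}\bm{A}^\top$ and apply a Rayleigh-quotient argument to a single nonzero column, whereas the paper multiplies by $(\gamma_i\bm{a}_i^*)^\top$, sums over $i$, and invokes nonnegativity of the Frobenius inner product of positive semidefinite matrices.
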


An explicit derivation of equilibrium becomes more challenging under common ownership.
This is because the right-hand side of the best-response condition for $\bm{a}_i$ is no longer identical across firms, which prevents us from deriving an equation analogous to \eqref{eq_delta} that relates the common characteristics choices of any two firms.
To overcome this difficulty, let us focus on a special case in which ownership weights are identical across all pairs of distinct firms.
Needless to say, this assumption is hardly justifiable from an empirical perspective and is imposed solely for tractability.\footnote{As documented in \cite{ederer2025pellegrino}, ownership networks typically exhibit a hub-and-spoke structure, in which a subset of firms share significant ownership links while others remain largely unconnected at the periphery, indicating a high degree of asymmetry across firms.}

\begin{proposition} \label{prop_owner}
Under common ownership with $\kappa_{ij} = \kappa$ for all $i \neq j$, any equilibrium $(\bm{A}^{\rm d}, \bm{q}^{\rm d})$ that exhibits product differentiation is characterized as follows:
\begin{equation} \label{profile_owner}
\bm{A}^{\rm d}\bm{q}^{\rm d} = \frac{\bm{\beta}}{1+\kappa}, \qquad
\bm{q}^{\rm d} = \frac{\bm{\gamma}}{2+\alpha(1-\kappa)}, \quad \text{thus} \quad \bar{s}_{\bm \gamma}(\bm{A}^{\rm d}) = \frac{\qty(\frac{2+\alpha(1-\kappa)}{1+\kappa})^2 - \|\bm{\gamma}\|_2^2}{n(n-1)}.
\end{equation}
This equilibrium exists if and only if
\begin{equation} \label{cond_owner}
r(\bm{\gamma}) \le \frac{2+\alpha(1-\kappa)}{1+\kappa} \le R(\bm{\gamma}).
\end{equation}
\end{proposition}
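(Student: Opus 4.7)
The plan is to mirror the derivation of Proposition~\ref{prop_eq_D}, with the only substantive change being that the coefficient $(1+\kappa)$ now multiplies the rival-aggregate term inside the residual-demand vector $\delta_i\bm{a}_i$.

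First I would take the best-response conditions from Lemma~\ref{lem_br_owner} and, using $\kappa_{ij}=\kappa$ for all $i\neq j$, rewrite
\[
\delta_i \bm{a}_i = \bm{\beta} - (1+\kappa)\sum_{j\neq i}q_j\bm{a}_j = \bm{\beta} - (1+\kappa)\bm{A}\bm{q} + (1+\kappa)q_i\bm{a}_i.
\]
Substituting $\delta_i = \qty(2(1+\alpha)q_i - \gamma_i)/\alpha$ obtained from the output best response and simplifying $2(1+\alpha) - \alpha(1+\kappa) = 2+\alpha(1-\kappa)$, I obtain the orthogonality-type condition
\[
\qty(\qty(2+\alpha(1-\kappa))q_i - \gamma_i)\bm{a}_i = \alpha\qty(\bm{\beta} - (1+\kappa)\bm{A}\bm{q}), \quad \forall i \in [n],
\]
which is the direct analogue of \eqref{eq_cond2}.

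Next I would run the same dichotomy argument that underlies Proposition~\ref{prop_eq_D}. Since the right-hand side is common to all firms and each $\bm{a}_i$ has unit norm, if the right-hand side were nonzero, every $\bm{a}_i$ would be parallel to a fixed direction, giving product concentration or polarization rather than differentiation. Hence, in any product-differentiation equilibrium the right-hand side vanishes, yielding $\bm{A}^{\rm d}\bm{q}^{\rm d} = \bm{\beta}/(1+\kappa)$ and, consequently, $q_i^{\rm d} = \gamma_i/(2+\alpha(1-\kappa))$ for all $i$. The cosine-similarity formula then follows from the identity $\|\bm{A}\bm{\gamma}\|_2^2 = \|\bm{\gamma}\|_2^2 + n(n-1)\,\bar{s}_{\bm{\gamma}}(\bm{A})$ applied to $\bm{A}^{\rm d}\bm{\gamma} = \qty(2+\alpha(1-\kappa))\bm{A}^{\rm d}\bm{q}^{\rm d} = \qty(2+\alpha(1-\kappa))\bm{\beta}/(1+\kappa)$ together with $\|\bm{\beta}\|_2 = 1$.

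For the existence claim, I would invoke Lemma~\ref{lem_donut}: the prescribed pair is realizable by some characteristics profile $\bm{A}^{\rm d}\in\calA$ if and only if the target vector $\bm{\beta}/(1+\kappa)$ lies in $\calD(\bm{q}^{\rm d})$, which---after multiplying through by the common scalar $2+\alpha(1-\kappa)$ and using homogeneity of $r$ and $R$---reduces precisely to \eqref{cond_owner}. Finally, I would check the nonnegativity of $\delta_i$ required by Lemma~\ref{lem_br_owner}: at the candidate profile $\delta_i = (1+\kappa)q_i^{\rm d}\ge 0$ automatically, and $2(1+\alpha)q_i^{\rm d}\ge \gamma_i$ reduces to $\alpha(1+\kappa)\ge 0$, which always holds. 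The main obstacle is the uniqueness step, which rests on the same delicate observation used in the baseline: the orthogonality condition forces collinearity across firms whenever its common right-hand side is nonzero, so product differentiation can only be sustained when that vector is exactly annihilated; once this step is in place, the rest is routine substitution and an application of Lemma~\ref{lem_donut}.
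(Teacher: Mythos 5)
Your proposal is correct and follows essentially the same route as the paper's proof: you derive the common-ownership analogue of the orthogonality condition, argue that product differentiation forces the common right-hand side to vanish (equivalently $\delta_i=(1+\kappa)q_i$), read off $\bm{A}^{\rm d}\bm{q}^{\rm d}=\bm{\beta}/(1+\kappa)$ and $\bm{q}^{\rm d}=\bm{\gamma}/(2+\alpha(1-\kappa))$, and obtain existence from Lemma~\ref{lem_donut} by homogeneity of $r$ and $R$. Your cosine-similarity step simply rederives the identity behind Lemma~\ref{lem_cosine} (which the paper cites with $c=\tfrac{1}{1+\kappa}$, $d=\tfrac{1}{2+\alpha(1-\kappa)}$), so the arguments coincide in substance.
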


Proposition~\ref{prop_owner} uncovers the essential interaction between endogenous product differentiation and common ownership.
Specifically, the equilibrium degree of product differentiation, as measured by the weighted average cosine similarity $\bar{s}_{\bm{\gamma}}(\bm{A}^{\rm d})$, is monotonically decreasing in the degree of common ownership.
This result formalizes the intuitive mechanism discussed earlier in this section; namely, common ownership strengthens firms' incentives to soften competition, and firms can do so by differentiating their product characteristics.

The effect of common ownership on output is a priori ambiguous at an intuitive level, since opposing forces act on firms' output choices.
On the one hand, as in standard settings without characteristics design, firms have incentives to accommodate other firms with overlapping ownership by reducing their own output.
On the other hand, the higher degree of product differentiation induced by common ownership allows firms to sustain higher markups, which in turn strengthens their incentives to produce.
Proposition~\ref{prop_owner} shows that, in the present model, the latter effect dominates, as the equilibrium output $\bm{q}^{\rm d}$ in \eqref{profile_owner} is monotonically increasing in $\kappa$.
This stands in sharp contrast to settings with exogenous product characteristics, in which firms can soften competition only by reducing output.

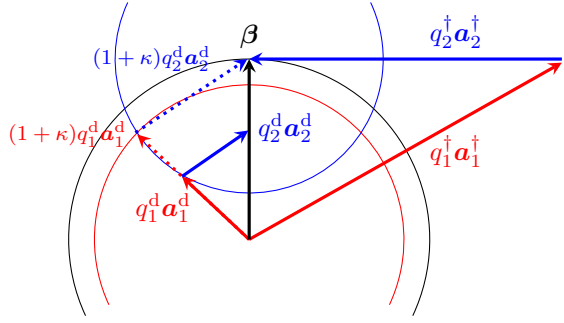
\begin{figure}[t]
\centering
\begin{tikzpicture}[>=stealth, scale=0.8]
\usetikzlibrary{calc,intersections}

\def\R{3}
\def\alpha{1}
\def\k{2/3}
\def\gOne{2}
\def\gTwo{sqrt(3)}
\pgfmathsetmacro{\Rred}{\gOne/(2+\alpha*(1-\k)) * \R}
\pgfmathsetmacro{\Rblue}{\gTwo/(2+\alpha*(1-\k)) * \R}
\def\angUpA{-25}
\def\angUpB{205}
\def\angBlueA{155}
\def\angBlueB{385}

\useasboundingbox (-4.2,-3.8) rectangle (4.2,4.2);

\coordinate (O) at (0,0);
\coordinate (B) at (0,\R);
\coordinate (P) at ({sqrt(3)*\R},\R);

\draw (O) ++(\angUpA:\R)
  arc[start angle=\angUpA, end angle=\angUpB, radius=\R];
\draw[red] (O) ++(\angUpA:\Rred)
  arc[start angle=\angUpA, end angle=\angUpB, radius=\Rred];
\draw[blue] (B) ++(\angBlueA:\Rblue)
  arc[start angle=\angBlueA, end angle=\angBlueB, radius=\Rblue];

\path[name path=rc] (O) circle (\Rred);
\path[name path=bc] (B) circle (\Rblue);
\path[name intersections={of=rc and bc, by={I1,I2}}];

\makeatletter
\newdimen\Xone \newdimen\Xtwo
\pgfextractx{\Xone}{\pgfpointanchor{I1}{center}}
\pgfextractx{\Xtwo}{\pgfpointanchor{I2}{center}}
\ifdim\Xone<\Xtwo \coordinate (I) at (I1); \else \coordinate (I) at (I2); \fi
\makeatother

\coordinate (Ired)  at ($(I)+(0,-0.03)$);
\coordinate (Iblue) at ($(I)-(0.03,0.0)$);
\coordinate (Bblue) at ($(B)-(0.03,0)$);
\coordinate (Pred)  at ($(P)-(0,0.06)$);

\draw[->, very thick, red] (O) -- (Pred) node[midway, right, xshift=5pt, font=\small] {$q_1^\dagger \bm{a}_1^\dagger$};
\draw[->, very thick, blue] (P) -- (B) node[midway, above right, xshift=5pt, font=\small] {$q_2^\dagger \bm{a}_2^\dagger$};
\draw[->, very thick, red, dotted] (O) -- (Ired) node[midway, above left, font=\scriptsize, xshift=-21pt, yshift=11pt] {$(1+\kappa)q_1^{\rm d} \bm{a}_1^{\rm d}$};
\draw[->, very thick, blue, dotted] (Iblue) -- (Bblue) node[midway, above, font=\scriptsize, xshift=-15pt, yshift=5pt] {$(1+\kappa)q_2^{\rm d} \bm{a}_2^{\rm d}$};
  
\coordinate (Ired2) at ($ (O)!{1/(1+\k)}!(Ired) $);
\draw[->, very thick, red] (O) -- (Ired2) node[midway, left, font=\small, xshift=-5pt, yshift=0pt] {$q_1^{\rm d} \bm{a}_1^{\rm d}$};

\coordinate (Iblue2) at ($ (Iblue)!{1/(1+\k)}!(B) $);
\coordinate (Iblue3) at ($ (Ired2)+(Iblue2)-(Ired)$);
\draw[->, very thick, blue] (Ired2) -- (Iblue3) node[midway, above right, font=\small, xshift=12pt, yshift=0pt] {$q_2^{\rm d} \bm{a}_2^{\rm d}$};

\draw[->, very thick]
  (O) -- (B) node[above, font=\small] {$\bm{\beta}$};

\end{tikzpicture}
\vspace{-5em}
\caption{Socially optimal and equilibrium product differentiation under common ownership with $m=n=2$, $\alpha=1$, $\kappa=\frac{2}{3}$, $\bm{\beta}=(0,1)$, $\bm{q}^\dagger = \bm{\gamma} = (2,\sqrt{3})$, and $\bm{q}^{\rm d} = \frac{\bm{\gamma}}{2+\alpha(1-\kappa)} = (\frac{6}{7},\frac{3\sqrt{3}}{7})$.}
\label{fig_duopoly_owner}
\end{figure}

\setcounter{example}{0}
\begin{example}[continued]
When symmetric common ownership is introduced into the duopoly with $\alpha = 1$ and $\bm{\gamma} = (2,\sqrt{3})$, the equilibrium is computed as
$\bm{q}^{\rm d} = \bigl(\frac{2}{3-\kappa}, \frac{\sqrt{3}}{3-\kappa}\bigr)$
and
$\bm{a}_1^\top \bm{a}_2 = \frac{1 - 10\kappa - 3\kappa^2}{2\sqrt{3}(1+\kappa)^2}$.
Figure~\ref{fig_duopoly_owner} provides a geometric illustration of the resulting structure of product differentiation for the case $\kappa = \frac{2}{3}$.
Observe that $\bm{a}_1^{\rm d}$ and $\bm{a}_2^{\rm d}$ together span the consumer's ideal vector $\bm{\beta}$ once their lengths are multiplied by $(\kappa + q_i^{\rm d})$.
Without the amplification factor $\kappa$, however, the actual equilibrium provision of common characteristics, given by $q_1^{\rm d} \bm{a}_1^{\rm d} + q_2^{\rm d} \bm{a}_2^{\rm d}$, falls short of reaching $\bm{\beta}$.
This shortfall can be interpreted economically as the standard output-reduction effect of common ownership.
At the same time, firms' choices of common characteristics become more diversified for the purpose of softening competition.
As a result, the eventual output levels are larger than their counterparts in the benchmark equilibrium without common ownership (i.e., $\kappa = 0$).
This implies that the total provision of idiosyncratic characteristics by each firm is enhanced under common ownership.
\end{example}

As illustrated in the example above, the welfare implications of common ownership are mixed.
Namely, while the increase in firms' output levels can contribute positively to social welfare through enhanced provision of idiosyncratic characteristics, the distortion in product differentiation may result in an undersupply of common characteristics relative to the ideal level that would be achieved without common ownership.
The next result provides a sufficient condition under which the overall welfare effect of common ownership is positive.

\begin{corollary} \label{cor_owner_welfare}
Social welfare under the equilibrium characterized in Proposition~\ref{prop_owner} is strictly increasing at $\kappa \in [0,1]$ if and only if
\begin{equation} \label{cond_owner_welfare}
\|\bm{\gamma}\|_2 > f(\alpha,\kappa) \cdot
\frac{2+\alpha(1-\kappa)}{1+\kappa}, \quad \text{where} \quad
f(\alpha,\kappa) = \sqrt{\frac{\kappa(2+\alpha(1-\kappa))}{(1+\kappa)(1+\alpha(1-\kappa))}} \in [0,1].
\end{equation}
\end{corollary}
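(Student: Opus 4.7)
The plan is to reduce the statement to a single-variable calculus exercise by plugging the equilibrium from Proposition~\ref{prop_owner} directly into the social-welfare expression. Since $\bm{A}^{\rm d}\bm{q}^{\rm d} = \bm{\beta}/(1+\kappa)$ is parallel to $\bm{\beta}$ and $\|\bm{\beta}\|_2 = 1$, the common-characteristics part of $\Omega$ collapses to $\alpha\bigl(\frac{1}{1+\kappa} - \frac{1}{2(1+\kappa)^2}\bigr)$. Similarly, because $\bm{q}^{\rm d} = \bm{\gamma}/(2+\alpha(1-\kappa))$, the idiosyncratic part contributes $\|\bm{\gamma}\|_2^2\bigl(\frac{1}{2+\alpha(1-\kappa)} - \frac{1}{2(2+\alpha(1-\kappa))^2}\bigr)$. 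Summing gives $\Omega$ as an explicit function of $\kappa$ with just two ``building blocks'' of the form $h(t) = 1/t - 1/(2t^2)$, whose derivative satisfies $h'(t) = (1-t)/t^3$.

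Next, I would differentiate in $\kappa$ using the chain rule with $u = 1+\kappa$ and $v = 2+\alpha(1-\kappa)$, noting $u'=1$ and $v'=-\alpha$. This yields
\[
\frac{d\Omega}{d\kappa}
= -\frac{\alpha\kappa}{(1+\kappa)^3}
+ \frac{\alpha\,\|\bm{\gamma}\|_2^2\,(1+\alpha(1-\kappa))}{(2+\alpha(1-\kappa))^3}.
\]
Positivity of this derivative is therefore equivalent to
\[
\|\bm{\gamma}\|_2^2 > \frac{\kappa\,(2+\alpha(1-\kappa))^3}{(1+\kappa)^3\,(1+\alpha(1-\kappa))},
\]
and taking square roots and factoring out $(2+\alpha(1-\kappa))/(1+\kappa)$ reproduces exactly~\eqref{cond_owner_welfare} with $f(\alpha,\kappa)$ as defined.

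Finally, I would verify $f(\alpha,\kappa) \in [0,1]$ to complete the statement. Showing $f \ge 0$ is immediate, and $f \le 1$ reduces to the inequality $\kappa(2+\alpha(1-\kappa)) \le (1+\kappa)(1+\alpha(1-\kappa))$, which after cancellation becomes $(1-\kappa)(1+\alpha) \ge 0$, clearly true on $[0,1]$.

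No real obstacle is anticipated: the result is essentially a careful derivative computation, and the only risk is algebraic slippage in separating the $1+\kappa$ and $2+\alpha(1-\kappa)$ factors when extracting the clean closed form for $f(\alpha,\kappa)$. I would also briefly remark that Proposition~\ref{prop_owner} guarantees the equilibrium formulas remain valid throughout $\kappa \in [0,1]$ provided the bracket condition~\eqref{cond_owner} continues to hold, so the monotonicity statement is meaningful on the relevant parameter range.
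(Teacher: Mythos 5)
Your proposal is correct and follows essentially the same route as the paper: substitute the equilibrium allocation $(\bm{x}^{\rm d},\bm{y}^{\rm d})=\bigl(\tfrac{\bm{\beta}}{1+\kappa},\tfrac{\bm{\gamma}}{2+\alpha(1-\kappa)}\bigr)$ into $\Omega$, differentiate in $\kappa$, and factor the resulting threshold into $f(\alpha,\kappa)^2\cdot\bigl(\tfrac{2+\alpha(1-\kappa)}{1+\kappa}\bigr)^2$, with the check $f\le 1$ reducing to $(1-\kappa)(1+\alpha)\ge 0$ exactly as in the paper. Your bookkeeping via $h(t)=1/t-1/(2t^2)$ is only a presentational variation, and your closing remark about condition~\eqref{cond_owner} holding on the relevant range is a sensible clarification rather than a departure.
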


This result indicates that, once endogenous characteristics design is taken into account, common ownership can have a positive welfare effect.
Specifically, for reasons similar to those in Theorem~\ref{thm_welfare_D}, we can regard condition~\eqref{cond_owner_welfare} as fairly weak.
Indeed, for a product-differentiation equilibrium to exist under common ownership, condition~\eqref{cond_owner} must already hold, which requires the $\ell_1$ norm of $\bm{\gamma}$ to exceed the threshold $\frac{2+\alpha(1-\kappa)}{1+\kappa}$.
Although condition~\eqref{cond_owner_welfare} instead imposes a restriction on the $\ell_2$ norm of $\bm{\gamma}$, it reflects a similar economic requirement that firms' standalone values be sufficiently strong in the aggregate.
Moreover, the threshold for $\|\bm{\gamma}\|_2$ is multiplicatively discounted by the factor $f(\alpha,\kappa)\in[0,1]$, making the condition even less demanding.

\begin{figure}[t!]
\begin{center}
\includegraphics[width=4in]{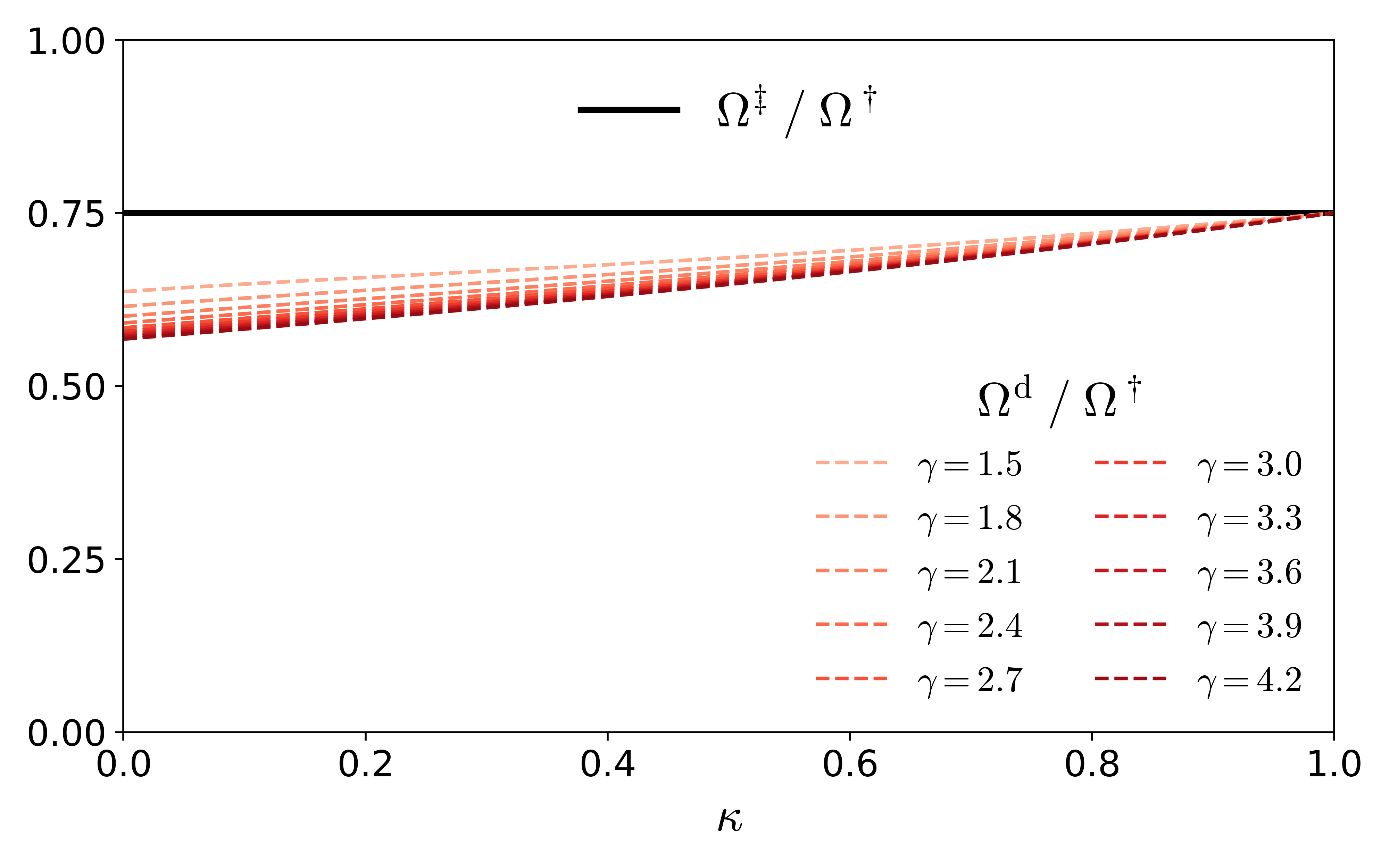}
\caption{In the symmetric case with $n=2$ and $\alpha=1$, welfare levels under common ownership are plotted relative to the socially optimal level.}
\label{fig_omega_owner}
\end{center}
\end{figure}

Figure~\ref{fig_omega_owner} plots welfare levels under the symmetric duopoly from Example~\ref{ex_sym} in the presence of common ownership.
Across different values of $\bm{\gamma}$, equilibrium welfare relative to the social planner's benchmark exhibits a similar pattern, increasing monotonically with $\kappa$.

Lastly, it should be noted that our analysis abstracts from several other channels through which common ownership may affect market outcomes and welfare, and overall policy implications may also depend on those; see Section~6.7 of \cite{ederer2025pellegrino} for related discussion.
For example, those other effects are studied theoretically by \cite{azar2021vives} in the context of labor market power and by \cite{lopez2019vives} in the context of R\&D investment with spillovers.
Moreover, in practice, the design of product characteristics may involve adjustment frictions and longer time horizons than those for quantity or price adjustments.
Recent work by \cite{hopenhayn2025okumura} develops a dynamic version of the GHL model, and \cite{okumura2025} studies the effect of common ownership on economic growth.
In summary, this paper provides the first theoretical framework that fully endogenizes product differentiation, and there remains ample scope for future research to extend and generalize our methodology in order to assess the broader interaction between product differentiation and other economic channels.


\appendix

\section{Omitted Proofs}

\subsection*{Proof of Lemma~\ref{lem_donut}}
Suppose that $\bm{x} = \bm{A}\bm{q}$ for some $\bm{A} \in \calA$.
By the triangle inequality,
\[
\|\bm{x}\|_2 = \|\bm{A}\bm{q}\|_2 = \left\| \sum_{i=1}^n q_i \bm{a}_i \right\|_2
\le \sum_{i=1}^n q_i \|\bm{a}_i\|_2 = \sum_{i=1}^n q_i = \|\bm{q}\|_1 = R(\bm{q}).
\]
Let $q_1 = \max_{i \in [n]} q_i = \|\bm{q}\|_\infty$ without loss of generality.
Again, by the triangle inequality,
\[
\|\bm{x}\|_2 = \left\|q_1 \bm{a}_1 + \sum_{i=2}^n q_i \bm{a}_i \right\|_2
\ge \qty| q_1 \|\bm{a}_1\|_2 - \left\|\sum_{i=2}^n q_i \bm{a}_i \right\|_2 |
\ge \|\bm{q}\|_{\infty} - \left\|\sum_{i=2}^n q_i \bm{a}_i \right\|_2.
\]
Moreover,
\[
\left\|\sum_{i=2}^n q_i \bm{a}_i \right\|_2 \le \sum_{i=2}^n q_i\|\bm{a}\|_2 = \sum_{i=2}^n q_i = \sum_{i=1}^n q_i - q_1 = \|\bm{q}\|_1 - \|\bm{q}\|_\infty.
\]
Combining the last two equations, it follows that
\[
\|\bm{x}\|_2 \ge \|\bm{q}\|_{\infty} - \qty(\|\bm{q}\|_1 - \|\bm{q}\|_\infty) = 2 \|\bm{q}\|_\infty - \|\bm{q}\|_1 = r(\bm{q}).
\]
Thus, we have proved the ``only if'' part.

For the ``only if'' part, we first consider the case $\bm{x}=\bm{0}$.
Note that $\bm{0}$ satisfies \eqref{feasible_x} if and only if $\|\bm{q}\|_1 \ge 2\|\bm{q}\|_\infty$, or
\begin{equation} \label{polygon0}
q_i \le \sum_{j \neq i} q_j, \quad \forall i \in [n].
\end{equation}
If $n=2$, this condition reduces to $q_1=q_2$, so $q_1\bm{a}_1 + q_2\bm{a}_2 = \bm{0}$ is achieved by simply letting $\bm{a}_1 = -\bm{a}_2$.
Henceforth, we consider the case $n \ge 3$ and assume $\bm{q} \gg \bm{0}$ without loss of generality.

\begin{lemma} \label{lem_triangle1}
If $n \ge 3$ and $\bm{q} \gg \bm{0}$ satisfies \eqref{polygon0}, then $[n]$ can be partitioned into three nonempty subsets $N_1$, $N_2$, and $N_3$ such that
\begin{equation} \label{triangle_Q}
\sum_{i \in N_k} q_i \le \sum_{i \in [n] \setminus N_k} q_i, \quad \forall k \in \{1,2,3\}.
\end{equation}
\end{lemma}

\begin{proof}
Given any partition $(N_1,N_2,N_3)$ of $[n]$, let $Q_k = \sum_{i \in N_k} q_i$ for each $k \in \{1,2,3\}$.
Assume that $Q_1 \ge Q_2 \ge Q_3 > 0$ without loss of generality.
If $Q_1 \le Q_2 + Q_3$, then \eqref{triangle_Q} holds for $k=1$, and it trivially holds for $k\in\{2,3\}$ as well.
Hence we are done.

Suppose instead that $Q_1 > Q_2 + Q_3$.
Note that $Q_1$ cannot be a singleton, since otherwise, \eqref{polygon0} would be violated.
So, we pick $i$ from $N_1$ such that $q_i = \min_{j \in N_1} q_j$---and thus, $\tilde{Q}_1 = Q_1 - q_i \ge q_i$ holds---and move it to $N_3$ to form a new partition $(\tilde{N}_1,\tilde{N}_2,\tilde{N}_3)$.
It then follows that
\begin{align*}
&\tilde{Q}_1 + \tilde{Q}_3 = \qty(Q_1 - q_i) + (Q_3 + q_i) = Q_1 + Q_3 > Q_1 > Q_2 + Q_3 > Q_2 = \tilde{Q}_2, \\
&\tilde{Q}_1 + \tilde{Q}_2 = \qty(Q_1 - q_i) + Q_2 \ge q_i + Q_2 \ge q_i + Q_3 = \tilde{Q}_3.
\end{align*}
Thus, if $\tilde{Q}_1 \le \tilde{Q}_2 + \tilde{Q}_3$, then $(\tilde{N}_1,\tilde{N}_2,\tilde{N}_3)$ satisfies \eqref{triangle_Q}, and we are done.
Otherwise, we repeat the same procedure by moving another $j$ from $\tilde{N}_1$ to either $\tilde{N}_2$ or $\tilde{N}_3$, whichever has the smaller of $\tilde{Q}_2$ and $\tilde{Q}_3$.
Iterating this process, we must eventually obtain a partition satisfying \eqref{triangle_Q}, since otherwise, the first set would be a singleton, leading to a violation of \eqref{polygon0}.
\end{proof}

\begin{lemma} \label{lem_triangle2}
For any scalars $Q_1,Q_2,Q_3 > 0$, if $Q_k \le Q_{k'} + Q_{k''}$ for any distinct $k,k',k'' \in \{1,2,3\}$, there exist unit vectors $\bm{u}_1,\bm{u}_2,\bm{u}_3$ such that $Q_1 \bm{u}_1 + Q_2 \bm{u}_2 + Q_3 \bm{u}_3 = \bm{0}$.
\end{lemma}

\begin{proof}
There exists some $\theta \in [0,2\pi]$ such that
\[
\cos\qty(\theta) = \frac{Q_1^2 + Q_2^2 - Q_3^2}{2Q_1Q_2},
\]
since the value of the right-hand side lies in $[-1,1]$ under the stated condition.
Then, define vectors $\bm{v}_1, \bm{v}_2 \in \R^m$ by
\[
\bm{v}_1 = \mqty[Q_1 \\ 0 \\ \bm{0}], \quad
\bm{v}_2 = \mqty[Q_2 \cos\qty(\theta) \\ Q_2 \sin\qty(\theta) \\ \bm{0}].
\]
By construction, we have
\[
\|\bm{v}_1\|_2 = Q_1, \qquad
\|\bm{v}_2\|_2 = \sqrt{Q_2^2 \qty(\cos^2\qty(\theta) + \sin^2\qty(\theta))} = Q_2.
\]
Moreover, it holds that
\begin{align*}
\|\bm{v}_2 - \bm{v}_1\|_2 &= \sqrt{\qty(Q_2\cos\qty(\theta) - Q_1)^2 + Q_2^2 \sin^2\qty(\theta)} \\
&= \sqrt{Q_1^2 + Q_2^2 \qty(\cos^2\qty(\theta) + \sin^2\qty(\theta)) - 2Q_1Q_2 \cos\qty(\theta)}
= Q_3.
\end{align*}
Therefore, by letting $\bm{u}_1 = \frac{\bm{v}_1}{\|\bm{v}_1\|_2}$, $\bm{u}_2 = -\frac{\bm{v}_2}{\|\bm{v}_2\|_2}$, and $\bm{u}_3 = \frac{\bm{v}_2-\bm{v}_1}{\|\bm{v}_2-\bm{v}_1\|_2}$, we obtain
\[
Q_1 \bm{u}_1 + Q_2 \bm{u}_2 + Q_3 \bm{u}_3
= \bm{v}_1 - \bm{v}_2 + \qty(\bm{v}_2 - \bm{v}_1) = \bm{0}. \qedhere
\]
\end{proof}

Hence, by Lemmas~\ref{lem_triangle1} and~\ref{lem_triangle2}, if $n \ge 3$ and $\bm{q} \gg \bm{0}$ satisfies~\eqref{polygon0}, there exist a partition $(N_1,N_2,N_3)$ of $[n]$ and associated unit vectors $(\bm{u}_1,\bm{u}_2,\bm{u}_3)$ such that
\[
\sum_{i=1}^n q_i \bm{a}_i = Q_1 \bm{u}_1 + Q_2 \bm{u}_2 + Q_3 \bm{u}_3 = \bm{0}, 
\]
where we let $\bm{a}_i = \bm{u}_k$ for each $i \in N_k$ and $k \in \{1,2,3\}$.
This concludes the proof for the case $\bm{x} = \bm{0}$.

Now, take any $\bm{x} \in \R^m \setminus \{\bm{0}\}$ that satisfies \eqref{feasible_x}.
Consider a positive vector of length $n+1$ defined by
$\tilde{\bm{q}} = [q_1,\, \ldots,\, q_n,\, \|\bm{x}\|_2]^\top$.
By construction and \eqref{feasible_x}, we have
\begin{align*}
\|\tilde{\bm{q}}\|_1 - 2 \|\tilde{\bm{q}}\|_\infty
&= \|\bm{q}\|_1 + \|\bm{x}\|_2 - 2\max\qty{\|\bm{q}\|_\infty,\, \|\bm{x}\|_2} \\
&= \min\qty{\|\bm{q}\|_1 + \|\bm{x}\|_2 - 2\|\bm{q}\|_\infty,\, \|\bm{q}\|_1 - \|\bm{x}\|_2} \ge 0,
\end{align*}
meaning that $\tilde{\bm{q}}$ satisfies \eqref{polygon0}.
Hence, by the previous conclusion, there exist unit vectors $\tilde{\bm{a}}_1, \ldots, \tilde{\bm{a}}_n$, and $\tilde{\bm{a}}_{n+1}$ such that
\[
\bm{0}
= \sum_{i=1}^{n+1} \tilde{q}_i \tilde{\bm{a}}_i
= \sum_{i=1}^n q_i \tilde{\bm{a}}_i + \|\bm{x}\|_2 \cdot \tilde{\bm{a}}_{n+1}.
\]
By letting $\tilde{\bm{x}} = - \|\bm{x}\|_2 \cdot \tilde{\bm{a}}_{n+1}$,
\begin{equation} \label{eq_donut_conc}
\sum_{i=1}^n q_i \tilde{\bm{a}}_i = \tilde{\bm{x}}.
\end{equation}
We are done if $\tilde{\bm{x}} = \bm{x}$.
Otherwise, let $\bm{z} = \tilde{\bm{x}} - \bm{x}$, and define the associated Householder matrix
$\bm{H} = \bm{I} - 2\frac{\bm{z}\bm{z}^\top}{\bm{z}^\top \bm{z}}$.
Noting that $\|\tilde{\bm{x}}\|_2 = \|\bm{x}\|_2$, the matrix $\bm{H}$ satisfies $\bm{H}\tilde{\bm{x}} = \bm{x}$, and $\bm{H}$ is orthogonal so that it preserves norms, i.e., $\|\bm{H}\bm{v}\|_2 = \|\bm{v}\|_2$ for any $\bm{v} \in \R^m$; see Chapter~2.1 of \cite{horn}.
Hence, by multiplying both sides of \eqref{eq_donut_conc} by $\bm{H}$ from the left, we obtain unit vectors $\bm{a}_i = \bm{H}\tilde{\bm{a}}_i$ such that $\sum_{i=1}^n q_i \bm{a}_i = \bm{x}$.
\hfill {\it Q.E.D.}


\subsection*{Proof of Lemma~\ref{lem_opt_A}}

This lemma follows immediately from Lemma~\ref{lem_donut}.
\hfill {\it Q.E.D.}


\subsection*{Proof of Lemma~\ref{lem_R}}

The ``if'' part is straightforward.
To prove the ``only if'' part, suppose that $\bm{A}\bm{q}=\bm{x}$ and $\|\bm{x}\|_2 = R(\bm{q})$.
This implies that $\sum_{i=1}^n q_i \bm{a}_i = \sum_{i=1}^n q_i \cdot \frac{\bm{x}}{\|\bm{x}\|_2}$.
Calculating the squared $\ell_2$ norm of each side, we get $\|\sum_{i=1}^n q_i \bm{a}_i \|_2^2 = \sum_{i,j \in [n]} q_i q_j \bm{a}_i^\top \bm{a}_j$ and $\|\sum_{i=1}^n q_i \cdot \frac{\bm{x}}{\|\bm{x}\|_2} \|_2^2 = \sum_{i,j \in [n]} q_i q_j$.
Since these must be equal,
\[
\sum_{i,j \in [n]} q_iq_j \qty(1-\bm{a}_i^\top \bm{a}_j) = 0.
\]
Since $\bm{a}_i^\top \bm{a}_j \le 1$ and $q_i q_j > 0$ for all $i,j \in [n]$, we must have $\bm{a}_i^\top \bm{a}_j = 1$ for all $i,j \in [n]$, which means that $\bm{a}_i$ is identical across all $i \in [n]$.
Consequently, we have $\bm{A}\bm{q} = R(\bm{q}) \bm{a}_i = \bm{x}$, which in turn implies $\bm{a}_i=\frac{\bm{x}}{\|\bm{x}\|_2}$ for all $i \in [n]$.
\hfill {\it Q.E.D.}


\subsection*{Proof of Lemma~\ref{lem_r}}

The ``if'' part is straightforward.
To prove the ``only if'' part, suppose that $\bm{A}\bm{q}=\bm{x}$ and $\|\bm{x}\|_2 = r(\bm{q}) > 0$.
Note that $r(\bm{q}) > 0$ implies that there exists one and only one $i$ such that $q_i > \sum_{j \neq i} q_j$.
Assume $i=1$ without loss of generality.
By assumption, we have $q_1 \bm{a}_1 + \sum_{i=2}^nq_i \bm{a}_i = (q_1 - \sum_{i=2}^n q_j) \frac{\bm{x}}{\|\bm{x}\|_2}$.
Calculating the $\ell_2$ norm of each side,
\begin{equation} \label{lem_r1}
\left\|q_1 \bm{a}_1 + \sum_{i=2}^nq_i \bm{a}_i \right\|_2 = q_1 - \sum_{i=2}^n q_i.
\end{equation}
By the triangle inequality,
\begin{equation} \label{lem_r2}
\left\|\sum_{i=2}^nq_i \bm{a}_i \right\|_2 \le \sum_{i=2}^n q_i.
\end{equation}
Also, by the reverse triangle inequality,
\begin{equation} \label{lem_r3}
\left\|q_1 \bm{a}_1 + \sum_{i=2}^n q_i \bm{a}_i \right\|_2
\ge q_1 - \left\|\sum_{i=2}^n q_i \bm{a}_i \right\|_2.
\end{equation}
Combining \eqref{lem_r1}, \eqref{lem_r2}, and \eqref{lem_r3}, we have $\|\sum_{i=2}^n q_i \bm{a}_i\|_2 = \sum_{i=2}^n q_i$, which holds true if and only if $\bm{a}_2 =  \cdots = \bm{a}_n \equiv \bm{u}$ for some unit vector $\bm{u}$.
Plugging this into \eqref{lem_r1}, we get $\|q_1\bm{a}_1 + \sum_{i=2}^n q_i \bm{u}\|_2 = q_1 - \sum_{i=2}^n q_i$, which holds true if and only if $\bm{a}_1 = -\bm{u}$.
It is now clear that we must have $\bm{u} = - \frac{\bm{x}}{\|\bm{x}\|_2}$ to satisfy $\sum_{i=1}^n q_i \bm{a}_i = - q_1 \bm{u} + \sum_{i=2}^n q_i \bm{u} = -r(\bm{q})\bm{u} = \frac{r(\bm{q})}{\|\bm{x}\|_2} \cdot \bm{x}$.
\hfill {\it Q.E.D.}


\subsection*{Proof of Lemma~\ref{lem_b}}
Suppose that $\bm{A}\bm{q} = \bm{x}$.
To prove the contrapositive, assume that $\rank(\bm{A})=1$.
This implies that there are some unit vector $\bm{u}$ and $\bm{\sigma} \in \{-1,1\}^n$ such that $\bm{a}_i = \sigma_i\bm{u}$ for all $i \in [n]$.
Then, since $\bm{A}\bm{q} = (\sum_{i=1}^n q_i \sigma_i) \cdot \bm{u} = \bm{x}$, we must have $\bm{u} = \frac{\bm{x}}{\|\bm{x}\|_2}$ and $|\sum_{i=1}^n q_i \sigma_i| = \|\bm{x}\|_2$.
We are done if $\sum_{i=1}^n q_i \sigma_i = \|\bm{x}\|_2$.
Otherwise, we have $(-\bm{\sigma})^\top \bm{q} = \|\bm{x}\|_2$, which establishes the claim.
\hfill {\it Q.E.D.}


\subsection*{Proof of Proposition~\ref{prop_sp_mono}}

Clearly, $r(\lambda \bm{q})=\lambda r(\bm{q})$ and $R(\lambda \bm{q})=\lambda R(\bm{q})$ hold for any $\lambda > 0$.
Thus, it follows that
\[
\rho^\ddagger \qty(\frac{\bm{q}}{2}) = 
\max \qty{r\qty(\frac{\bm{q}}{2}),\, \min\qty{R\qty(\frac{\bm{q}}{2}),\, \frac{1}{2}}}
= \frac{1}{2} \max \qty{r(\bm{q}),\, \min\qty{R(\bm{q}),\, 1}}
= \frac{\rho^\dagger(\bm{q})}{2}.
\]
Hence, we have
\begin{align*}
\Pi\qty(\bm{A}^\ddagger\qty(\frac{\bm{q}}{2}), \frac{\bm{q}}{2})
&=  \alpha \qty(\rho^\ddagger \qty(\frac{\bm{q}}{2}) - \rho^\ddagger \qty(\frac{\bm{q}}{2})^2) 
+ \frac{\bm{q}^\top \bm{\gamma}}{2}-\frac{\bm{q}^\top \bm{q}}{4} \\
&= \frac{1}{2} \qty(\alpha \qty(\rho^\dagger(\bm{q}) - \frac{\rho^\dagger(\bm{q})^2}{2}) 
+ \bm{q}^\top \bm{\gamma} - \frac{\bm{q}^\top \bm{q}}{2})
= \frac{\Omega(\bm{A}^\dagger\qty(\bm{q}), \bm{q})}{2},
\end{align*}
whence $\bm{q}^\dagger$ maximizes $\Omega$ if and only if $\bm{q}^\dagger/2$ maximizes $\Pi$.
Moreover, Lemma~\ref{lem_opt_A} implies that the monopolist's optimal $\bm{A}^\ddagger$ can be the same as the social planner's optimal $\bm{A}^\dagger$.
\hfill {\it Q.E.D.}


\subsection*{Proof of Proposition~\ref{prop_sp}}

To simplify the notation, we write
\[
\Omega(\bm{A}^\dagger(\bm{q}), \bm{q}) = \Omega(\bm{q}) = \alpha \underbrace{\qty(\rho(\bm{q}^\dagger)-\frac{\rho^\dagger(\bm{q})^2}{2})}_{\eqqcolon \; \Omega_1(\bm{q})} + \underbrace{\qty(\bm{q}^\top \bm{\gamma} - \frac{\bm{q}^\top \bm{q}}{2})}_{\eqqcolon \; \Omega_2(\bm{q})}.
\]
Notice that $\rho^\dagger$ is a piecewise linear continuous function, with its right partial derivative at $q_i = 0$ being calculated as follows:
\[
\frac{\6_+ \rho^\dagger(0, \bm{q}_{-i})}{\6 q_i} = \lim_{\epsilon \downarrow 0} \frac{\rho^\dagger(\epsilon, \bm{q}_{-i}) - \rho^\dagger(0, \bm{q}_{-i})}{\epsilon} =
\begin{cases}
-1 &\text{if} \quad 1 < r(0, \bm{q}_{-i}), \\
0 &\text{if} \quad r(0, \bm{q}_{-i}) \le 1 \le R(0, \bm{q}_{-i}), \\
1 &\text{if} \quad R(0, \bm{q}_{-i}) < 1.
\end{cases}
\]
Hence, by the chain rule, it follows that
\[
\frac{\6_+\Omega\qty(0,\bm{q}_{-i})}{\6q_i}
= \underbrace{\qty(1-\rho^\dagger(0,\bm{q}_{-i})) \cdot \frac{\6_+ \rho^\dagger(0, \bm{q}_{-i})}{\6 q_i}}_{\ge \; 0} \; + \; \gamma_i > 0.
\]
This implies that the social planner's optimal output profile must be strictly positive.

\medskip

\eqref{prop_sp1}.
Observe that $\Omega_1$ is maximized when $\rho^\dagger(\bm{q})=1$, while $\Omega_2$ is maximized when $\bm{q}=\bm{\gamma}$.
This upper bound is attained precisely when $\rho^\dagger(\bm{\gamma}) = 1$.
By the definition of $\rho^\dagger$, therefore, $\bm{q}^\dagger = \bm{\gamma}$ is optimal if and only if $r(\bm{\gamma}) \le 1 \le R(\bm{\gamma})$.

\medskip

\eqref{prop_sp2}.
Suppose that $R(\bm{\gamma}) < 1$.
We claim that any $\bm{q}$ with $R(\bm{q}) > 1$ cannot be optimal.
To this end, consider $\bm{q}^\lambda = (1-\lambda)\bm{q} + \lambda \bm{\gamma}$, with $\lambda \in (0,1)$.
Since $\|\bm{q}^\lambda - \bm{\gamma}\|_2 < \|\bm{q} - \bm{\gamma}\|_2$, it follows that $\Omega_2(\bm{q}^\lambda) > \Omega_2(\bm{q})$.
Moreover, since $R(\bm{\gamma}) < 1 < R(\bm{q})$ by assumption, and since $R$ is continuous, the intermediate value theorem implies that there exists some $\lambda \in (0,1)$ such that $R(\bm{q}^\lambda) = 1$.
For this choice of $\lambda$, we have $\rho^\dagger(\bm{q}^\lambda)=1$, and hence $\Omega_1(\bm{q}^\lambda)$ attains its maximal possible value.
Combining these observations yields $\Omega(\bm{q}^\lambda) > \Omega(\bm{q})$.

Therefore, the optimal output profile $\bm{q}^\dagger$ must lie in the set $\{\bm{q} \in \R^n_{++} : R(\bm{q}) \le 1\}$.
On this set, the objective function $\Omega$ is differentiable, with $\rho^\dagger(\bm{q}) = R(\bm{q})$ and thus $\frac{\6 \rho^\dagger(\bm{q})}{\6 \bm{q}} = \1$.\footnote{Strictly speaking, $\rho^\dagger(\bm{q})$ is not differentiable when $R(\bm{q})=1$. However, this does not pose any difficulty, since the left derivative $\frac{\6_- \rho^\dagger(\bm{q})}{\6 q_i}$ is well defined and equal to $1$ at $R(\bm{q})=1$, and any $\bm{q}$ with $R(\bm{q})>1$ is suboptimal.}
Hence, $\bm{q}^\dagger$ is characterized by the first-order condition,
\begin{equation} \label{sp_foc1}
\alpha \qty(1 - R(\bm{q})) + \gamma_i - q_i = 0,
\quad \forall i \in [n].
\end{equation}
Summing \eqref{sp_foc1} over all $i \in [n]$ yields
\[
n\alpha \qty(1-R(\bm{q})) + R(\bm{\gamma}) - R(\bm{q}) =  0
\quad \iff \quad
R(\bm{q}) = \frac{n\alpha + R(\bm{\gamma})}{n\alpha+1}.
\]
Note that $R(\bm{q}) < 1$ holds by $R(\bm{\gamma}) < 1$.
Substituting $R(\bm{q})$ back into \eqref{sp_foc1}, we obtain
\[
q_i = \gamma_i + \frac{\alpha \qty(1-R(\bm{\gamma}))}{n\alpha + 1},
\quad \forall i \in [n].
\]

\medskip

\eqref{prop_sp3}.
Suppose that $r(\bm{\gamma}) > 1$.
First, we show that any $\bm{q}$ with $r(\bm{q}) < 1$ cannot be optimal.
Again, consider $\bm{q}^\lambda = (1-\lambda)\bm{q} + \lambda \bm{\gamma}$, with $\lambda \in (0,1)$, for which $\Omega_2(\bm{q}^\lambda) > \Omega_2(\bm{q})$ holds.
Since $r(\bm{q}) < 1 < r(\bm{\gamma})$ by assumption, and since $r$ is continuous, the intermediate value theorem implies that there exists some $\lambda \in (0,1)$ such that $r(\bm{q}^\lambda) = 1$.
For this choice of $\lambda$, we have $\rho^\dagger(\bm{q}^\lambda)=1$, and hence $\Omega_1(\bm{q}^\lambda)$ attains its maximal possible value.
Combining these observations yields $\Omega(\bm{q}^\lambda) > \Omega(\bm{q})$.

In the following, assume that $\gamma_1 \ge \sum_{i=2}^n \gamma_i+1-\alpha$ without loss of generality.
Let us show that the optimal output profile must be such that the firm $1$ entails the highest output level.
To this end, consider any $\bm{q}$ such that there exists $i \neq 1$ with $q_i > q_1$.
Let $\tilde{\bm{q}}$ be an alternative output profile obtained from $\bm{q}$ by swapping the $1$st and $i$-th coordinates, that is, $\tilde{q}_1 = q_i$, $\tilde{q}_i = q_1$, and $\tilde{q}_j = q_j$ for all $j \neq \{1,i\}$.
Then $R(\tilde{\bm{q}}) = R(\bm{q})$ and $\|\tilde{\bm{q}}\|_\infty = \|\bm{q}\|$, and hence $r(\tilde{\bm{q}}) = r(\bm{q})$.
It follows that $\rho^\dagger(\tilde{\bm{q}})=\rho^\dagger(\bm{q})$ and thus $\Omega_1(\tilde{\bm{q}})=\Omega_1(\bm{q})$.
Moreover, $\tilde{\bm{q}}^\top \tilde{\bm{q}} = \bm{q}^\top \bm{q}$, so the quadratic term in $\Omega_2$ is unchanged.
The only change comes from the linear term:
\[
\tilde{\bm{q}}^\top \bm{\gamma} - \bm{q}^\top \bm{\gamma}
= \qty(\gamma_1 q_i + \gamma_i q_1) - \qty(\gamma_1 q_1 + \gamma_i q_i)
= (\gamma_1 - \gamma_i)(q_i - q_1),
\]
which is strictly positive since $\gamma_1 > \gamma_i$ and $q_i > q_1$.
Hence $\Omega_2(\tilde{\bm{q}})>\Omega_2(\bm{q})$, and thus $\Omega(\tilde{\bm{q}})>\Omega(\bm{q})$.

Therefore, the optimal output profile $\bm{q}^\dagger$ must lie in the set $\{\bm{q} \in \R^n_{++} : r(\bm{q}) \ge 1,\, q_1 = \|\bm{q}\|_\infty\}$.
On this set, we have $\rho^\dagger(\bm{q}) = q_1 - \sum_{j=2}^n q_j$ and thus $\frac{\6 \rho^\dagger(\bm{q})}{\6 \bm{q}} = [1, -1, \ldots, -1]^\top$.\footnote{Again, while $\rho^{\dagger}(\bm{q})$ is not differentiable when $r(\bm{q})=1$, this does not affect our discussion since any $\bm{q}$ with $r(\bm{q}) < 1$ is suboptimal and the right derivatives are given as $\frac{\6_+ \rho^\dagger(\bm{q})}{\6q_1} = 1$ and $\frac{\6_+ \rho^\dagger(\bm{q})}{\6q_i} = -1$ for $i \neq 1$ at $r(\bm{q})=1$.}
Hence, $\bm{q}^\dagger$ is characterized by the first-order condition,
\begin{alignat}{4}
\alpha \qty(1- r(\bm{q})) &+ \gamma_1 & &-q_1 & &= 0, \label{sp_foc2} \\
-\alpha \qty(1 - r(\bm{q})) &+ \gamma_i & &-q_i & &= 0, \quad \forall i \ge 2. \label{sp_foc3}
\end{alignat}
Subtracting from \eqref{sp_foc2} the sum of \eqref{sp_foc3} across all $i \ge 2$, we have
\[
n\alpha \qty(1- r(\bm{q})) + \qty(\gamma_1 - \sum_{i=2}^n \gamma_i) - \qty(q_1 - \sum_{i=2}^n q_i) = 0
\quad \iff \quad
r(\bm{q}) = \frac{n\alpha + r(\bm{\gamma})}{n\alpha + 1}.
\]
Note that $r(\bm{q}) > 1$ holds by $r(\bm{\gamma}) > 1-\alpha$.
Substituting $r(\bm{q})$ back into \eqref{sp_foc2} and \eqref{sp_foc3}, we obtain
\[
q_1 = \gamma_1 - \frac{\alpha(r(\bm{\gamma})-1)}{n\alpha+1}, \qquad
q_i = \gamma_1 + \frac{\alpha(r(\bm{\gamma})-1)}{n\alpha+1}, \quad \forall i \ge 2.
\]

In each case, having established the optimal output profile $\bm{q}^\dagger$, we can identify the associated characteristics profile $\bm{A}^\dagger(\bm{q}^\dagger)$ and $\bm{A}^\dagger(\bm{q}^\dagger)\bm{q}^\dagger$ via Lemmas~\ref{lem_R}--\ref{lem_b}.
Therefore, we have completed the proof of Proposition~\ref{prop_sp}.
\hfill {\it Q.E.D.}


\subsection*{Proof of Lemma~\ref{lem_br}}
Fix any $(\bm{A}_{-i}, \bm{q}_{-i})$, and let $\delta_i = \|\bm{\beta} - \sum_{j \neq i} q_j \bm{a}_j\|$.
As discussed in the main text, the firm $i$'s optimal $\bm{a}_i$ is arbitrary if $\delta_i = 0$, while it should be proportional to $\bm{\beta} - \sum_{j \neq i} q_j \bm{a}_j$ if $\delta_i \neq 0$.
In the latter case, $\bm{a}_i = \frac{1}{\delta_i} \cdot (\bm{\beta} - \sum_{j \neq i} q_j \bm{a}_j)$ holds since $\bm{a}_i$ must have unit length.
Moreover, substituting \eqref{opt_value_pi} into \eqref{eq_profit_sep}, firm $i$'s problem reduces to the following optimization over $q_i$ alone:
\[
\max_{q_i \ge 0} \; \qty(\alpha \delta_i + \gamma_i) q_i - (1+\alpha)q_i^2.
\]
The objective is concave, and the first-order pins down the optimal output level $q_i = \frac{\alpha \delta_i + \gamma_i}{2(1+\alpha)}$, which is strictly positive since $\gamma_i > 0$.

Suppose that $(\bm{A}^*,\bm{q}^*)$ is an equilibrium.
By the best-response condition for $q_i^*$, we have $\alpha \delta_i^* = 2(1+\alpha) q_i^* -\gamma_i$, whereas $\delta_i^* = \delta(\sum_{j \neq i} q_j^* \bm{a}_j^*) \ge 0$ holds by construction.
This implies \eqref{eq_cond1}.
Moreover, we can easily verify \eqref{eq_cond2} by multiplying both sides of the best-response condition for $\bm{a}_i^*$ by $\alpha$ and substituting $\alpha \delta_i^* = 2(1+\alpha)-\gamma_i$ into it.

Conversely, suppose that $(\bm{A}^*,\bm{q}^*)$ satisfies \eqref{eq_cond1} and \eqref{eq_cond2}.
Set $\delta_i = \frac{2(1+\alpha)q_i^*-\gamma_i}{\alpha}$, and notice that $\delta_i \ge 0$ by \eqref{eq_cond1}.
Since $q_i^* = \frac{\alpha \delta_i + \gamma_i}{2(1+\alpha)}$ by construction, the best-response condition for $q_i^*$ is trivially satisfied.
Moreover, since $\bm{A}^*\bm{q}^* = q_i^* \bm{a}_i^* + \sum_{j \neq i}q_j^* \bm{a}_j^*$, we can equivalently write \eqref{eq_cond2} as
\[
\qty(2(1+\alpha)q_i^* - \gamma_i) \bm{a}_i^* = \alpha \bm{\beta} - \alpha \sum_{j \neq i} q_j^* \bm{a}_j^*
\]
Then, substituting $q_i^* = \frac{\alpha \delta_i + \gamma_i}{2(1+\alpha)}$ into this equation, we have $\delta_i \bm{a}_i^* = \bm{\beta} - \sum_{j \neq i} q_j^* \bm{a}_j^*$, which verifies the best-response condition for $\bm{a}_i^*$.
In addition, calculating the $\ell_2$ norm of each side, we obtain $|\delta_i| = \delta_i = \|\bm{\beta} - \sum_{j \neq i} q_j^* \bm{a}_j^*\|_2$.
Therefore, $(\bm{A}^*,\bm{q}^*)$ constitutes an equilibrium.

Finally, consider any equilibrium $(\bm{A}^*, \bm{q}^*)$.
Multiplying condition~\eqref{eq_cond2} by $(\bm{a}_i^*)^\top$ from left and arranging terms, we have
\[
\alpha \qty(\bm{a}_i^*)^\top \qty(\bm{\beta} - \bm{A}^*\bm{q}^*) + \gamma_i - q_i^* = (1+\alpha) q_i^*.
\]
Noticing that the left-hand side represents the firm $i$'s markup, this implies the firm $i$'s markup and profit are, respectively, given by $(1+\alpha)q_i^*$ and $(1+\alpha)q_i^{*2}$.
\hfill {\it Q.E.D.}


\subsection*{Proof of Proposition~\ref{prop_eq_D}}

Note that by Lemma~\ref{lem_donut}, there exists $\bm{A}^{\rm d} \in \calA$ such that $\bm{A}^{\rm d} \bm{q}^{\rm d} = \bm{\beta}$ if and only if $r(\bm{q}^{\rm d}) \le 1 \le R(\bm{q}^{\rm d})$.
By the definition of $\bm{q}^{\rm d}$ in Proposition~\ref{prop_eq_D}, this condition is stated as $r(\bm{\gamma}) \le 2+\alpha \le R(\bm{\gamma})$.
Moreover, evaluated at the same $\bm{q}^{\rm d}$, we can easily see that \eqref{eq_cond1} is always satisfied.
It is also evident that \eqref{eq_cond2} is satisfied for $(\bm{A}^{\rm d}, \bm{q}^{\rm d})$ described in the proposition.
Therefore, by Lemma~\ref{lem_br}, we conclude that $(\bm{A}^{\rm d}, \bm{q}^{\rm d})$ constitutes an equilibrium if and only if $r(\bm{\gamma}) \le 2+\alpha \le R(\bm{\gamma})$.
\hfill {\it Q.E.D.}


\subsection*{Proof of Corollary~\ref{cor_cosine_diff}}

This corollary is derived as a consequence of the following lemma.

\begin{lemma} \label{lem_cosine}
For any $(\bm{A},\bm{q})$ such that $\bm{A}\bm{q} = c\bm{\beta}$ and $\bm{q} = d\bm{\gamma}$, where $c,d > 0$, we have
\[
\bar{s}_{\bm \gamma}(\bm{A}) = \frac{c^2/d^2 - \|\bm{\gamma}\|_2^2}{n(n-1)}.
\]
\end{lemma}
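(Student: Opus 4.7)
The plan is to reduce the statement to a simple expansion of $\|\bm{A}\bm{\gamma}\|_2^2$. Substituting $\bm{q} = d\bm{\gamma}$ into the hypothesis $\bm{A}\bm{q} = c\bm{\beta}$ yields
\[
\bm{A}\bm{\gamma} = \frac{c}{d}\,\bm{\beta}.
\]
Taking squared $\ell_2$ norms of both sides and using the normalization $\|\bm{\beta}\|_2 = 1$ gives $\|\bm{A}\bm{\gamma}\|_2^2 = c^2/d^2$.

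Next, I would expand the left-hand side as a quadratic form:
\[
\|\bm{A}\bm{\gamma}\|_2^2 = \bm{\gamma}^\top \bm{A}^\top \bm{A} \bm{\gamma} = \sum_{i,j \in [n]} \gamma_i \gamma_j\, \bm{a}_i^\top \bm{a}_j.
\]
Splitting this double sum into diagonal and off-diagonal contributions, invoking the unit-length normalization $\bm{a}_i^\top \bm{a}_i = 1$ on the diagonal, and noting that the off-diagonal part is symmetric in $(i,j)$, I obtain
\[
\|\bm{A}\bm{\gamma}\|_2^2 = \|\bm{\gamma}\|_2^2 + 2 \sum_{\{i,j\} \subseteq [n]} \gamma_i \gamma_j\, \bm{a}_i^\top \bm{a}_j.
\]

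Combining the two displays and solving for the off-diagonal sum yields
\[
\sum_{\{i,j\} \subseteq [n]} \gamma_i \gamma_j\, \bm{a}_i^\top \bm{a}_j = \frac{c^2/d^2 - \|\bm{\gamma}\|_2^2}{2}.
\]
Dividing by $\binom{n}{2} = n(n-1)/2$ then gives the claimed expression for $\bar{s}_{\bm{\gamma}}(\bm{A})$. There is no real obstacle; the only subtlety is keeping track of the factor of two that arises from the symmetry of the off-diagonal sum, which is exactly what cancels the $\binom{n}{2}$ denominator into $n(n-1)$ in the final expression.
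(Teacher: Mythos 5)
Your proof is correct and is essentially the paper's argument in streamlined form: the paper obtains the same identity by taking $\gamma_i$-weighted inner products of $d\sum_j \gamma_j \bm{a}_j = c\bm{\beta}$ with each $\bm{a}_i$ and with $\bm{\beta}$ and combining, which amounts precisely to your direct expansion of $\|\bm{A}\bm{\gamma}\|_2^2 = c^2/d^2$ using $\|\bm{a}_i\|_2 = \|\bm{\beta}\|_2 = 1$. The bookkeeping of the factor of two against $\binom{n}{2}$ is handled correctly, so nothing is missing.
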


\begin{proof}
By assumption, we have
\begin{equation} \label{lem_cosine1}
d \sum_{j=1}^n \gamma_j \bm{a}_j = c \bm{\beta}
\end{equation}
Taking the inner product with $\bm{a}_i$ on both sides yields
\[
d \sum_{j \neq i} \gamma_j \bm{a}_i^\top \bm{a}_j + d \gamma_i = c \bm{a}_i^\top \bm{\beta}, \quad \forall i \in [n].
\]
Multiplying both sides by $\gamma_i$ and summing over all $i \in [n]$, we obtain
\begin{equation} \label{cosine_diff1}
d \sum_{i=1}^n \sum_{j \neq i} \gamma_i \gamma_j \bm{a}_i^\top \bm{a}_j + d \|\bm{\gamma}\|_2^2 = c \sum_{i=1}^n \gamma_i \bm{a}_i^\top \bm{\beta}.
\end{equation}
At the same time, taking the inner product with $\bm{\beta}$ on both sides of \eqref{lem_cosine1} yields
\begin{equation} \label{cosine_diff2}
d \sum_{i=1}^n \gamma_i \bm{a}_i^\top \bm{\beta} = c.
\end{equation}
Substituting \eqref{cosine_diff2} into \eqref{cosine_diff1}, it follows that
\[
d \sum_{i=1}^n \sum_{j \neq i} \gamma_i \gamma_j \bm{a}_i^\top \bm{a}_j + d \|\bm{\gamma}\|_2^2 = \frac{c^2}{d},
\]
from which
\[
\bar{s}_{\bm{\gamma}}(\bm{A}) = \frac{1}{2\binom{n}{2}}\sum_{i=1}^n \sum_{j \neq i} \gamma_i \gamma_j \bm{a}_i^\top \bm{a}_j
= \frac{c^2/d^2 - \|\bm{\gamma}\|_2^2}{n(n-1)}. \qedhere
\]
\end{proof}

Noticing that $c=d=1$ for $(\bm{A}^\dagger, \bm{q}^\dagger)$, Lemma~\ref{lem_cosine} implies $\bar{s}_{\bm{\gamma}}(\bm{A}^\dagger) = \frac{1 - \|\bm{\gamma}\|_2^2}{n(n-1)}$, and this also implies $\bar{s}_{\bm{\gamma}}(\bm{A}^\ddagger) = \frac{1 - \|\bm{\gamma}\|_2^2}{n(n-1)}$ since $\bm{A}^\dagger = \bm{A}^\ddagger$ by Proposition~\ref{prop_sp_mono}.
Moreover, since $c=1$ and $d=\frac{1}{2+\alpha}$ for $(\bm{A}^{\rm d}, \bm{q}^{\rm d})$, we have $\bar{s}_{\bm{\gamma}}(\bm{A}^*) = \frac{(2+\alpha)^2 - \|\bm{\gamma}\|_2^2}{n(n-1)}$.
\hfill {\it Q.E.D.}


\subsection*{Proof of Proposition~\ref{prop_eq_CP}}
Fix any $\bm{\sigma}\in\{-1,1\}^n$, and let $(\bm{A}^*,\bm{q}^*)$ be a candidate equilibrium profile with $\bm{a}_i = \sigma_i \bm{\beta}$ for each $i \in [n]$.
Since $\bm{A}^*=\bm{\beta}\bm{\sigma}^\top$, the equilibrium condition \eqref{eq_cond2} in Lemma~\ref{lem_br} is written as follows:
\[
\qty((2+\alpha)q_i^* - \gamma_i) \sigma_i \bm{\beta} = \alpha \qty(1 - \bm{\sigma}^\top \bm{q}^*) \bm{\beta},
\]
which holds if and only if
\[
(2+\alpha) \sigma_i q_i^* +\alpha \bm{\sigma}^\top \bm{q}^* = \sigma_i \gamma_i + \alpha.
\]
Multiplying both sides by $\sigma_i$, since $\sigma_i^2=1$, this is equivalently expressed in vector form as
\begin{equation} \label{br_rank1}
(2+\alpha)\bm{q}^* + \alpha \qty(\bm{\sigma}^\top \bm{q}^*)\bm{\sigma}
= \bm{\gamma} + \alpha\bm{\sigma}.
\end{equation}
Thus, $(\bm{A}^*,\bm{q}^*)$ forms an equilibrium if and only if it satisfies \eqref{eq_cond1} and \eqref{br_rank1}.

Multiplying both sides of \eqref{br_rank1} by $\bm{\sigma}^\top$ from left, since $\bm{\sigma}^\top \bm{\sigma} = n$, it follows that
\[
(2+\alpha) \qty(\bm{\sigma}^\top \bm{q}^*) + n\alpha \qty(\bm{\sigma}^\top \bm{q}^*) = \bm{\sigma}^\top \bm{\gamma} + n\alpha,
\]
from which
\[
\bm{\sigma}^\top \bm{q}^* = \frac{\bm{\sigma}^\top \bm{\gamma} + n\alpha}{2+(n+1)\alpha}.
\]
Plugging this back into \eqref{br_rank1}, it follows that
\[
\bm{q}^* = \frac{1}{2+\alpha} \qty(\bm{\gamma} + \frac{\alpha(2+\alpha-\bm{\sigma}^\top \bm{\gamma})}{2+(n+1)\alpha} \cdot \bm{\sigma}).
\]
Thus, condition~\eqref{br_rank1} uniquely pins down $\bm{q}^\dagger$ as in Proposition~\ref{prop_eq_CP}.
Moreover, evaluated at the same $\bm{q}^*$, condition~\eqref{eq_cond1} is expressed as follows:
\begin{align*}
\eqref{eq_cond1} &\quad \iff \quad
\frac{2(1+\alpha)}{2+\alpha} \qty(\bm{\gamma} + \frac{\alpha (2+\alpha - \bm{\sigma^\top \bm{\gamma}})}{2+(n+1)\alpha} \cdot \bm{\sigma}) \ge \bm{\gamma} \\
&\quad \iff \quad \frac{\alpha}{2+\alpha} \bm{\gamma} \ge \frac{2\alpha (1+\alpha) (\bm{\sigma^\top \bm{\gamma}} - (2+\alpha))}{(2+\alpha)(2+(n+1)\alpha)} \cdot \bm{\sigma} \\
&\quad \iff \quad \frac{2+(n+1)\alpha}{2(1+\alpha)} \bm{\gamma} \ge \qty(\bm{\sigma^\top \bm{\gamma}} - (2+\alpha)) \bm{\sigma} \\
&\quad \iff \quad \bm{\sigma^\top \bm{\gamma}} \bm{\sigma} \le (2+\alpha) \bm{\sigma} + \frac{2+(n+1)\alpha}{2(1+\alpha)} \bm{\gamma}.
\end{align*}
Recalling the definitions of $N_+$ and $N_-$, the last line is expanded as
\begin{align}
\bm{\sigma^\top \bm{\gamma}} \le (2+\alpha) + \frac{2+(n+1)\alpha}{2(1+\alpha)} \gamma_i, \quad \forall i \in N_+, \label{eq_rank1_cond1} \\
\bm{\sigma^\top \bm{\gamma}} \ge (2+\alpha) - \frac{2+(n+1)\alpha}{2(1+\alpha)} \gamma_i, \quad \forall i \in N_-. \label{eq_rank1_cond2}
\end{align}
The tightest requirements of \eqref{eq_rank1_cond1} and \eqref{eq_rank1_cond2} arise when $\gamma_i = \min_{j \in N_+} \gamma_j$ and $\gamma_i = \min_{j \in N_-} \gamma_j$, respectively, which together yield condition~\eqref{cond_rank1} in Proposition~\ref{prop_eq_CP}.
\hfill {\it Q.E.D.}


\subsection*{Proof of Theorem~\ref{thm_eq}}

The existence conditions for the product-differentiation, product-concentration, and dominant-firm polarization equilibria are given, respectively, by $r(\bm{\gamma}) \le 2+\alpha \le R(\bm{\gamma})$, \eqref{cond_conc}, and \eqref{cond_pola} with $\tilde{r}_i(\bm{\gamma}) = r(\bm{\gamma})$.
It is immediate that the union of these conditions exhausts all possible parameter values, hence at least one of these three equilibria always exists.

Consider any equilibrium $(\bm{A}^*, \bm{q}^*)$.
If $q_i^* = \frac{\gamma_i}{2+\alpha}$ for some $i$ (and hence for all $i$), then $(\bm{A}^*, \bm{q}^*)$ must coincide with the equilibrium characterized in Proposition~\ref{prop_eq_D}.
Otherwise, if $q_i^* \neq \frac{\gamma_i}{2+\alpha}$ for some $i$, then \eqref{eq_delta} implies that there exist a unit vector $\bm{u}$ and a sign vector $\bm{\sigma} \in \{-1,1\}^n$ such that $\bm{a}_i^* = \sigma_i \bm{u}$ for all $i$.
We want to show that $\bm{u} = \bm{\beta}$ or $\bm{u} = -\bm{\beta}$.
Using $\bm{a}_i = \sigma_i \bm{u} = \sigma_i\qty(\bm{\beta}+(\bm{u}-\bm{\beta}))$, we can rewrite the best-response condition for $\bm{a}_i^*$ as
\[
\sigma_i \delta_i \qty(\bm{\beta} + (\bm{u}-\bm{\beta}))
= \qty(1-\sum_{j \neq i} \sigma_j q_j^*) \bm{\beta}
- \qty(\sum_{j \neq i} \sigma_j q_j^*) \qty(\bm{u}-\bm{\beta}).
\]
Since $\qty(\bm{u}+\bm{\beta})^\top \qty(\bm{u}-\bm{\beta})
= \|\bm{u}\|_2^2 - \|\bm{\beta}\|_2^2 = 0$, multiplying both sides by $\qty(\bm{u}+\bm{\beta})^\top$ from the left yields
\[
\qty(1-\sum_{j \neq i} \sigma_j q_j^* - \sigma_i \delta_i)
\qty(\bm{u}+\bm{\beta})^\top \bm{\beta} = 0.
\]
If $1-\sum_{j \neq i} \sigma_j q_j^* - \sigma_i \delta_i = 0$, then substituting this equality into the best-response condition for $\bm{a}_i^*$ implies that $\bm{u} = \bm{\beta}$.
Otherwise, if $1-\sum_{j \neq i} \sigma_j q_j^* - \sigma_i \delta_i \neq 0$, the above equation implies that $(\bm{u}+\bm{\beta})^\top \bm{\beta} = 0$, or equivalently $\bm{u}^\top \bm{\beta} = -1$.
Since $\|\bm{u}\|_2 = \|\bm{\beta}\|_2 = 1$, the Cauchy-Schwartz equality condition dictates $\bm{u} = -\bm{\beta}$.
\hfill {\it Q.E.D.}


\subsection*{Proof of Theorem~\ref{thm_welfare_D}}

For any allocation $(\bm{A},\bm{q})$ with $\bm{x}=\bm{A}\bm{q}$ and $\bm{y}=\bm{q}$, social welfare is given by
\begin{equation} \label{welfare_pf}
\Omega(\bm{A},\bm{q}) = \alpha \qty(\bm{x}^\top \bm{\beta} - \frac{\bm{x}^\top \bm{x}}{2}) + \bm{y}^\top \bm{\gamma} - \frac{\bm{y}^\top \bm{y}}{2}.
\end{equation}
Evaluating \eqref{welfare_pf} at the monopoly allocation $(\bm{x}^\ddagger,\bm{y}^\ddagger)=(\frac{\bm{\beta}}{2},\frac{\bm{\gamma}}{2})$ under condition~\eqref{gamma_diff}, we obtain
\[
\Omega(\bm{A}^\ddagger,\bm{q}^\ddagger)
= \frac{3\alpha}{8} + \frac{3\|\bm{\gamma}\|_2^2}{8}.
\]
Also, evaluating \eqref{welfare_pf} at the product-differentiation allocation $(\bm{x}^{\rm d},\bm{y}^{\rm d})=(\bm{\beta},\frac{\bm{\gamma}}{2+\alpha})$ yields
\[
\Omega(\bm{A}^{\rm d},\bm{q}^{\rm d})
= \frac{\alpha}{2}
+ \frac{(3+2\alpha)\|\bm{\gamma}\|_2^2}{2(2+\alpha)^2}.
\]
It follows from a direct comparison of these expressions that $\Omega(\bm{A}^\ddagger,\bm{q}^\ddagger) > \Omega(\bm{A}^{\rm d},\bm{q}^{\rm d})$ holds if and only if $\|\bm{\gamma}\|_2^2 > \frac{3\alpha+4}{(2+\alpha)^2}$.
\hfill {\it Q.E.D.}


\subsection*{Proof of Theorem~\ref{thm_welfare_C}}

This theorem corresponds to a special case of Theorem~\ref{thm_app}, which characterizes the welfare ranking between oligopoly and monopoly when $\bm{A}$ is fixed; see Example~\ref{example_app} there.
\hfill {\it Q.E.D.}


\subsection*{Proof of Theorem~\ref{thm_welfare_DCP}}

Observe that \eqref{welfare_pf} is globally maximized at $\bm{x} = \bm{\beta}$ and $\bm{y} = \bm{\gamma}$.
Hence, between any two allocations $(\bm{A},\bm{q})$ and $(\tilde{\bm{A}},\tilde{\bm{q}})$, we have $\Omega(\bm{A},\bm{q}) > \Omega(\tilde{\bm{A}},\tilde{\bm{q}})$ if
\begin{equation} \label{allocation_welfare_l2}
\|\bm{\beta} - \bm{A}\bm{q}\|_2 \le \|\bm{\beta} - \tilde{\bm{A}}\tilde{\bm{q}}\|_2,
\qquad
\|\bm{\gamma} - \bm{q}\|_2 \le \|\bm{\gamma} - \tilde{\bm{q}}\|_2,
\end{equation}
with at least one inequality holding strictly.

Fix any $\bm{\sigma} \in \{-1,1\}^n$, and compare welfare under the two equilibrium allocations $(\bm{A}^{\rm d},\bm{q}^{\rm d})$ and $(\bm{A}^{\bm{\sigma}},\bm{q}^{\bm{\sigma}})$.
Since $\bm{A}^{\rm d}\bm{q}^{\rm d} = \bm{\beta}$ by Proposition~\ref{prop_eq_D}, the first inequality in \eqref{allocation_welfare_l2} is always satisfied.
To evaluate the second inequality, recall that $\bm{q}^{\rm d} = \frac{\bm{\gamma}}{2+\alpha}$ by Proposition~\ref{prop_eq_D}$,$ and that $\bm{q}^{\bm{\sigma}} = \bm{q}^{\rm d} - \phi(\bm{\sigma})\bm{\sigma}$ by Proposition~\ref{prop_eq_CP}.
It follows that
\begin{align*}
\|\bm{\gamma} - \bm{q}^{\bm{\sigma}}\|_2^2
&= \left\|\qty(\frac{1+\alpha}{2+\alpha})\bm{\gamma} + \phi(\bm{\sigma})\bm{\sigma}\right\|_2^2 \\
&= \qty(\frac{1+\alpha}{2+\alpha})^2 \|\bm{\gamma}\|_2^2
+ \frac{2(1+\alpha)\phi(\bm{\sigma})}{2+\alpha}\bm{\sigma}^\top \bm{\gamma}
+ \phi(\bm{\sigma})^2 \|\bm{\sigma}\|_2^2 \\
&= \|\bm{\gamma} - \bm{q}^{\rm d}\|_2^2
+ \phi(\bm{\sigma}) \underbrace{\qty(\frac{2(1+\alpha)\bm{\sigma}^\top \bm{\gamma}}{2+\alpha}
+ n\phi(\bm{\sigma}))}_{\eqqcolon\;\psi(\bm{\sigma})}.
\end{align*}
Therefore, $\|\bm{\gamma} - \bm{q}^{\rm d}\|_2 < \|\bm{\gamma} - \bm{q}^{\bm{\sigma}}\|_2$ if and only if $\phi(\bm{\sigma})\psi(\bm{\sigma}) > 0$.

Consider first the case $\phi(\bm{\sigma}) > 0$.
By construction, this occurs if and only if $\bm{\sigma}^\top \bm{\gamma} > 2+\alpha$.
In this case, we have $\psi(\bm{\sigma}) > 2(1+\alpha) > 0$, and hence $\phi(\bm{\sigma})\psi(\bm{\sigma}) > 0$ whenever $\bm{\sigma}^\top \bm{\gamma} > 2+\alpha$.
This yields the first sufficient condition for $\Omega(\bm{A}^{\rm d},\bm{q}^{\rm d}) > \Omega(\bm{A}^{\bm{\sigma}},\bm{q}^{\bm{\sigma}})$ in Theorem~\ref{thm_welfare_DCP}.
In particular, when $\bm{\sigma} = \bm{1}$, this condition is satisfied under $R(\bm{\gamma}) > 2+\alpha$, which is a necessary condition for the product-differentiation equilibrium $(\bm{A}^{\rm d},\bm{q}^{\rm d})$ to exist.

Next, consider the case $\phi(\bm{\sigma}) < 0$.
In this case, $\phi(\bm{\sigma})\psi(\bm{\sigma}) > 0$ holds if and only if $\psi(\bm{\sigma}) < 0$, which is equivalent to
\begin{align*}
&\frac{2(1+\alpha)\bm{\sigma}^\top \bm{\gamma}}{2+\alpha} + n\phi(\bm{\sigma}) < 0 \\
&\quad \iff \quad
\frac{2(1+\alpha)\bm{\sigma}^\top \bm{\gamma}}{2+\alpha}
+ \frac{n\alpha\qty(\bm{\sigma}^\top \bm{\gamma}-(2+\alpha))}{(2+\alpha)(2+(n+1)\alpha)} < 0 \\
&\quad \iff \quad
2(1+\alpha)\qty(2+(n+1)\alpha)\bm{\sigma}^\top \bm{\gamma}
+ n\alpha\qty(\bm{\sigma}^\top \bm{\gamma}-(2+\alpha)) < 0 \\
&\quad \iff \quad
\bm{\sigma}^\top \bm{\gamma}
< \frac{n\alpha(2+\alpha)}{2(1+\alpha)\qty(2+(n+1)\alpha)+n\alpha}.
\end{align*}
This yields the second sufficient condition.
In particular, since the right-hand side is strictly positive, this condition is always satisfied whenever $\bm{\sigma}^\top \bm{\gamma} \le 0$.
\hfill {\it Q.E.D.}


\subsection*{Proof of Proposition~\ref{prop_network}}

The first-order condition implies that \eqref{eq_welfare_network} is globally maximized at $\bm{x} = \bm{\beta}$ and $\bm{y} = (\bm{I}-\bm{W})^{-1} \bm{\gamma}$.
Thus, the social planner's optimum $(\bm{A}^\dagger, \bm{q}^\dagger)$ is given by $\bm{A}^\dagger \bm{q}^\dagger = \bm{\beta}$ and $\bm{q}^\dagger = {\bf b}(1,\bm{\gamma})$, provided that $r({\bf b}(1,\bm{\gamma})) \le 1 \le R({\bf b}(1,\bm{\gamma}))$ by Lemma~\ref{lem_donut}.
Moreover, summing \eqref{eq_profit_network} across all firms, the monopolist's objective is written as
\[
\alpha \qty(\bm{\beta}^\top \bm{x} - \bm{x}^\top \bm{x}) + \bm{\gamma}^\top \bm{y} - \bm{y}^\top (\bm{I}-\bm{W}) \bm{y},
\]
which is globally maximized at $\bm{x} = \frac{\bm{\beta}}{2}$ and $\bm{y} = \frac{1}{2} (\bm{I}-\bm{W})^{-1} \bm{\gamma}$.
Hence, the monopolist's optimum is given by $\bm{A}^\ddagger = \bm{A}^\dagger$ and $\bm{q}^\ddagger = \frac{\bm{q}^\dagger}{2} = {\bf b}(1,\frac{\bm{\gamma}}{2})$.

Next, from \eqref{eq_profit_network}, each firm's best-response conditions are given by
\[
\delta_i \bm{a}_i = \bm{\beta} - \sum_{j \neq i} q_j \bm{a}_j, \qquad
q_i = \frac{\alpha \delta_i + \gamma_i + \sum_{j \neq i} w_{ij} q_j}{2(1+\alpha)}, \qquad \text{where} \quad \delta_i = \left\|\bm{\beta} - \textstyle \sum_{j \neq i} q_j \bm{a}_j \right\|_2.
\]
The first condition implies that $(\delta_i - q_i) \bm{a}_i = (\delta_j - q_j) \bm{a}_j$ holds for all $i \neq j$ in any equilibrium, whence $\delta_i = q_i$ for all $i$ is necessary for an equilibrium exhibiting product differentiation.
Consequently, the equilibrium output profile in such an equilibrium is characterized by the linear system
\[
\bm{q} = \frac{\alpha \bm{q} + \bm{\gamma} + \bm{W} \bm{q}}{2(1+\alpha)}
\quad \iff \quad
\qty(\bm{I} - \frac{\bm{W}}{2+\alpha}) \bm{q} = \frac{\bm{\gamma}}{2+\alpha},
\]
from which we obtain $\bm{q}^{\rm d} = {\bf b}(\frac{1}{2+\alpha}, \frac{\bm{\gamma}}{2+\alpha})$.
Note that since $\bm{A}^{\rm d}\bm{q}^{\rm d}$ follows from the best-response conditions for $\bm{a}_i$, this equilibrium exists if and only if $r({\bf b}(\frac{1}{2+\alpha}, \frac{\bm{\gamma}}{2+\alpha})) \le 1 \le R({\bf b}(\frac{1}{2+\alpha}, \frac{\bm{\gamma}}{2+\alpha}))$.

Lastly, since $\rho(\bm{W}) < 1$ by assumption, the standard Neumann series expansion for matrices implies that
\[
\bm{q}^\ddagger - \bm{q}^{\rm d}
= \sum_{t=0}^\infty \bm{W}^t \cdot \frac{\bm{\gamma}}{2} - \sum_{t=0}^\infty \qty(\frac{\bm{W}}{2+\alpha})^t \cdot \frac{\bm{\gamma}}{2+\alpha} 
= \qty(\frac{1}{2} - \frac{1}{2+\alpha}) \bm{\gamma} + \sum_{t=1}^{\infty}
\underbrace{\qty(\frac{1}{2} - \qty(\frac{1}{2+\alpha})^{t+1})}_{\ge \; 0} \cdot \; \bm{W}^t \bm{\gamma}.
\]
The first term is strictly positive.
Moreover, if all entries of $\bm{W}$ are nonnegative, then $\bm{W}^t \bm{\gamma}$ is a nonnegative vector for any $t \ge 1$, which implies $\bm{q}^\ddagger \gg \bm{q}^{\rm d}$.
\hfill {\it Q.E.D.}


\subsection*{Proof of Lemma~\ref{lem_br_owner}}

The proof follows analogously to that of Lemma~\ref{lem_br}, with the same arguments applied to each firm's modified objective function \eqref{eq_profit_owner}.
\hfill {\it Q.E.D.}


\subsection*{Proof of Corollary~\ref{cor_owner_sp}}

Consider any ownership structure $\bm{K}$ and equilibrium $(\bm{A}^*, \bm{q}^*)$.
By Lemma~\ref{lem_br_owner}, the best-response condition for $q_i^*$ implies that
$\alpha \delta_i^* = 2(1+\alpha)q_i^* - \gamma_i$,
where $\delta_i^* = \|\bm{\beta} - \sum_{j \neq i} (1+\kappa_{ij})q_j^* \bm{a}_j^*\| \ge 0$.
Since $\kappa_{ii}=1$ by normalization, substituting this expression into the best-response condition for $\bm{a}_i^*$ in Lemma~\ref{lem_br_owner} yields
\begin{equation} \label{equiv_owner}
(2q_i^* - \gamma_i^*) \bm{a}_i^* = \alpha \qty(\bm{\beta} - \sum_{j\in[n]}(1+\kappa_{ij})q_j^* \bm{a}_j^*).
\end{equation}

Suppose not, $(\bm{A}^*, \bm{q}^*)$ attains the first-best, i.e., $\bm{A}^*\bm{q}^* = \bm{\beta}$ and $\bm{q}^* = \bm{\gamma}$.
Substituting these into \eqref{equiv_owner}, the equation reduces to
\[
\gamma_i \bm{a}_i^* + \alpha \sum_{j\in[n]} \kappa_{ij} \gamma_j \bm{a}_j^* = 0.
\]
Multiplying this equation by $(\gamma_i\bm{a}_i^*)^\top$ from the left and summing across all $i$, we obtain
\begin{equation} \label{eq_owner_sp_cont}
\|\bm{\gamma}\|_2^2 + \alpha \sum_{i,j\in[n]} \kappa_{ij} \bigl(\gamma_i\bm{a}_i^* \bigr)^\top \bigl(\gamma_j \bm{a}_j^*\bigr) = 0.
\end{equation}
Denote by $\bm{A} \bullet \bm{B} = \sum_{i,j \in [n]} a_{ij} b_{ij}$ the Frobenius inner product between two matrices $\bm{A} = [a_{ij}]_{n \times n}$ and $\bm{B}=[b_{ij}]_{n \times n}$ of the same size.
Clearly, $\bm{A} \bullet \bm{B} = \frac{\bm{A}+\bm{A}^\top}{2} \bullet \bm{B}$ if $\bm{A}$ is symmetric.
Also, it is a well-known fact that $\bm{A} \bullet \bm{B} \ge 0$ if both $\bm{A}$ and $\bm{B}$ are positive semidefinite.
Therefore, writing $\bm{\Gamma} = \Diag (\gamma_1,\ldots,\gamma_n)$, the second term in \eqref{eq_owner_sp_cont} can be evaluated as
\[
\sum_{i,j\in[n]} \kappa_{ij} \bigl(\gamma_i\bm{a}_i^* \bigr)^\top \bigl(\gamma_j \bm{a}_j^*\bigr)
= \bm{K} \bullet \qty[\qty(\bm{A}\bm{\Gamma})^\top \qty(\bm{A}\bm{\Gamma})]
= \qty[\frac{\bm{K}+\bm{K}^\top}{2}] \bullet \qty[\qty(\bm{A}\bm{\Gamma})^\top \qty(\bm{A}\bm{\Gamma})] \ge 0.
\]
However, since $\|\bm{\gamma}\|_2 > 0$, equation~\eqref{eq_owner_sp_cont} cannot be satisfied, yielding a contradiction.
\hfill {\it Q.E.D.}


\subsection*{Proof of Proposition~\ref{prop_owner}}

Suppose that $\kappa_{ij} = \kappa$ for all $i \neq j$.
Taking the difference between the first-order conditions for $\bm{a}_i$ and $\bm{a}_j$ in Lemma~\ref{lem_br_owner}, we obtain $(\delta_i - (1+\kappa)q_i)\bm{a}_i = (\delta_j - (1+\kappa)q_j)\bm{a}_j$ for all $i \neq j$ in any equilibrium, whence $\delta_i = (1+\kappa)q_i$ for all $i$ is necessary for an equilibrium exhibiting product differentiation.
In this equilibrium, we have $(1+\kappa) \sum_{i\in[n]} q_i^{\rm d} \bm{a}_i^{\rm d} = \bm{\beta}$, or equivalently, $\bm{A}^{\rm d} \bm{q}^{\rm d} = \frac{\bm{\beta}}{1+\kappa}$.
Moreover, substituting $\delta_i = (1+\kappa)q_i$ into the best-response condition for $q_i$, the equilibrium output profile is determined as $\bm{q}^{\rm d} = \frac{\bm{\gamma}}{2+\alpha(1-\kappa)}$.
By Lemma~\ref{lem_donut}, this equilibrium exists if and only if $r(\bm{q}^{\rm d}) \le \|\frac{\bm{\beta}}{1+\kappa}\|_2 \le R(\bm{q}^{\rm d})$, which is equivalently expressed as condition~\eqref{cond_owner} in Proposition~\ref{prop_owner}.
Lastly, by taking $c=\frac{1}{1+\kappa}$ and $d=\frac{1}{2+\alpha(1-\kappa)}$, Lemma~\ref{lem_cosine} implies that $\bar{s}_{\bm{\gamma}}(\bm{A}^{\rm d})$ is calculated as in the proposition.
\hfill {\it Q.E.D.}


\subsection*{Proof of Corollary~\ref{cor_owner_welfare}}

Let $(\bm{A}^{\rm d}, \bm{q}^{\rm d})$ be any equilibrium characterized in Proposition~\ref{prop_owner}.
Evaluating \eqref{welfare_pf} at the resulting allocation $(\bm{x}^{\rm d},\bm{y}^{\rm d})=(\frac{\bm{\beta}}{1+\kappa}, \frac{\bm{\gamma}}{2+\alpha(1-\kappa)})$, we obatin
\[
\Omega(\bm{A}^{\rm d}, \bm{q}^{\rm d}) = \frac{\alpha(1+2\kappa)}{2(1+\kappa)^2} + \frac{\|\bm{\gamma}\|_2^2 (3+2\alpha(1-\kappa))}{2(2+\alpha(1-\kappa))^2}.
\]
Taking the derivative of this expression with respect to $\kappa$, tedious yet straightforward algebra yields
\[
\frac{\6\Omega(\bm{A}^{\rm d}, \bm{q}^{\rm d})}{\6 \kappa} = \alpha \cdot \qty[\frac{1+\alpha(1-\kappa)}{(2+\alpha(1-\kappa))^3} \cdot \|\bm{\gamma}\|_2^2 - \frac{\kappa}{(1+\kappa)^3}].
\]
Hence, we have $\frac{\6\Omega(\bm{A}^{\rm d}, \bm{q}^{\rm d})}{\6 \kappa} > 0$ if and only if
\[
\|\bm{\gamma}\|_2^2 > \frac{\kappa}{(1+\kappa)^3} \cdot \frac{(2+\alpha(1-\kappa))^3}{1+\alpha(1-\kappa)} = \underbrace{\qty(\frac{\kappa(2+\alpha(1-\kappa))}{(1+\kappa)(1+\alpha(1-\kappa))})}_{=\; f(\alpha,\kappa)^2} \; \cdot \; \qty( \frac{2+\alpha(1-\kappa)}{1+\kappa})^2,
\]
which is equivalent to condition~\eqref{cond_owner_welfare} in Corollary~\ref{cor_owner_welfare}.
It is clear that $f(\alpha, \kappa) \ge 0$ for any $\alpha > 0$ and $\kappa \in [0,1]$.
Moreover, since
\[
(1+\kappa)(1+\alpha(1-\kappa)) - \kappa(2+\alpha(1-\kappa)) = (1+\alpha)(1-\kappa) \ge 0,
\]
we have $f(\alpha, \kappa) \le 1$ for any $\alpha > 0$ and $\kappa \in [0,1]$.
\hfill {\it Q.E.D.}


\section{Welfare Analysis Without Characteristics Design}
\label{app_q}

This appendix is devoted to a welfare analysis in the baseline model of \cite{pellegrino2025}, where the characteristics profile $\bm{A}$ is exogenously fixed.
In this setting, as noted in Online Appendix~C of \cite{pellegrino2025}, the presence of idiosyncratic characteristics is redundant, since $\bm{y}$ can be subsumed into $\bm{x}$ by adding $n$ extra rows to the matrix $\bm{A}$.
Given this observation, social welfare can be written as a function of the output profile $\bm{q}$ as
\[
\Omega(\bm{q}) = \bm{\psi}^\top \bm{q} - \frac{1}{2} \bm{q}^\top \bm{\Sigma} \bm{q},
\]
where $\bm{\psi}$ and $\bm{\Sigma}$ correspond to $\bm{b}_A - \bm{c} = \alpha\bm{A}^\top\bm{\beta} + \bm{\gamma}$ and $\bm{\Sigma}_{\bm{A}} = \alpha \bm{A}^\top \bm{A} + \bm{I}$, respectively, using the notation of Section~\ref{sec_model}.
Throughout this appendix, we take $(\bm{\psi}, \bm{\Sigma})$ as primitives.
We assume that $\bm{\psi} \gg \bm{0}$ and $\bm{\Sigma}$ is positive definite.
In addition, assume that the diagonal entries of $\bm{\Sigma}$ are equal to some common constant $\bar{\lambda} \coloneqq \sigma_{11} = \cdots = \sigma_{nn} > 1$, since it corresponds to $1+\alpha$ in the main text.
Note that $\bar{\lambda}$ represents the average of the $n$ eigenvalues of $\bm{\Sigma}$, since $\tr(\bm{\Sigma}) = n\bar{\lambda}$.

For any output profile $\bm{q} \ge \bm{0}$, firm~$i$'s profit is given by
\[
\Pi_i(\bm{q}) = \qty(\gamma_i - \sum_{j=1}^n \sigma_{ij} q_j) q_i .
\]
The Cournot game in which firms simultaneously choose output levels is a special case of canonical linear-quadratic games \citep{bcz2006}, and its unique equilibrium can be derived by a routine calculation.
Specifically, the social planner's optimal profile $\bm{q}^\dagger$, which maximizes $\Omega$, the monopolist's optimal profile $\bm{q}^\ddagger$, which maximizes $\sum_{i=1}^n \Pi_i$, and the equilibrium profile $\bm{q}^*$ are given, respectively, by
\begin{equation} \label{output_app_q}
\bm{q}^\dagger = \bm{\Sigma}^{-1}\bm{\psi}, \qquad
\bm{q}^\ddagger = \frac{1}{2}\bm{\Sigma}^{-1}\bm{\psi}, \qquad
\bm{q}^* = \qty(\bm{\Sigma}+\bar{\lambda}\bm{I})^{-1}\bm{\psi}.
\end{equation}
Moreover, evaluating social welfare at each of these profiles yields
\begin{equation} \label{welfare_app_q}
\Omega(\bm{q}^\dagger) = \bm{\psi}^\top S^\dagger \bm{\psi}, \qquad
\Omega(\bm{q}^\ddagger) = \bm{\psi}^\top S^\ddagger \bm{\psi}, \qquad
\Omega(\bm{q}^*) = \bm{\psi}^\top S^* \bm{\psi}, \qquad
\end{equation}
where
\[
\bm{S}^\dagger = \frac{1}{2}\bm{\Sigma}^{-1}, \qquad
\bm{S}^\ddagger = \frac{3}{8}\bm{\Sigma}^{-1}, \qquad
\bm{S}^* = \qty(\bm{\Sigma}+\bar{\lambda}\bm{I})^{-1}
- \frac{1}{2}\qty(\bm{\Sigma}+\bar{\lambda}\bm{I})^{-1} \bm{\Sigma} \qty(\bm{\Sigma}+\bar{\lambda}\bm{I})^{-1}.
\]
It then follows that $\frac{\Omega(\bm{q}^\ddagger)}{\Omega(\bm{q}^\dagger)} = \frac{3}{4}$, meaning that the monopolist's profile always attains $75\%$ of the social optimum.
Also, $\frac{\Omega(\bm{q}^*)}{\Omega(\bm{q}^\dagger)} < 1$ holds by the definition of $\bm{q}^\dagger$.

To uncover the ranking between $\Omega(\bm{q}^\ddagger)$ and $\Omega(\bm{q}^*)$, we study the spectral properties of the relevant matrices.\footnote{This exercise is closely related to the spectral approach to welfare analysis in network games developed by \cite{bko2015} and surveyed in \cite{golub2025}, but differs in its objective. These papers evaluate the aggregate payoffs of players, which in the present setting corresponds to the monopolist's profit $\sum_{i=1}^n \Pi_i$.
By contrast, our analysis takes social welfare as the objective and uses the monopolist outcome as the benchmark.}
Allowing for multiplicity, let $\lambda_1,\ldots,\lambda_n$ denote the $n$ eigenvalues of $\bm{\Sigma}$, and let $\bm{u}_1,\ldots,\bm{u}_n$ denote the corresponding orthonormal eigenvectors.
We order these eigenvalues in descending order, and let $k$ be the last index such that the corresponding eigenvalue exceeds the average $\bar{\lambda}$, i.e.,
\[
\lambda_1 \ge \lambda_2 \ge \cdots \ge \lambda_k > \bar{\lambda} \ge \lambda_{k+1} \ge \cdots \ge \lambda_n > 0.
\]
Since both $\bm{S}^\ddagger$ and $\bm{S}^*$ are rational functions of $\bm{\Sigma}$, they share the same set of orthonormal eigenvectors $\{\bm{u}_i\}_{i=1}^n$, with the associated $i$-th eigenvalues given by $\frac{3}{8\lambda_i}$ and $\frac{1}{\lambda_i+\bar{\lambda}} - \frac{\lambda_i}{2(\lambda_i+\bar{\lambda})^2}$, respectively.
In particular, this implies that these matrices are simultaneously diagonalized as
\[
\bm{S}^\ddagger = \sum_{i=1}^n \frac{3}{8\lambda_i} \bm{u}_i \bm{u}_i^\top, \qquad 
\bm{S}^* = \sum_{i=1}^n \frac{\lambda_i + 2\bar{\lambda}}{2(\lambda_i+\bar{\lambda})^2} \bm{u}_i \bm{u}_i^\top.
\]
Substituting these expressions into \eqref{welfare_app_q}, it follows that
\begin{align}
&\Omega(\bm{q}^*) - \Omega(\bm{q}^\ddagger)
= \bm{\psi}^\top \qty(\bm{S}^* - \bm{S}^\ddagger) \bm{\psi} \nonumber \\
&\qquad= \sum_{i=1}^n \qty(\frac{\lambda_i + 2\bar{\lambda}}{2(\lambda_i+\bar{\lambda})^2} - \frac{3}{8\lambda_i}) \cdot \qty(\bm{\psi}^\top \bm{u}_i)^2
= \frac{1}{8}\sum_{i=1}^n \underbrace{\qty(\frac{(\lambda_i + 3\bar{\lambda})(\lambda_i - \bar{\lambda})}{\lambda_i(\lambda_i+\bar{\lambda})^2})}_{\eqqcolon \; w_i} \; \cdot \; \qty(\bm{\psi}^\top \bm{u}_i)^2. \label{welfare_ranking_app}
\end{align}
By the definition of $k$, we have $w_i \ge 0$ for all $i \le k$, while $w_i < 0$ for all $i > k$.
This observation leads to the following result.

\begin{theorem} \label{thm_app}
We have $\Omega(\bm{q}^*) > \Omega(\bm{q}^\ddagger)$ if and only if
\begin{equation} \label{cond_app}
\sum_{i=1}^k \qty|w_i| \cdot \qty(\bm{\psi}^\top \bm{u}_i)^2 > \sum_{i=k+1}^n \qty|w_i| \cdot \qty(\bm{\psi}^\top \bm{u}_i)^2.
\end{equation}
In particular, if $\bm{\psi} \in \Span \{\bm{u}_1,\ldots,\bm{u}_k \}$, then $\Omega(\bm{q}^*) > \Omega(\bm{q}^\ddagger)$.
\end{theorem}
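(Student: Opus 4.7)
The plan is to build directly on the spectral identity \eqref{welfare_ranking_app} that already appears just above the theorem, since that identity does most of the work. The only remaining tasks are to read off the iff condition from the sign pattern of the weights $w_i$ and then to handle the ``in particular'' claim carefully.

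First, I would recall from the preamble that $\bm{S}^\ddagger$ and $\bm{S}^*$ are rational functions of $\bm{\Sigma}$ and hence share its orthonormal eigenbasis $\{\bm{u}_i\}_{i=1}^n$. Substituting their spectral decompositions into $\bm{\psi}^\top(\bm{S}^* - \bm{S}^\ddagger)\bm{\psi}$ and collecting terms gives exactly \eqref{welfare_ranking_app}, namely
\[
\Omega(\bm{q}^*) - \Omega(\bm{q}^\ddagger) = \frac{1}{8}\sum_{i=1}^n w_i \cdot (\bm{\psi}^\top \bm{u}_i)^2, \qquad w_i = \frac{(\lambda_i+3\bar{\lambda})(\lambda_i-\bar{\lambda})}{\lambda_i(\lambda_i+\bar{\lambda})^2}.
\]
Since $\lambda_i,\bar{\lambda}>0$, the sign of $w_i$ is controlled by $(\lambda_i - \bar{\lambda})$: it is positive for $i\le k$ and negative for $i>k$. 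Splitting the sum accordingly and writing $|w_i|$ gives
\[
\Omega(\bm{q}^*) - \Omega(\bm{q}^\ddagger) = \frac{1}{8}\qty(\sum_{i=1}^{k} |w_i|\,(\bm{\psi}^\top \bm{u}_i)^2 - \sum_{i=k+1}^{n} |w_i|\,(\bm{\psi}^\top \bm{u}_i)^2),
\]
from which the iff statement \eqref{cond_app} is immediate.

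For the ``in particular'' claim, suppose $\bm{\psi}\in\Span\{\bm{u}_1,\ldots,\bm{u}_k\}$. Orthonormality of the eigenbasis implies $\bm{\psi}^\top \bm{u}_i = 0$ for every $i>k$, so the second sum vanishes. It remains to argue that the first sum is strictly positive. Because $\bm{\psi}\gg\bm{0}$, we have $\bm{\psi}\ne \bm{0}$, which forces $k\ge 1$ (otherwise the containing span is trivial). Then $\lambda_1\ge\cdots\ge\lambda_k>\bar{\lambda}$ gives $w_i>0$ for all $i\le k$. Moreover, since $\bm{\psi} = \sum_{i=1}^k (\bm{\psi}^\top \bm{u}_i)\bm{u}_i$ is nonzero, at least one coefficient $(\bm{\psi}^\top \bm{u}_i)^2$ with $i\le k$ is strictly positive, yielding $\Omega(\bm{q}^*)-\Omega(\bm{q}^\ddagger)>0$.

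I do not anticipate any real obstacle: the algebraic identity \eqref{welfare_ranking_app} is given, and the remaining argument is bookkeeping of signs plus a dimension-count remark to exclude the degenerate case $k=0$. The only point that deserves explicit care, and which I would not want to gloss over, is verifying that $\bm{\psi}\gg\bm{0}$ combined with the spanning hypothesis rules out the vacuous situation $k=0$, so that strict inequality genuinely holds.
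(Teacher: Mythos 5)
Your proposal is correct and follows essentially the same route as the paper: the paper's proof is precisely the derivation of the spectral identity \eqref{welfare_ranking_app} in the text preceding the theorem, followed by reading off the sign pattern of the $w_i$, which is exactly what you do. Your explicit check that $\bm{\psi}\gg\bm{0}$ rules out the degenerate case $k=0$ (and that $w_i>0$ strictly for $i\le k$ since $\lambda_i>\bar{\lambda}$) is a careful touch the paper leaves implicit, but it does not change the argument.
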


This theorem characterizes the welfare ranking between oligopoly and monopoly through the key condition~\eqref{cond_app}.
Roughly speaking, this condition is satisfied when the vector $\bm{\psi}$ is well aligned with the \emph{major components} of $\bm{\Sigma}$, i.e., those eigenvectors $\bm{u}_1,\ldots,\bm{u}_k$, whose associated eigenvalues exceed the average $\bar{\lambda}$.

The condition can be interpreted through the principal-component decomposition of $\bm{\Sigma}$ and the projection of $\bm{\psi}$ onto the subspace spanned by the major components.
This interpretation parallels the argument used in \cite{galeottietal2020} to explain the structure of optimal targeting interventions in network games.
Specifically, since $\{\bm{u}_i\}_{i=1}^n$ forms an orthonormal basis of $\R^n$, we can decompose $\bm{\psi}$ as
\[
\bm{\psi} = \sum_{i=1}^n \qty(\bm{\psi}^\top \bm{u}_i) \cdot \bm{u}_i,
\]
that is, as a linear combination of the eigenvectors of $\bm{\Sigma}$, where the coefficient on each $\bm{u}_i$ is given by the inner product $\bm{\psi}^\top \bm{u}_i$.
This inner product is proportional to the cosine similarity, thereby capturing the extent to which $\bm{\psi}$ is correlated with $\bm{u}_i$.

Since the eigenvectors $\{\bm{u}_i\}_{i=1}^n$ are ordered according to descending eigenvalues, the early components correspond to the principal directions of the linear transformation induced by $\bm{\Sigma}$.
Condition~\eqref{cond_app} then holds---so that oligopoly yields higher welfare than monopoly---when $\bm{\psi}$ is more strongly correlated with the major components of $\bm{\Sigma}$ whose associated eigenvalues exceed the average level $\bar{\lambda}$ than with the minor components whose eigenvalues lie below $\bar{\lambda}$.
Formally, the condition requires that the weighted sum of squared inner products with the former exceeds that with the latter, where the weights are determined by the corresponding eigenvalues.
In particular, when $\bm{\psi}$ lies in the span of the major components $\{\bm{u}_1,\ldots,\bm{u}_k\}$, it has zero correlation with the minor components, and condition~\eqref{cond_app} holds trivially.

As an important special case, the present analysis nests the case of product concentration considered in Theorem~\ref{thm_welfare_C}, where all common characteristics vectors align with $\bm{\beta}$.
The following example formally analyzes this case and thereby establishes the proof of Theorem~\ref{thm_welfare_C}.

\begin{example} \label{example_app}
Recall that $\bm{\Sigma}$ and $\bm{\psi}$ correspond to $\alpha \bm{A}^\top \bm{A} + \bm{I}$ and $\alpha \bm{A}^\top \bm{\beta} + \bm{\gamma}$ in the main text.
Focusing on the case of product concentration, that is, $\bm{A} = \bm{\beta}\1^\top$, these expressions reduce to $\bm{\Sigma} = \alpha \bm{J} + \bm{I}$ and $\bm{\psi} = \alpha \1 + \bm{\gamma}$.
Observe that the eigenvalues of $\bm{\Sigma}$ are given by $\lambda_1 = 1+n\alpha$ and $\lambda_2 = \cdots = \lambda_n = 1$.
It then follows that $\bar{\lambda} = 1+\alpha$ and $k=1$.

Consequently, the weights $\{w_i\}_{i=1}^n$ in Theorem~\ref{thm_app} are given by
\[
w_1 = \frac{\alpha(n-1)(4+(n+3)\alpha)}{(1+n\alpha)(2+(n+1)\alpha)^2}, \qquad
w_2 = \cdots = w_n = - \frac{\alpha(4+3\alpha)}{(2+\alpha)^2}.
\]
Moreover, the eigenvector associated with $\lambda_1$ is the uniform unit vector $\bm{u}_1=\1/\sqrt{n}$.
It therefore follows that
\[
\bm{\psi}^\top \bm{u}_1 = \sqrt{n}\alpha + \frac{R(\bm{\gamma})}{\sqrt{n}}.
\]
In addition, using Parseval's identity $\|\bm{\psi}\|_2^2 = \sum_{i=1}^n (\bm{\psi}^\top \bm{u}_i)^2$, we obtain
\begin{align*}
\sum_{i=2}^n \qty(\bm{\psi}^\top \bm{u}_i)^2
&= \|\bm{\psi}\|_2^2 - \qty(\bm{\psi}^\top \bm{u}_1)^2 \\
&= \qty(n\alpha^2 + 2\alpha R(\bm{\gamma}) + \|\bm{\gamma}\|_2^2)
 - \qty(n\alpha^2 + 2\alpha R(\bm{\gamma}) + \frac{R(\bm{\gamma})^2}{n}) \\
&= \|\bm{\gamma}\|_2^2 - \frac{R(\bm{\gamma})^2}{n}
= n \Var\qty[\bm{\gamma}].
\end{align*}

Putting these expressions into Theorem~\ref{thm_app}, we obtain
\begin{align}
&\Omega(\bm{q}^*) > \Omega(\bm{q}^\ddagger)
\quad \iff \quad
w_1 \cdot \qty(\bm{\psi}^\top \bm{u}_1)^2
> |w_2| \cdot \sum_{i=2}^n \qty(\bm{\psi}^\top \bm{u}_i)^2
\nonumber \\
&\quad \iff \quad
\qty(\frac{\alpha(n-1)(4+(n+3)\alpha)}{(1+n\alpha)(2+(n+1)\alpha)^2})
\cdot \qty(\sqrt{n}\alpha + \frac{R(\bm{\gamma})}{\sqrt{n}})^2
> \qty(\frac{\alpha(4+3\alpha)}{(2+\alpha)^2})
\cdot n \Var\qty[\bm{\gamma}]
\nonumber \\
&\quad \iff \quad
\frac{(n-1)(4+(n+3)\alpha)\qty(\alpha + \frac{R(\bm{\gamma})}{n})^2}
{(1+n\alpha)(2+(n+1)\alpha)^2}
>
\frac{(4+3\alpha)\Var\qty[\bm{\gamma}]}{(2+\alpha)^2}.
\label{app_sym_conc}
\end{align}
Observe that \eqref{app_sym_conc} is satisfied when $\Var\qty[\bm{\gamma}]$ is sufficiently small.

To examine how \eqref{app_sym_conc} depends on $\alpha$ and $n$, note that \eqref{gamma_conc} implies the following bounds on $R(\bm{\gamma})$ and $\Var\qty[\bm{\gamma}]$:
\[
0 \le R(\bm{\gamma}) \le 1, \qquad
0 \le \Var\qty[\bm{\gamma}] \le \frac{n-1}{n^2}.
\]
A sufficient condition for \eqref{app_sym_conc} is therefore
\[
\frac{\alpha^2(n-1)(4+(n+3)\alpha)}{(1+n\alpha)(2+(n+1)\alpha)^2}
>
\frac{4+3\alpha}{(2+\alpha)^2} \cdot \frac{n-1}{n^2},
\]
or equivalently,
\[
f \coloneqq \alpha^2 n^2 (2+\alpha)^2 (4+(n+3)\alpha)
>
(1+n\alpha)(4+3\alpha)(2+(n+1)\alpha)^2
\eqqcolon g.
\]
Observe that the degree of $f$ in $\alpha$ is $5$, which exceeds the degree of $g$ in $\alpha$, equal to $4$, and that the leading coefficients of both polynomials are strictly positive.
It therefore follows that \eqref{app_sym_conc} holds for sufficiently large $\alpha$.
Moreover, both $f$ and $g$ have degree $4$ in $n$, but the leading coefficient of $n$ in $f$ is $\alpha^3(2+\alpha)^2$, which is strictly larger than that in $g$, given by $\alpha^3(4+3\alpha)$.
Hence, \eqref{app_sym_conc} also holds for sufficiently large $n$.
\end{example}

\bibliography{reference}

\end{document}